\newcommand{\etal}{{\em et al.}}
\definecolor{gray}{rgb}{0.5,0.5,0.5}
\newcommand{\ashish}[1]{\iftoggle{COMMENTS}{{\color{blue}[AG: \textsf{#1}]}}{}}
\newcommand{\zhihao}[1]{\iftoggle{COMMENTS}{{\color{brown}[ZZ: \textsf{#1}]}}{}}
\newcommand{\sss}[1]{ {\color{orange} [ Sahasrajit: {#1} ]  } }
\newcommand{\zhihaohalfcomment}[1]{\iftoggle{DEBUG}{{\color{brown}[ZZ: \textsf{#1}]}}{}}
\newcommand{\ak}[1]{\iftoggle{COMMENTS}{{\color{teal}[AK: \textsf{#1}]}}{}}
\newcommand{\vect}{\textbf}
\DeclareMathOperator{\expectation}{\mathbb{E}}
\DeclareMathOperator{\probability}{\mathbb{P}}
\DeclarePairedDelimiter\floor{\lfloor}{\rfloor}
\DeclarePairedDelimiterX\set[1]\lbrace\rbrace{\def\given{\;\delimsize\vert\;}#1}
\DeclareMathOperator*{\argmin}{arg\,min}
\newlist{pfparts}{description}{1}
\setlist[pfparts,1]{%
  font=\normalfont\textsf,
  itemindent=2pt,
  wide,
  itemsep=0pt,topsep=2pt,
  labelsep=0.75ex
}
\def\namedlabel#1#2{\begingroup
    #2%
    \def\@currentlabel{#2}%
    \phantomsection\label{#1}\endgroup
}
\newcommand{\remove}[1]{}
\theoremstyle{plain}
\newtheorem{theorem}{Theorem}[section]
\newtheorem{lemma}[theorem]{Lemma}
\theoremstyle{definition}
\newtheorem{definition}[theorem]{Definition}
\newtheorem*{definition*}{Definition}
\newtheorem{assumption}[theorem]{Assumption}
\theoremstyle{remark}
\newtheorem{example}{Example}
\newtheorem{claim}{Claim}
\title{Differential Privacy with Multiple Selections\thanks{Authors are listed in alphabetical order.}}
\author{
\begin{minipage}[t]{0.3\textwidth}
        \centering
        \textbf{Ashish Goel}\\
        \texttt{Stanford University}\\
        \texttt{ashishg@stanford.edu}
    \end{minipage}\hfill
    \begin{minipage}[t]{0.3\textwidth}
        \centering
        \textbf{Zhihao Jiang}\\
        \texttt{Stanford University}\\
        \texttt{faebdc@stanford.edu}
    \end{minipage}\hfill
    \begin{minipage}[t]{0.3\textwidth}
        \centering
        \textbf{Aleksandra Korolova}\\
        \texttt{Princeton University}\\
        \texttt{korolova@princeton.edu}
    \end{minipage}\\
    \begin{minipage}[t]{0.45\textwidth}
        \centering
        \textbf{Kamesh Munagala}\\
        \texttt{Duke University}\\
        \texttt{kamesh@cs.duke.edu}
    \end{minipage}\hfill
    \begin{minipage}[t]{0.45\textwidth}
        \centering
        \textbf{Sahasrajit Sarmasarkar}\\
        \texttt{Stanford University}\\
        \texttt{sahasras@stanford.edu}
    \end{minipage}
}
\date{}
\begin{document}

\maketitle

\begin{abstract}
We consider the setting where a user with sensitive features wishes to obtain a recommendation from a server in a differentially private fashion. We propose a ``multi-selection'' architecture where the server can send back multiple recommendations and the user chooses one from these that matches best with their private features. When the user feature is one-dimensional -- on an infinite line -- and the accuracy measure is defined w.r.t some increasing function $\mathfrak{h}(.)$ of the distance on the line, we precisely characterize the optimal mechanism that satisfies differential privacy. The specification of the optimal mechanism includes both the distribution of the noise that the user adds to its private value, and the algorithm used by the server to determine the set of results to send back as a response and further show that Laplace is an optimal noise distribution. We further show that this optimal mechanism results in an error that is inversely proportional to the number of results returned when the function $\mathfrak{h}(.)$ is the identity function. 
\end{abstract}

\newtheorem{problem}{Problem}
\newtheorem{procedure}{Procedure}

\newtheorem*{lemma*}{Lemma}
\newtheorem*{theorem*}{Theorem}

\renewcommand{\etal}{\emph{et al.}}

\renewcommand{\Pr}[1]{\probability\left(#1\right)}
\newcommand{\Prxx}[2]{\probability_{#1}\left(#2\right)}
\newcommand{\E}[1]{\expectation\left[#1\right]}
\newcommand{\Ex}[1]{\expectation_{x}\left[#1\right]}
\newcommand{\Exx}[2]{\expectation_{#1}\left[#2\right]}
\newcommand{\Exxlimits}[2]{\mathop{\expectation}\limits_{#1}\left[#2\right]}
\newcommand{\Var}[1]{\mathbf{Var}\left(#1\right)}
\newcommand{\I}[1]{\mathbbm{I}\left\{#1\right\}}
\newcommand{\F}{\mathcal{F}}
\newcommand{\G}{\mathcal{G}}
\renewcommand{\H}{\mathcal{H}}
\newcommand{\Exp}[1]{\exp\left(#1\right)}

\newcommand{\D}{\mathcal{D}}
\newcommand{\C}{\mathcal{H}}
\newcommand{\A}{\mathcal{A}}
\newcommand{\tP}[1]{\tilde{\mathbf{P}}\left(#1\right)}

\newcommand{\ot}{\tilde{O}}

\newcommand{\alert}[1]{\textcolor{red}{#1}}

\newcommand{\gt}{\rightarrow}
\newcommand{\wgt}{\Rightarrow}
\newcommand{\natr}{\mathbb{N}}
\newcommand{\real}{\mathbb{R}}
\newcommand{\ratn}{\mathbb{Q}}
\newcommand{\limt}[3]{\lim\limits_{#1\rightarrow #2}#3}
\newcommand{\limtn}[1]{\lim\limits_{n\rightarrow \infty}#1}

\renewcommand{\P}[1]{\Pname\left(#1\right)}
\newcommand{\Px}[1]{\Pname_{x}\left(#1\right)}
\newcommand{\Pxx}[2]{\Pname_{#1}\left(#2\right)}
\newcommand{\Pname}{\mathbf{P}}

\newcommand{\Pp}[1]{\Ppname\left(#1\right)}
\newcommand{\Ppx}[1]{\Ppname_{x}\left(#1\right)}
\newcommand{\Ppxx}[2]{\tilde{P}_{#1}\left(#2\right)}

\newcommand{\Ppname}{\tilde{\mathbf{P}}}
\newcommand{\Probsname}{\mathcal{P}}
\newcommand{\Probsx}[2]{\Probsname^{(#1)}_{#2}}
\newcommand{\Probsxalt}[2]{\Probsname^{#1}_{#2}}
\newcommand{\Probs}[1]{\Probsx{\epsilon}{#1}}

\newcommand{\Probsloc}[1]{\Probsx{\epsilon-\text{loc}}{#1}}
\newcommand{\Probsgeo}[1]{\Probsx{\epsilon}{#1}}

\newcommand{\Probsgeog}[1]{\Probsxalt{\mathfrak{g}(.)}{#1}}
\newcommand{\Q}[1]{\Qname\left(#1\right)}
\newcommand{\Qx}[1]{\Qname_{x}\left(#1\right)}
\newcommand{\Qxx}[2]{\Qname_{#1}\left(#2\right)}
\newcommand{\Qname}{\mathbf{Q}}
\newcommand{\Qrobs}{\mathcal{Q}}

\newenvironment{hproof}{%
  \renewcommand{\proofname}{Proof Sketch}\proof}{\endproof}

\newcommand\XXX[1]{\begingroup \bfseries\color{red} #1\endgroup}

\newcommand{\R}[1]{\Rname\left(#1\right)}
\newcommand{\Rx}[1]{\Rname_{x}\left(#1\right)}
\newcommand{\Rxx}[2]{\Rname_{#1}\left(#2\right)}
\newcommand{\Rname}{\mathbf{R}}
\newcommand{\Rrobsname}{\mathcal{R}}
\newcommand{\Rrobsx}[1]{\Rrobsname^{(#1)}}
\newcommand{\Rrobs}{\Rrobsx{\epsilon}}

\newcommand{\TRrobsname}{\mathcal{\tilde{R}}}
\newcommand{\TRrobsx}[1]{\TRrobsname^{(#1)}}
\newcommand{\TRrobs}{\TRrobsx{\epsilon}}
\newcommand{\Reta}[1]{\Retaname\left(#1\right)}
\newcommand{\Retax}[1]{\Retaname_{x}\left(#1\right)}
\newcommand{\Retaxx}[2]{\Retaname_{#1}\left(#2\right)}
\newcommand{\Retaname}{\mathbf{R^{(\eta)}}}

\renewcommand{\S}[1]{\Sname\left(#1\right)}
\newcommand{\Sx}[1]{\Sname_{x}\left(#1\right)}
\newcommand{\Sxx}[2]{\Sname_{#1}\left(#2\right)}
\newcommand{\Sname}{\mathbf{S}}
\newcommand{\Srobsname}{\mathcal{S}}
\newcommand{\Srobsx}[1]{\Srobsname^{(#1)}}
\newcommand{\Srobs}{\Srobsx{\epsilon}}

\newcommand{\Lap}{\mathcal{L}_{\epsilon}(0)}
\newcommand{\T}[1]{\Tname\left(#1\right)}
\newcommand{\Tx}[1]{\Tname_{x}\left(#1\right)}
\newcommand{\Txx}[2]{\Tname_{#1}\left(#2\right)}
\newcommand{\Tname}{\mathbf{T}}
\newcommand{\Trobsname}{\mathcal{T}}
\newcommand{\Trobsx}[1]{\Trobsname^{(#1)}}
\newcommand{\Trobs}{\Trobsx{\epsilon}}

\newcommand{\defeq}{:=}
\newcommand{\Dist}{\D}
\newcommand{\Users}{U}
\newcommand{\Signals}{Z}
\newcommand{\Ads}{\Users}
\newcommand{\K}{k}
\newcommand{\Dim}{d}
\newcommand{\Kads}{\real^{\K}}
\newcommand{\rd}{\real^{\Dim}}
\newcommand{\rdk}{\left(\real^{\Dim}\right)^{\K}}
\newcommand{\Psym}{\Probs{sym}}
\newcommand{\Psymx}[1]{\Probsx{#1}{sym}}
\newcommand{\measure}[1]{\mathcal{F}_{#1}}
\newcommand{\ddis}[2]{\mathfrak{D}\left(#1, #2\right)}

\newcommand{\funf}[4]{f(#1,#3,#4)}
\newcommand{\hatfunf}[2]{\hat{f}(#1,#2)}
\newcommand{\funfgeo}[5]{f^{#5}(#3,#4)}
\newcommand{\funfgeoalt}[5]{f^{\text{alt},#5}(#3,#4)}
\newcommand{\hatfunfgeo}[4]{\hat{f}^{#4}(#2,#3)}
\newcommand{\funfloc}[5]{f^{\text{loc},#5}(#1,#3,#4)}
\newcommand{\funforiggeo}[5]{{f}_{\text{unrestricted}}^{#5}(#3,#4)}
\newcommand{\funfnewgeo}[5]{{f}_{\text{restricted}}^{#5}(#3,#4)}
\newcommand{\funforigloc}[5]{{f}_{\text{relaxed}}^{\text{loc},#5}(#1,#3,#4)}
\newcommand{\fung}[2]{g(#1,#2)}
\newcommand{\sig}[2]{\Pxx{#1}{#2}}
\newcommand{\sigq}[2]{\Qxx{#1}{#2}}
\newcommand{\dsig}[1]{P_{#1}}
\newcommand{\dsighat}[1]{\tilde{P}_{#1}}
\newcommand{\dsigq}[1]{Q_{#1}}
\newcommand{\dsigqprime}[1]{Q'_{#1}}
\newcommand{\defdis}[2]{  \left|#1-#2\right|}
\newcommand{\dis}[2]{\left|#1-#2\right|}
\newcommand{\disb}[2]{\left<#1,#2\right>}
\newcommand{\disbb}[2]{\arccos{\left(#1\cdot #2\right)}}
\newcommand{\disg}[2]{\digamma \left(#1,#2\right)}
\newcommand{\diszero}[1]{\left|#1\right|}
\newcommand{\disr}[2]{\left|#1-#2\right|}
\newcommand{\disrzero}[1]{\left|#1\right|}
\newcommand{\zero}{\mathbf{0}}
\newcommand{\zerob}{\mathbf{J}}
\newcommand{\diszerob}[1]{\disb{#1}{\zerob}}
\newcommand{\probm}[1]{\mathbb{D}\left(#1\right)}

\newcommand{\rhovaname}{\rho_{V}}
\newcommand{\rhovbname}{\rho_{VB}}
\newcommand{\rhova}[1]{\rhovaname\left( #1 \right)}
\newcommand{\loRiemannint}[2]{
  \underline{\int_{#1}^{#2}}
}
\newcommand{\rhovb}[1]{\rhovaname\left( #1 \right)}
\newcommand{\Aball}[2]{\mathcal{B}_{#1,#2}}
\newcommand{\Aballsize}[2]{\left| \Aball{#1}{#2} \right|}
\newcommand{\ballsize}[1]{\left| \mathcal{B}_{#1} \right|}

\newcommand{\Axa}{A^{(1)}}
\newcommand{\Sxa}{\Sname^{(1)}}
\newcommand{\Axb}{A^{(2)}}
\newcommand{\Sxb}{\Sname^{(2)}}
\newcommand{\Astar}{A^{*}}

\newcommand{\funextr}[3]{\text{h}_{#1,#2}(#3)}
\newcommand{\lar}{\emph{empty}}
\newcommand{\slar}{\emph{monotone~empty}}
\newcommand{\funmonname}{\text{NMonE}}
\newcommand{\funmon}[3]{\funmonname(#1,#2,#3)} 
\newcommand{\sma}{\emph{full}}
\newcommand{\ssma}{\emph{monotone~full}}
\newcommand{\funmonsmaname}{\text{NMonF}}
\newcommand{\funmonsma}[3]{\funmonsmaname(#1,#2,#3)}
\newcommand{\funnum}[2]{m(#1,#2)}
\newcommand{\sr}[3]{SR_{[#1,#2)}(#3)}
\newcommand{\sra}[3]{SR_{[#1,#2)}(#3)}
\newcommand{\srx}[3]{SR_{[#1,#2)}(#3)}

\newcommand{\sphere}{\mathbb{S}}
\newcommand{\spherex}[1]{\mathbb{S}^{#1}}
\newcommand{\dg}[1]{w\left(#1\right)}

\newcommand{\Disutility}{\text{Dis-util}}

\definecolor{csma}{rgb}{1,1,0.8}
\definecolor{clar}{rgb}{0.8,1,1}
\definecolor{cmsma}{rgb}{1,1,0.5}
\definecolor{cmlar}{rgb}{0.5,1,1}
\definecolor{cano}{rgb}{1,0.8,1}

\newenvironment{proofof}[1]{{\bf Proof of #1:}}{$\qed$\par}
\newenvironment{proofsketch}{{\sc{Proof Outline:}}}{$\qed$\par}


\def\lowint{\mkern3mu\underline{\vphantom{\intop}\mkern7mu}\mkern-10mu\int}
\def\upint{\mathchoice%
    {\mkern13mu\overline{\vphantom{\intop}\mkern7mu}\mkern-20mu}%
    {\mkern7mu\overline{\vphantom{\intop}\mkern7mu}\mkern-14mu}%
    {\mkern7mu\overline{\vphantom{\intop}\mkern7mu}\mkern-14mu}%
    {\mkern7mu\overline{\vphantom{\intop}\mkern7mu}\mkern-14mu}%
  \int}

\section{Introduction}{\label{sec:introduction_setup}}
Consider a user who wants to issue a query to an online server (e.g. to retrieve a search result or an advertisement), but the query itself reveals private information about the use. One commonly studied approach to protect user privacy from the server in this context is for the user to send a perturbed query, satisfying differential privacy under the local trust model~ \cite{bebensee2019local}. However, since the query itself is changed from the original, the server may not be able to return a result that is very accurate for the original query. 
Our key observation is that in many situations such as search or content recommendations, the server is free to return many results, and the user can choose the one that is the most appropriate, without revealing the choice to the server. 
In fact, if the server also returns a model for evaluating the quality of these results for the user, then this choice can be made by a software intermediary such as a client running on the user's device. This software intermediary can also be the one that acts as the user's privacy delegate and is the one ensuring the local differential privacy protection.



We call this, new for the differential privacy (DP) literature system architecture, the ``Multi-Selection'' approach to privacy, and the key question we ask is: {\em What is the precise trade-off that can be achieved between the number of returned results and quality under a fixed privacy goal}? Of course, had the server simply returned all possible results, there would have been no loss in quality since the client could choose the optimal result. However, this approach is infeasible due to computation and communication costs, as well as due to potential server interest in not revealing proprietary information. 
We therefore restrict the server to return $k$ results for small $k$, and study the trade-off between $k$ and the quality when the client sends privacy-preserving  queries. Our algorithmic design space consists of choosing the client's algorithm and the server's algorithm, as well as the space of signals they will be sending.

At a high level, in addition to the novel multi-selection framework for differential privacy, our main contributions are two-fold. First, under natural assumptions on the privacy model and the latent space of results and users, we show a tight trade-off between utility and number of returned results via a natural (yet {\em a priori} non-obvious) algorithm, with the error perceived by a user decreasing as $\Theta(1/k)$ for $k$ results. Secondly, at a technical level, our proof of optimality is via a dual fitting argument and is quite subtle, requiring us to develop a novel duality framework for  linear programs over infinite dimensional function spaces, with constraints on both derivatives and integrals of the variables. This framework may be of independent interest for other applications where such linear programs arise.

\subsection{System Architecture and Definitions}
We denote the set of users by $\mathbb{R}$ and when we refer to a user $u \in \mathbb{R}$, we mean a user with a query value $u \in \mathbb{R}$. 
Let $M$ denote the set of results and $\text{OPT}: \mathbb{R} \rightarrow M$ denotes the function
which maps user queries to optimal results. This function $\text{OPT}$ is available (known) to the server but is unknown to the users.  

\subsubsection{Privacy}
We adopt a well-studied notion of differential privacy under the local trust model~\cite{bebensee2019local}:
\begin{definition}[adapted from \cite{6686179,koufogiannis2015optimality}]
    Let $\epsilon>0$ be a desired level of privacy and let $\mathcal{U}$ be a set of input data and $\mathcal{Y}$ be the set of all possible responses and $\Delta(\mathcal{Y})$ be the set of all probability distributions (over a sufficiently rich $\sigma$-algebra of $\mathcal{Y}$ given by $\sigma(\mathcal{Y})$). A mechanism $Q: \mathcal{U} \rightarrow \Delta(\mathcal{Y})$ is $\epsilon$-differentially private if     for all $S \in \sigma(\mathcal{Y})$ and $u_1,u_2 \in \mathcal{U}$:
    $$ \mathbb{P}(Qu_1 \in S) \leq e^\epsilon \mathbb{P}(Qu_2\in S).$$
\end{definition}

A popular relaxation of differential privacy is geographic differential privacy~\cite{andres2013geo} (GDP), which allows the privacy guarantee to decay with the distance between user values. It reflects the intuition that the user is more interested in protecting the specifics of a medical query they are posing rather than protecting whether they are posing a medical query or an entertainment query, and is thus, appropriate in scenarios such as search.
We restate the formal definition from \cite{koufogiannis2015optimality} and use it in the rest of the work. 
\begin{definition}[adapted from \cite{koufogiannis2015optimality}]{\label{def:geo_DP}}
        Let $\epsilon>0$ be a desired level of privacy and let $\mathcal{U}$ be a set of input data and $\mathcal{Y}$ be the set of all possible responses and $\Delta(\mathcal{Y})$ be the set of all probability distributions (over a sufficiently rich $\sigma$-algebra of $\mathcal{Y}$ given by $\sigma(\mathcal{Y})$). A mechanism $Q: \mathcal{U} \rightarrow \Delta(\mathcal{Y})$ is $\epsilon$-geographic differentially private if for all $S \in \sigma(\mathcal{Y})$ and $u_1,u_2 \in \mathcal{U}$:
    $$ \mathbb{P}(Qu_1 \in S) \leq e^{\epsilon|u_1-u_2|} \mathbb{P}(Qu_2\in S).$$
\end{definition}

\subsubsection{``Multi-Selection" Architecture} \label{sec:simmodel}
Our ``multi-selection" system architecture (shown in \cref{fig:arch1}) relies on the server returning a small set of results in response to the privatized user input, with the on-device software intermediary deciding, unbeknownst to the server, which of these server responses to use.  

\definecolor{usr}{rgb}{0.7,0.85,1.0}
\definecolor{usrl}{rgb}{0.9,0.95,1.0}
\definecolor{usrline}{rgb}{0.8,0.9,1.0}
\definecolor{sev}{rgb}{1.0,0.85,0.7}
\definecolor{sevl}{rgb}{1.0,0.95,0.9}
\definecolor{sevline}{rgb}{1.0,0.9,0.8}

\begin{figure}[htbp]
      \centering
      \tikzset{global scale/.style={
            scale=#1,
            every node/.append style={scale=#1}
          }
        }
        \begin{tikzpicture}

         \node[fill=usrl,draw=usrline,rounded corners, minimum width=10cm, minimum height=3cm] at (2.5,2) {};
         
         \node at (2.5,3.8) {User Side};
        
         \node[fill=usr,draw,rounded corners, minimum width=3cm, minimum height=1.5cm] at (0,2) {\bf User};
         
         \node[fill=usr,draw,rounded corners, minimum width=3cm, minimum height=1.5cm,align=center] at (5,2) {\bf Agent \\ e.g. browser};

         \node[fill=sevl,draw=sevline,rounded corners, minimum width=5cm, minimum height=3cm] at (11.5,2) {};
         
         \node at (11.5,3.8) {Server Side};
         \node[fill=sev,draw,rounded corners, minimum width=4cm, minimum height=1.5cm] at (11.5,2) {\bf Result Computation};

         \draw[->] (1.5,2.25) -- (3.45,2.25);         \draw[<-] (1.5,1.75) -- (3.45,1.75);         \draw[->] (6.5,2.25) -- (9.5,2.25);         \draw[<-] (6.5,1.75) -- (9.5,1.75);

         \node at (2.5,2.5) {$u\in \real$};
         \node at (2.5,1.5) {best in $\textbf{a}$};
         \node at (8,2.5) {$s (\in Z) \sim P_u$};
         \node at (8,1.5) {$\textbf{a} (\in M^k)\sim Q_s$};

        \end{tikzpicture}
      \caption{Overall architecture for multi-selection.}    \label{fig:arch1}
  \end{figure}


The space of mechanisms we consider in this new architecture consists of a triplet $(Z,\Pname, \Qname)$:
%
%
\begin{enumerate}
    \item A set of signals $Z$ that can be sent by users.
    \item  The actions of users, $\Pname$,  which involves a user sampling a signal from a distribution over signals. We use $\dsig{u}$ for $u\in \mathbb{R}$ to denote the distribution of the signals sent by user $u$ which is supported on $Z$. The set of actions over all users is given by $\Pname = \{\dsig{u}\}_{u \in \real}$.
    \item The distribution over actions of the server, $\Qname$. When the server receives a signal $s\in \Signals$, it responds with  $\dsigq{s}$, which characterizes the distribution of the $k$ results that the server sends (it is supported in $M^k$). We denote the set of all such server actions by $\Qname = \{\dsigq{s}\}_{s \in Z}$. \footnote{We treat this distribution to supported on $U^k$ instead of $k$-sized subset of $U$ for ease of mathematical typesetting. } 
\end{enumerate}

Our central question is to find the triplet over $(Z,\Pname,\Qname)$ that satisfies $\epsilon$-geographic differential privacy constraints on $\Pname$ while ensuring the best-possible utility or the smallest-possible disutility.



\subsubsection{The disutility model: Measuring the cost of privacy}

We now define the disutility of a user $u \in \mathbb{R}$ from a result $m \in M$. 
One natural definition would be to look at the difference between (or the ratio) of the cost of the optimum result for $u$ and the cost of the result $m$ returned by a privacy-preserving algorithm. However, we are looking for a general framework, and do not want to presume that this cost measure is known to the algorithm designer, or indeed, that it even exists. Hence, we will instead define the disutility of $u$ as the amount by which $u$ would have to be perturbed for the returned result $m$ to become optimum; this only requires a distance measure in the space in which $u$ resides, which is needed for the definition of the privacy guarantees anyway. For additional generality, we will also allow the disutility to be any increasing function of this perturbation, as defined below.

\begin{definition}{\label{def:disutility_user_result}}
The dis-utility of a user $u\in \mathbb{R}$ from a result $m\in M$ w.r.t some continuously increasing function $\mathfrak{h}(.)$ is given by \footnote{If no such $u'$ exists then the dis-utility is $\infty$ as infimum of a null set is $\infty$.}
    \begin{equation}{\label{eqn:disutil_user_result}}
        \Disutility^{\mathfrak{h}(.)}(u,m) := \inf\limits_{u'\in \mathbb{R}: OPT(u')=m} \mathfrak{h}(|u-u'|)
    \end{equation}
\end{definition}


We allow any function $\mathfrak{h}(.)$ that satisfies the following conditions:
\begin{align}
     \text{ $\mathfrak{h}(.)$ is a continuously increasing function satisfying $\mathfrak{h}(0)=0$. }\label{item:first-propertyh}\\
    \text{ There exists $\mathcal{B} \in \real^{+}$ s.t. $\log \mathfrak{h}(.)$ is Lipschitz continuous in $[\mathcal{B},\infty)$} \label{item:second-property-h} 
\end{align}
The first condition \eqref{item:first-propertyh} captures the intuition that dis-utility for the optimal result is zero. The second condition \eqref{item:second-property-h} which bounds the growth of $\mathfrak{h}(.)$ by an exponential function is (a not very restrictive condition) required for our mathematical analysis.

Quite surprisingly, to show that our multi-selection framework provides a good trade-off in the above model for every $\mathfrak{h}$ as defined above, we only need to consider the case where the $\mathfrak{h}$ is the identity function. The following example further motivates our choice of the disutility measure:

\begin{example}{\label{example:definition_first}}
Suppose, one assumes that the result set $M$ and the user set $\real$ are embedded in the same metric space $(d,M \cup \real)$. This setup is similar to the framework studied in the examination of metric distortion of ordinal rankings in social choice \cite{anshelevich2016randomized, CARAGIANNIS2022103802, gkatzelis2020resolving, kizilkaya2022plurality}. Using triangle inequality, one may argue that $d(u,m') -d(u,m) \leq 2d(u,u')$ where $m$ is the optimal result for user $u$ (i.e. $m = \argmin\limits_{m \in M} d(u,m)$) and $m'$ is the optimal result for user $u'$. \footnote{This follows since $d(u,m')-d(u,u') \leq d(u',m') \leq d(u',m) \leq d(u,u') + d(u,m)$} Thus, $2d(u,u')$ gives an upper bound on the disutility of user $u$ from result $\text{OPT}(u')$.
\end{example}

\subsubsection{Restricting users and results to the same set} {\label{subsec:user_results_restricted}} 
For ease of exposition, we study a simplified setup restricting the users and results to the same set $\real$. 
%
%
Specifically, since $\Disutility^{\mathfrak{h}(.)}(u,\text{OPT}(u')) \leq \mathfrak{h}(|u-u'|)$, our simplified setup restricts the users and results to the same set $\real$ where the dis-utility of user $u \in \real$ from a result $a \in \real$ is given by $\mathfrak{h}(|u-a|)$. Our results will extend to the model where the users and results lie in different sets, and we refer the reader to Appendix \ref{appendix-sec:restricted_unrestricted_setup} for the treatment. We note that even in the simplified setup, while we use $a\in \real$ to denote the result, what we mean is that the server sends back  $\text{OPT}(a) \in M$.



\subsubsection{Definition of the cost function in the simplified setup}{\label{sec:def_cost_function}}


We use $\texttt{Set}(\vect{a})$ to convert a vector $\vect{a} = (a_1,a_2,\ldots,a_k)^T\in \real^k$ to a set of at most $k$ elements, formally $\texttt{Set}(\vect{a}) = \{a_i: i \in [k]\}$. 


Recall from Section \ref{subsec:user_results_restricted}, the dis-utility of user $u\in \real$ from a result $a \in \real$ in the simplified setup may be written as
\begin{equation}{\label{eqn:disutil_user_result_restricted}}
    \Disutility^{\mathfrak{h}(.)}_{sim}(u,a) = \mathfrak{h}(|u-a|)
\end{equation}
Since we restrict the users and results to the same set, $Q_s$ is supported on $\real^k$ for every $s \in Z$.
%
Thus, the cost for a user $u$ from the mechanism $(Z,\Pname,\Qname)$ is given by 
\begin{align}{\label{eqn:cost_user_tuple_mechanism}}
    \text{cost}^{\mathfrak{h}(.)} (u, (Z,\Pname,\Qname)) & = \Exxlimits{s\sim \dsig{u}}{\Exxlimits{\vect{a}\sim \dsigq{s}}{ \min_{a \in \texttt{Set}(\vect{a})} \Disutility^{\mathfrak{h}(.)}_{sim}(u,a) } }\\ &= \Exxlimits{s\sim \dsig{u}}{\Exxlimits{\vect{a}\sim \dsigq{s}}{ \min_{a \in \texttt{Set}(\vect{a})} \mathfrak{h} (|u-a|)}},
\end{align}
where the expectation is taken over the randomness in the action of user and the server.

We now define the cost of a mechanism by supremizing\footnote{We write supremum instead of maximum as the maximum over an infinite set may not always be defined.} over all users $u \in \real$. Since, we refrain from making any distributional assumptions over the users, supremization rather than mean over the users is the logical choice.
%
%
\begin{equation}{\label{eqn:cost_tuple_mechanism}}
    \text{cost}^{\mathfrak{h}(.)}(Z,\Pname,\Qname) := 
    \sup_{u \in \mathbb{R}} \text{cost}^{\mathfrak{h}(.)} (u, (Z,\Pname,\Qname)) 
    = \sup_{u \in \mathbb{R}} \Exxlimits{s\sim \dsig{u}}{\Exxlimits{\vect{a}\sim \dsigq{s}}{ \min_{a \in \texttt{Set}(\vect{a})} \mathfrak{h} (|u-a|)}}
\end{equation}

\noindent We use $\mathds{1}(.)$ to denote the identity function, i.e. $\mathds{1}(x) = x$ for every $x \in \real$ and thus define the cost function when $\mathfrak{h}(.)$ is an identity function as follows: 
\begin{equation}{\label{eqn:identity_cost_tuple_mechanism}}
    \text{cost}^{\mathds{1}(.)}(Z,\Pname,\Qname) := 
    \sup_{u \in \mathbb{R}} \text{cost}^{\mathds{1}(.)} (u, (Z,\Pname,\Qname)) 
    = \sup_{u \in \mathbb{R}} \Exxlimits{s\sim \dsig{u}}{\Exxlimits{\vect{a}\sim \dsigq{s}}{ \min_{a \in \texttt{Set}(\vect{a})} |u-a|}}
\end{equation}




Recall our central question is to find the triplet over $(Z,\Pname,\Qname)$ that ensures the smallest possible disutility / cost while ensuring that  $\Pname$
satisfies $\epsilon$-geographic differential privacy.
We denote the value of this cost by $\funfgeo{(d,U)}{\Ads}{\epsilon}{\K}{\mathfrak{h}(.)}$ and it is formally defined as

\begin{align}{\label{eqn:cost_function_defn}}
    \funfgeo{(d,U)}{\Ads}{\epsilon}{\K}{\mathfrak{h}(.)}
\defeq & \inf_{\Signals}\inf_{\substack{\Pname\in \Probsgeo{\Signals}}} \inf_{\Qname\in \Qrobs_{ \Signals}} \left(\text{cost}^{\mathfrak{h}(.)}(Z,\Pname,\Qname)\right)\nonumber \\
        = & \inf_{\Signals}\inf_{\substack{\Pname\in \Probsgeo{\Signals}}} \inf_{\Qname\in \Qrobs_{ \Signals}} \left(\sup_{u \in \real} \Exxlimits{s\sim \dsig{u}}{\Exxlimits{\vect{a} \sim \dsigq{s}}{ \min_{a \in \texttt{Set}(\vect{a})} \mathfrak{h} (|u-a|)}} \right) \text{where}
\end{align}


$\Probsgeo{\Signals}:=\{ \Pname | \forall u\in \real, \dsig{u}\text{~is a distribution on $\Signals$,} 
    \text{~and $\Pname$ satisfies $\epsilon$-geographic differential privacy.} \}$.

$\Qrobs_{\Signals}:= \{ \Qname | \forall s\in \Signals, \dsigq{s}\text{~is a distribution on $\real^{\K}$} \}$.


\subsection{Our results and key technical contributions}{\label{sec:results_and_constributions}}





For any $\mathfrak{h}(.)$ satisfying equations \eqref{item:first-propertyh} and \eqref{item:second-property-h} when the privacy goal is $\epsilon$-geographic DP
our main results are: 
\begin{itemize}

    \item The optimal action $P_u$ for a user $u$, is to add Laplace noise\footnote{We use $\mathcal{L}_{\epsilon}(u)$ to denote a Laplace distribution of scale $\frac{1}{\epsilon}$ centred at $u$. Formally, a distribution $X \sim \mathcal{L}_{\epsilon}(u)$ has its probability density function given by $f_X(x) = \frac{\epsilon}{2} e^{-\epsilon|x-u|}$.} of scale $\frac{1}{\epsilon}$ to its value $u$ (result stated in Theorem \ref{thm:optimality-laplace-sim} and proof sketch described in Sections \ref{subsec-comb:proof-step1}, \ref{subsec-comb:primal-feasible} and \ref{subsec-comb:dual-soln-construction}). Further, we emphasize that the optimality of adding Laplace noise is far from obvious\footnote{In fact, only a few optimal DP mechanisms are known~\cite{Ghosh2012privacy, gupte2010universally, kairouz2014extremal, ding2017collecting}, and it is known that for certain scenarios, universally optimal mechanisms do not exist~\cite{brenner2014impossibility}.}. For instance, when users and results are located on a ring, Laplace noise is {\em not optimal} (see Appendix \ref{appendix-sec:non-opt-ex} for an analysis when $k=2$).


    \item 
    The optimal server response $\Qname$ could be different based on different $\mathfrak{h}$. We give a recursive construction of  $\Qname$ for a general $\mathfrak{h}$ (Section \ref{subsec-comb:step4}). Furthermore, when $\mathfrak{h}(t)=t$, we give an exact construction of $\Qname$ (sketched in Fig. \ref{fig:intro1} for $k=5$) and show that  $\funfgeo{(\ell_1,\real)}{\Ads}{\epsilon}{\K}{\mathds{1}(.)} = O(\frac{1}{\epsilon k})$ in Section \ref{subsec-comb:step4} and Appendix \ref{subsec-appendix:server_response}. 

\end{itemize}

Formally, our main results can be stated as: 
\begin{restatable}[corresponds to Theorem \ref{thm:optimality-laplace-sim} and Theorem \ref{cor:geofinal}]{theorem}{simrstepe} \label{thm:lap:sim}
For $\epsilon$-geographic differential privacy, adding Laplace noise, that is, user $u$ sends a signal drawn from distribution $\mathcal{L}_{\epsilon}(u)$, 
 is one of the optimal choices of $\Probs{\Signals}$ for users. Further, when $\mathfrak{h}(t)=t$, we have $\funfgeo{(\ell_1,\real)}{\Ads}{\epsilon}{\K} {\mathds{1}(.)} = O(\frac{1}{\epsilon k})$ and the optimal mechanism $(Z,\Pname,\Qname)$ (choice of actions of users and server) itself can be computed in closed form. For a generic $\mathfrak{h}(.)$, the optimal server action $\Qname$ may be computed recursively.
\end{restatable}  


In addition to our overall framework and the tightness of the above theorem, a key contribution of our work is in the techniques developed. At a high level, our proof proceeds via constructing an infinite dimensional linear program to encode the optimal algorithm under differential privacy constraints. We then use dual fitting to show the optimality of Laplace noise. Finally, the optimal set of results being computable by a simple dynamic program given such noise.

The technical hurdles arise because the linear program for encoding the optimal mechanism is over infinite-dimensional function spaces with linear constraints on both derivatives and integrals, since the privacy constraint translates to constraints on the derivative of the density encoding the optimal mechanism, while capturing the density itself requires an integral. We call it Differential Integral Linear Program (DILP); see Section~\ref{subsec-comb:primal-feasible}. However, there is no weak duality theory for such linear programs in infinite dimensional function spaces, such results only existing for linear programs with integral constraints~\cite{anderson1987linear}.

We therefore develop a weak duality theory for DILPs (see Section~\ref{subsec-comb:primal-feasible} with a detailed proof in Appendix~\ref{sec:weak_duality}), which to the best of our knowledge is novel. The proof of this result is quite technical and involves a careful application of Fatou's lemma \cite{rudin1976principles} and the monotone convergence theorem to interchange integrals with limits, and integration by parts to translate the derivative constraints on the primal variables to derivatives constraint on the dual variables. 

We believe our weak duality framework is of independent interest and has broader implications beyond differential privacy; see Appendix~\ref{sec-appendix:weak_duality_application} for two such applications.

\begin{figure}
    \centering

    \includegraphics[width=80mm]{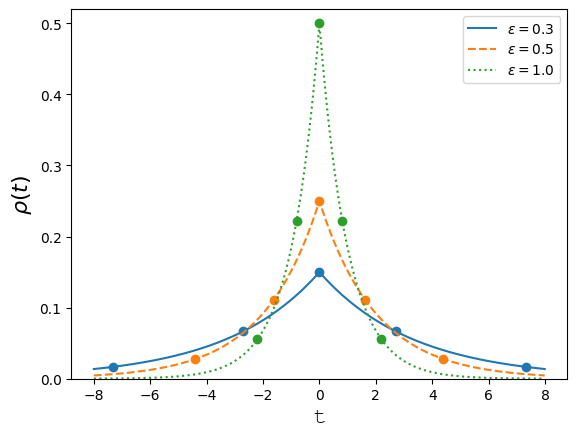}
      \caption{Optimal mechanisms in geographic differential privacy setting when $\K=5$ and $\epsilon\in\{0.3,0.5,1.0\}$. Suppose the user has a private value $u$. Then the user sends a signal $s$ drawn from distribution $\mathcal{L}_{\epsilon}(u)$ to the server, meaning the user sends $s=v+x$ where $x$ is drawn from the density function $\rho(t)$ in this figure. Suppose the server receives $s$. Then the server responds $\{s+a_1,...,s+a_5\}$, where the values of $a_1,a_2,...,a_5$ are the $t$-axis values of dots on the density functions.
      }   \label{fig:intro1}
  \end{figure}

\subsection{Related Work}


\subsubsection{Differential Privacy}
The notion of differential privacy in the trusted curator model is introduced in \cite{dwork2006calibrating}; see \cite{dwork2014algorithmic} for a survey of foundational results in this model. 
%
%
The idea of local differential privacy dates back to \cite{warner1965randomized}, and the algorithms for satisfying it have been studied extensively following the deployment of DP in this model by Google~\cite{RAPPOR} and Apple~\cite{Apple}; see, e.g.~\cite{bun2019heavy, chen2016private, wang2017locally, bassily2017practical} and Bebensee~\cite{bebensee2019local} for a survey.
Geographic differential privacy was introduced by Andr{\'e}s~\etal~\cite{andres2013geo} and has gained widespread adoption for anonymizing location data. 
Our use of geographic DP utilizes the definition of \cite{andres2013geo} with the trust assumptions of the local model, and is thus, only a slight relaxation of the traditional local model of differential privacy. 

\subsubsection{Multi-Selection}
%
An architecture for multi-selection, particularly with the goal of privacy-preserving advertising, was already introduced in \textit{Adnostic} by~\cite{toubiana2010adnostic}. Their proposal was to have a browser extension that would run the targeting and ad selection on the user's behalf, reporting to the server only click information using cryptographic techniques. Similarly, \textit{Privad} by~\cite{guha2011privad} propose to use an anonymizing proxy that operates between the client that sends broad interest categories to the proxy and the advertising broker, that transmits all ads matching the broad categories, with the client making appropriate selections from those ads locally. 
Although both Adnostic and Privad reason about the privacy properties of their proposed systems, unlike our work, neither provides DP guarantees.

Two lines of work in the DP literature can be seen as somewhat related to the multi-selection paradigm -- the exponential mechanism (see e.g.~\cite{mcsherry2007mechanism, blum2013learning, liu2019private}) and amplification by shuffling (see e.g. \cite{erlingsson2019amplification, cheu2019distributed, feldman2022hiding}). The exponential mechanism focuses on high-utility private selection from multiple alternatives and is usually deployed in the TCM model. Amplification by shuffling analyzes the improvement in the DP guarantees that can be achieved if the locally privatized data is shuffled by an entity before being shared with the server. As far as we are aware, neither of the results from these lines of work can be directly applied to our version of multi-selection, although combining them is an interesting avenue for future work.


More broadly, several additional directions within DP research can be viewed as exploring novel system architectures in order to improve privacy-utility trade-offs, e.g., using public data to augment private training~\cite{papernot2018scalable, bassily2020private}, combining data from TCM and LM models~\cite{blender, dubey2018power, beimel2020power}, and others. Our proposed architecture is distinct from all of these. Finally, our work is different from how privacy is applied in federated learning~\cite{geyer2019differentially} -- there, the goal is for a centralized entity to be able to build a machine learning model based on distributed data; whereas our goal is to enable personalized, privacy-preserving retrieval from a global ML model.

\subsubsection{Optimal DP mechanisms}{\label{sec:related_optimal_DP}}
To some extent, much of the work in DP can be viewed as searching for the optimal DP mechanism, i.e. one that would achieve the best possible utility given a fixed desired DP level. 
Only a few optimal mechanisms are known~\cite{Ghosh2012privacy, gupte2010universally, kairouz2014extremal, ding2017collecting}, and it is known that for certain scenarios, universally optimal mechanisms do not exist~\cite{brenner2014impossibility}.

Most closely related to our work is the foundational work of~\cite{Ghosh2012privacy} that derives the optimal mechanism for counting queries via a linear programming formulation; the optimal mechanism turns out to be the discrete version of the Laplace mechanism.
Its continuous version was studied in \cite{fernandes2021laplace}, where the Laplace mechanism was shown to be optimal. 
These works focused on the trusted curator model of differential privacy unlike the local trust model which we study. 

In the local model,  \cite{koufogiannis2015optimality} show Laplace noise to be optimal for $\epsilon$-geographic DP. Their proof relies on formulating an infinite dimensional linear program over noise distributions and looking at its dual. Although their proof technique bears a slight resemblance to ours, our proof is different and intricate since it involves the minimisation over the set of returned results in the cost function. 

A variation of local DP is considered in \cite{geng2015optimal}, in which DP constraints are imposed only when the distance between two users is smaller than a threshold. For that setting, the optimal noise is piece-wise constant, which is a similar outcome to our optimal Laplace noise distribution.
However, our setting of choosing from multiple options makes the problems very different.



\subsubsection{Related work on duality theory for infinite dimensional LPs}
{\label{sec:related_theory_dilp_optimization}}

Strong duality is known to hold for finite dimensional linear programs \cite{weakduality}. However, for infinite dimensional linear programs, strong duality may not always hold (see \cite{anderson1987linear} for a survey). Sufficient conditions for strong duality are presented and discussed in \cite{VINH20161,basu2015strong} for generalized vector spaces. A class of linear programs (called SCLPs) over bounded measurable function spaces have been studied in \cite{doi:10.1137/0331073,bellman2010dynamic} with integral constraints on the functions. However, these works do not consider the case with derivative constraints on the functional variables. In \cite[Equation 7]{koufogiannis2015optimality} a linear program with derivative and integral constraints (DILPs) is formulated to show the optimality of Laplace noise for geographic differential privacy. However, their duality result does not directly generalize to our case since our objective function and constraints are far more complex as it involves minimisation over a set of results.

We thus need to reprove the weak-duality theorem for our DILPs, the proof of which is technical and involves a careful application of integration by parts to translate the derivative constraint on the primal variable to a derivative constraint on the dual variable. Further, we require a careful application of 
Fatou's lemma \cite{rudin1976principles} and monotone convergence theorem to exchange limits and integrals. Our weak duality result generalizes beyond our specific example and is applicable to a broader class of DILPs. Furthermore, we discuss two problems (one from job scheduling \cite{anderson1981new} and one from control theory \cite{evans1983introduction}) in Appendix \ref{sec-appendix:weak_duality_application} which may be formulated as a DILP.

\section{Characterizing the Optimal Mechanism: Proof Sketch of Theorem~\ref{thm:lap:sim}}     \label{sec:simproof}


We now present a sketch of the proof of Theorem~\ref{thm:lap:sim}; the full proof involving the many technical details is presented in the Appendix. We first show that for the sake of analysis, the server can be removed by making the signal set coincide with result sets (Section~\ref{subsec-comb:proof-step1}) assuming that the function $\text{OPT}$  is publicly known both to the user and server.\footnote{One should note that this removal is just for analysis and the server is needed since the $\text{OPT}$ function is unknown to the user.} Then in Section~\ref{subsec-comb:primal-feasible} we construct a primal linear program $\mathcal{O}$ for encoding the optimal mechanism, and show that it falls in a class of infinite dimensional linear programs that we call DILPs, as defined below.

\begin{definition}{\label{def:DILP}}
 Differential-integral linear program (DILP) is a linear program over Riemann integrable function spaces involving constraints on both derivatives and integrals. 
\end{definition}

A simple example is given in Equation \eqref{eq:example_DILP}. Observe that in equation \eqref{eq:example_DILP}, we define $\mathcal{C}_1$ to be a continuously differentiable function.

\begin{equation}{\label{eq:example_DILP}}
        \text{\Large{$\tilde{\mathcal{O}}=$} }
        \left\{
        \begin{aligned}
            \inf_{g(.): \mathcal{C}^1(\mathbb{R} \rightarrow \mathbb{R}^{+})} & \int_{\mathbb{R}} |v| g(v) dv\\
            \textrm{s.t.} \quad & \int_{\mathbb{R}} g(v) dv =1\\
            & -\epsilon g(v) \leq g'(v) \leq \epsilon g(v) \text{ }\forall v \in \mathbb{R}
        \end{aligned}
        \right.
\end{equation}

We next construct a dual DILP formulation $\mathcal{E}$ in Section~\ref{subsec-comb:primal-feasible}, and show that the formulation satisfies weak duality. As mentioned before, this is the most technically intricate result since we need to develop a duality theory for DILPs. We relegate the details of the proof here to the Appendix. 

Next, in Section~\ref{subsec-comb:dual-soln-construction}, we show the optimality of the Laplace noise mechanism via dual-fitting, {\em i.e.}, by constructing a feasible solution to DILP $\mathcal{E}$ with objective identical to that of the Laplace noise mechanism. Finally, in Section~\ref{subsec-comb:step4}, we show how to find the optimal set of $k$ results given Laplace noise. We give a construction for general functions $\mathfrak{h}(.)$ as well as a closed form for the canonical case of $\mathfrak{h}(t)=t$. This establishes the error bound and concludes the proof of Theorem~\ref{thm:lap:sim}.

\subsection{Restricting the signal set $Z$ to $\mathbb{R}^k$} 
\label{subsec-comb:proof-step1}
We first show that it is sufficient to consider a more simplified setup where the user sends a signal set in $\real^k$ and the server sends back the results corresponding to the signal set. Since we assumed users and ads lie in the same set, for the purpose of analysis, this removes the server from the picture. To see this, note that for the setting  discussed in Section \ref{subsec:user_results_restricted}, the optimal result for user $u$ is the result $u$ itself, where when we refer to ``result $u$'', we  refer to the result $\text{OPT}(u) \in M$. 

Thus, this approach is used only for a simplified analysis as the $\text{OPT}$ function is not known to the user and our final mechanism will actually split the computation between the user and the server.


Therefore a user can draw a result set directly from the distribution over the server's action and send the set as the signal. 
The server returns the received signal, hence removing it from the picture. In other words, it is sufficient to consider mechanisms in $\Probs{\Kads}$, which are in the following form (corresponding to Theorem \ref{thm:singpoint:sim}).

\begin{enumerate}
    \item User $v\in \real$ reports $s$ that is drawn from a distribution $\dsig{v}$ over $\real^{\K}$.
    \item The server receives $s$ and returns $s$.
\end{enumerate}


We give an example to illustrate this statement below.


\begin{example}
Fix a user $u$ and let $\Signals=\{s_1,s_2\}$ and $\{\vect{a}^{(1)},\vect{a}^{(2)}\} \subseteq \mathbb{R}^k$ and consider a mechanism $\mathcal{M}_1$ where user $u$ sends $s_1$ and $s_2$ with equal probability. The server returns $\vect{a} \in \real^k$ on receiving signal $s$, with the following probability.
\begin{align*}
    &\Pr{\vect{a}=\vect{a}^{(1)} | s=s_1}=0.2,\quad
    \Pr{\vect{a}=\vect{a}^{(2)} | s=s_1}=0.8, \\
    &\Pr{\vect{a}=\vect{a}^{(1)} | s=s_2}=0.4,\quad
    \Pr{\vect{a}=\vect{a}^{(2)} | s=s_2}=0.6.
\end{align*}
Then the probability that $u$ receives $\vect{a}^{(1)}$ is 0.3 and it receives $\vect{a}^{(2)}$ is 0.7. Now consider another mechanism $\mathcal{M}_2$ with the same cost satisfying differential privacy constraints, where $\Signals=\{\vect{a}^{(1)},\vect{a}^{(2)}\}$, with user $u$ sending signal $\vect{a}^{(1)}$ and $\vect{a}^{(2)}$ with probabilities 0.3 and 0.7.
When the server receives $\vect{a}\in \Kads$, it returns $\vect{a}$.
\end{example}

We show the new mechanism $\mathcal{M}_2$ satisfies differential privacy assuming the original mechanism $\mathcal{M}_1$ satisfies it. As a result, we can assume $\Signals=\Kads$ when finding the optimal mechanism. 


The following theorem states that it is sufficient to study a setup removing the server from the picture and consider mechanisms in set of probability distributions supported on $\real^k$ satisfying $\epsilon$-geographic differential privacy ($\Probs{\real^{\K}}$ as defined in Section \ref{sec:def_cost_function}). 

\begin{restatable}
[detailed proof in Appendix \ref{sec-appendix:simproof}] {theorem}{simrstepa}\label{thm:singpoint:sim}

It is sufficient to remove the server $(\Qname)$ from the cost function $\funfgeo{(d,\Users)}{\Ads}{\epsilon}{\K}{\mathfrak{h}(.)}$ and pretend the user has full-information. Mathematically, it maybe stated as follows.
\begin{equation}{\label{eq:new_expr_f}} 
\funfgeo{(d,\Users)}{\Ads}{\epsilon}{\K}{\mathfrak{h}(.)} =\inf_{\Pname\in \Probs{\real^{\K}}} \sup_{u \in \real} \Exxlimits{\vect{a} \sim \dsig{u}}{ \min_{a\in \texttt{Set} (\vect{a})}\mathfrak{h}(|u-a|) }.
\end{equation}
\end{restatable}


\begin{proof}[Proof Sketch]

Fix $\Signals,\Pname\in \Probs{\Signals},\Qname\in \Qrobs_{\Signals}$. For $u\in \real$ and $S\subseteq \real^{\K}$, let $\Ppxx{u}{S}$ be the probability that the server returns a set in $S$ to user $u$. Then for any $u_1,u_2\in \real, S\subseteq \real^{\K}$, we can show that $\Ppxx{u_1}{S} \leq e^{\epsilon\cdot |u_1-u_2|} \Ppxx{u_2}{S}$ using post-processing theorem, and thus $\Ppname = \{\dsighat{u}\}_{u \in \real} \in \Probs{\real^{\K}}$ because $\dsighat{u}$ is a distribution on $\real^{\K}$ for any $u\in \real$. At the same time,


\begin{align*}
\Exxlimits{s\sim \Pname_u}{\Exxlimits{\vect{a}\sim \Qname_s}{ \min_{a\in \texttt{Set}(\vect{a})}\mathfrak{h}(|u-a|) }}
=\Exxlimits{\vect{a}\sim \dsighat{u}}{ \min_{a\in \texttt{Set}(\vect{a})}\mathfrak{h}(|u-a|)) }, \text{ so we have}
\end{align*}
\begin{align*}
\funfgeo{(d,\Users)}{\Ads}{\epsilon}{\K}{\mathfrak{h}(.)} = \inf_{\Pname\in \Probs{\real^{\K}}} \sup_{u \in \real} \Exxlimits{\vect{a} \sim \dsig{u}}{ \min_{a\in \texttt{Set} (\vect{a})}\mathfrak{h}(|u-a|) },
\end{align*}



\end{proof}


\subsection{Differential integral linear programs to represent $f(\epsilon,k)$ and a weak duality result}{\label{subsec-comb:primal-feasible}}

Recall the definition of DILP from Definition \ref{def:DILP}. In this section, we construct an infimizing DILP $\mathcal{O}$ to represent the constraints and the objective in the cost function $f(\epsilon,k)$.  We then construct a dual  DILP $\mathcal{E}$, and provide some intuition for this formulation. The proof of weak duality is our main technical result, and its proof is defered to the Appendix.

\subsubsection{Construction of DILP $\mathcal{O}$ to represent cost function  $f(\epsilon,k)$}

We now define the cost of a mechanism $\Pname$ which overloads the cost definition in Equation \ref{eqn:cost_tuple_mechanism} 

\begin{definition}{\label{defn:cost}}
    Cost of mechanism $\textbf{P} \in \mathcal{P}^{(\epsilon)}_{\mathbb{R}^k}$: We define the cost of mechanism $\textbf{P}$ as
    \begin{equation}{\label{eq:cost}}
        \textit{cost}(\textbf{P}) := \sup_{u \in \mathbb{R}} \underset{\vect{a} \sim \dsig{u}}{\mathbb{E}} \left[\min_{a \in \texttt{Set}(\vect{a})} \mathfrak{h}(|u-a|) \right]
    \end{equation}
\end{definition}

Observe that in Definition \ref{defn:cost} 
 we just use $\Pname$ instead of the tuple $(Z,\Pname,\Qname)$ as in Equation \eqref{eqn:cost_tuple_mechanism}. Observe that it is sufficient to consider $\Pname$ in the cost since $\Pname$ simulates the entire combined action of the user and the server as shown in Theorem \ref{eq:new_expr_f} in Section \ref{subsec-comb:proof-step1}.  We now define the notion of approximation using cost of mechanism by a sequence of mechanisms which is used in the construction of DILP $\mathcal{O}$.

\begin{definition}{\label{defn:cost-approx}}
    Arbitrary cost approximation: We call mechanisms $\textbf{P}^{(\eta)} \in \mathcal{P}^{(\epsilon)}_{\mathbb{R}^k}$ an arbitrary cost approximation of mechanisms $\textbf{P} \in \mathcal{P}^{(\epsilon)}_{\mathbb{R}^k}$ if $\lim\limits_{\eta \to 0} \textit{cost}(\textbf{P}^{(\eta)}) = \textit{cost}(\textbf{P})$  
\end{definition}

Now we define the DILP $\mathcal{O}$ to characterise ${f}(\epsilon,k)$  in Equation \eqref{eq:new_expr_f}. In this formulation, the variables are $g(.,.) : \mathcal{I}^B (\mathbb{R} \times \mathbb{R}^k \rightarrow \mathbb{R}^{+})$, which we assume are non-negative \textit{Riemann integrable} bounded functions. These variables capture the density function $P_u$. 



\begin{equation}{\label{orig_primal}}
    \text{ \Large{$\mathcal{O} =$}}
\left\{
\begin{aligned}
    \inf_{g(.,.): \mathcal{I}^B (\mathbb{R} \times \mathbb{R}^k \rightarrow \mathbb{R}^{+}), \kappa \in \mathbb{R}} \quad & \kappa  \\
    \textrm{s.t.} \quad &\kappa - \int\limits_{\vect{x} \in \mathbb{R}^k} \left[\min_{a \in \texttt{Set}(\vect{x})} \mathfrak{h}(|u-a|) \right] g(u,\vect{x}) d\left(\prod_{i=1}^{k} x_i\right) \geq 0 \text{ }\forall u\in \mathbb{R}\\
    & \int\limits_{\vect{x} \in \mathbb{R}^k} g(u,\vect{x}) d\left(\prod_{i=1}^{k} x_i\right) = 1 \text{ } \forall u\in \mathbb{R}\\
    &  \epsilon g(u,\vect{x}) + \underline{g}_{u}(u,\vect{x}) \geq 0; \text{ }\forall u\in \mathbb{R}; \vect{x} \in \mathbb{R}^k\\
    &  \epsilon g(u,\vect{x}) - \overline{g}_{u}(u,\vect{x}) \geq 0; \text{ }\forall u\in \mathbb{R}; \vect{x} \in \mathbb{R}^k
\end{aligned}
\right.
\end{equation}

In DILP $\mathcal{O}$, we define $\underline{g}_u(u,\vect{x})$ and $\overline{g}_u(u,\vect{x})$ to be the lower and upper partial derivative of $g(u,\vect{x})$ at $u$. Now observe that, we use lower and upper derivatives instead of directly using derivatives as the derivatives of a probability density function may not always be defined (for example, the left and right derivatives are unequal in the Laplace distribution at origin). 

Note that the DILP $\mathcal{O}$ involves integrals and thus requires mechanisms to have a valid probability density function, however not every distribution is continuous, and, as a result, may not have a density function (e.g. point mass distributions like $\hat{P}^{\mathcal{L}_{\epsilon}}_u$ defined in Definition \ref{def:laplace_noise_addition}). Using ideas from mollifier theory \cite{99c43556-16f5-3ed5-90f5-6c2bbf257ea3} we construct mechanisms $\textbf{P}^{(\eta)}$ with a valid probability density function that are an arbitrary good approximation of mechanism $\textbf{P}$ in Lemma \ref{lemma:arbitrary_aprroximation}, hence showing that it suffices to define $\mathcal{O}$ over bounded, non-negative Reimann integrable functions $g$.






We now prove that the DILP constructed above captures the optimal mechanism, in other words, $\text{opt}(\mathcal{O})= f(\epsilon, k)$.  

\begin{lemma}{\label{lemma:lp_formulation}}
        Let $\text{opt}(\mathcal{O})$ denote the optimal value of DILP \eqref{orig_primal}, then  
    $f(\epsilon,k) = \text{opt}(\mathcal{O})$.
\end{lemma}

To prove this lemma, we show Lemma~\ref{lemma:diff-privacy-condition}, which relates the last two constraints of the DILP $\mathcal{O}$ to $\epsilon$-geographic differential privacy, and Lemma~\ref{lemma:arbitrary_aprroximation}, which shows that an arbitrary cost approximation of mechanism $\textbf{P}$ can be constructed with valid probability density functions.



\begin{lemma}{\label{lemma:diff-privacy-condition}}
    Let $P_u$ have a probability density function given by $g(u,.): \mathbb{R}^k \rightarrow \mathbb{R}$ for every $u \in \mathbb{R}$. Then, $\Pname$ satisfies $\epsilon$-geographic differential privacy iff $\max(|\overline{g}_u(u,\vect{x})|, |\underline{g}_u(u,\vect{x})|) \leq \epsilon g(u,\vect{x}) \text{ } \forall u \in \mathbb{R};\forall \vect{x} \in \mathbb{R}^k$ \footnote{$\underline{g}_u(u,\vect{x})$, $\overline{g}_u(u,\vect{x})$ denote the lower and upper partial derivative w.r.t. $u$}   
\end{lemma} 

The proof of this lemma (Appendix~\ref{sec:diff-privacy-condition-proof}) proceeds by showing that $\epsilon$-geographic differential privacy is equivalent to Lipschitz continuity of $\log g(u,\vect{x})$ in $u$.\footnote{We handle the case when the $\log$ is not defined as the density is zero at some point separately in the proof.}

\begin{lemma}{\label{lemma:arbitrary_aprroximation}} (Proven in Appendix \ref{sec:proof_arbitrary_approximation})
    Given any mechanism $\textbf{P} \in \mathcal{P}^{(\epsilon)}_{\textbf{R}^k}$ (satisfying $\epsilon$-geographic differential privacy), we can construct a sequence of mechanisms $\textbf{P}^{(\eta)}$ with bounded probability density functions such that $\textbf{P}^{(\eta)}$ is an arbitrary cost approximation of mechanism $\textbf{P} \in \mathcal{P}^{(\epsilon)}_{\textbf{R}^k}$.
\end{lemma}

 
 Using Lemmas \ref{lemma:arbitrary_aprroximation} and \ref{lemma:diff-privacy-condition}, we give the proof of Lemma \ref{lemma:lp_formulation}.

\begin{proof}[Proof of Lemma \ref{lemma:lp_formulation}]
    Consider any $\zeta>0$. As established in Lemma \ref{lemma:arbitrary_aprroximation}, it follows for every mechanism $\textbf{P} \in \mathcal{P}^{(\epsilon)}_{\textbf{R}^k}$, we can construct another mechanism $\textbf{P}^{(\eta)}$ with bounded probability density functions whose cost is a $\zeta$ approximation of the cost of mechanism $\textbf{P}$.
   Thus, we can use Lemma \ref{lemma:diff-privacy-condition} to conclude that the optimum value of DILP $\mathcal{O}$ is precisely $f(\epsilon,k)$.
\end{proof}



\subsubsection{Dual DILP $\mathcal{E}$ and statement of weak duality theorem}
Now, we write the \textit{dual} of the DILP $\mathcal{O}$ as the DILP $\mathcal{E}$ in Equation \eqref{orig_dual}. Observe that, we have the constraint that $\delta(.)$ and $\lambda(.)$ is non-negative, $\mathcal{C}^0$ (continuous) and $\nu(.,.)$
is a $\mathcal{C}^{1}$ function i.e. $\nu(r,\vect{v})$ is \textit{continuously differentiable} in $r$ and \textit{continuous} in $\vect{v}$.
Thus, we may rewrite the equations as 
\begin{equation}{\label{orig_dual}}
    \text{\Large{$\mathcal{E}=$} }
    \left\{
    \begin{aligned}
    \sup_{\substack{\delta(.),\lambda(.): \mathcal{C}^{0}(\mathbb{R}\rightarrow \mathbb{R}^+); \\ \nu(.,.): \mathcal{C}^1(\mathbb{R} \times (\mathbb{R})^k\rightarrow \mathbb{R})} } & \int_{r \in \mathbb{R}} \lambda(r) dr\\
    \textrm {s.t. } \quad & \int_{r \in \mathbb{R}} \delta(r) dr \leq 1\\
     & \hspace{- 3em} -\left[\min_{a \in \texttt{Set}(\vect{v})} \mathfrak{h}(|r-a|) \right] \delta(r) + \lambda(r) + \nu_r(r,\vect{v}) + \epsilon|\nu(r,\vect{v})| \leq 0 \text{ }\forall  r \in \mathbb{R}; \vect{v} \in (\mathbb{R})^k\\
    & \exists U: \mathcal{C}^0(\mathbb{R}^k \rightarrow \mathbb{R})  \textrm { s.t. }\nu(r,\vect{v}) \geq 0 \text{ }\forall r\geq U(\vect{v}) \text{ } \forall \vect{v} \in (\mathbb{R})^k \\
    & \exists L: \mathcal{C}^0(\mathbb{R}^k \rightarrow \mathbb{R}) \textrm { s.t. } \nu(r,\vect{v}) \leq 0 \text{ } \forall r\leq L(\vect{v}) \text{ } \forall \vect{v} \in (\mathbb{R})^k 
\end{aligned}
\right.
\end{equation}


To get intuition behind the construction of our dual DILP $\mathcal{E}$, relate the equations in DILP $\mathcal{O}$ to the dual variables of DILP $\mathcal{E}$ as follows. The first equation denoted by $\{\delta(r)\}_{r \in \mathbb{R}}$, second equation denoted by $\{\lambda(r)\}_{r \in \mathbb{R}}$ and the last two equations are \textit{jointly} denoted by $\{\nu(r,\vect{v})\}_{r \in \mathbb{R}; \vect{v} \in (\mathbb{R})^k}$ 
\footnote{Note that the variable $\nu(r,\vect{v})$ is constructed from the difference of two non-negative variables corresponding to third and fourth equations, respectively. The detailed proof is in Appendix \ref{sec:weak_duality},}.\\ The last two terms in the second constraint of DILP $\mathcal{E}$ are a consequence of the last two equations on DILP $\mathcal{O}$ and observe that it involves a derivative of the dual variable $\nu(u,\vect{v})$. The linear constraint on the derivative of the primal variable translates to a derivative constraint on the dual variable by a careful application of integration by parts, discussed in detail in Appendix \ref{sec:weak_duality}. 


Observe that in our framework we have to prove the weak-duality result as, to the best of our knowledge, existing duality of linear programs in infinite dimensional spaces work for cases involving just integrals. 
The proof of this Theorem \ref{theorem:weak_duality_result} is technical and we defer the details to Appendix \ref{sec:weak_duality}. 


\begin{theorem}{\label{theorem:weak_duality_result}}
    $\text{opt}(\mathcal{O}) \geq \text{opt}(\mathcal{E})$.
\end{theorem}

\subsection{Dual fitting to show the optimality of Laplace noise addition}\label{subsec-comb:dual-soln-construction}

Before starting this section, we first define a function $\hatfunf{\epsilon}{\K}$ which characterises the optimal placement of $k$ points in $\real$ to minimise the expected minimum dis-utility among these $k$ points measured with respect to some user $u$ sampled from a Laplace distribution. As we shall prove in Theorem \ref{lemma:bounding_f} that it bounds the cost of the Laplace noise addition mechanism.

\begin{equation}{\label{eq:hat_f_defn}}
    \hatfunf{\epsilon}{\K} = \min_{\vect{a} \in \mathbb{R}^k} \Exxlimits{y \sim \mathcal{L}_{\epsilon}(0)}{\min_{a \in \texttt{Set}(\vect{a})} \mathfrak{h} (|y -a |)}
\end{equation}

In this section, we first define a mechanism in Definition \ref{def:laplace_noise_addition} which simulates the action of the server corresponding to the Laplace noise addition mechanism in Section \ref{subsec:sim_laplace_noise} and show that the cost of Laplace noise addition mechanism is $\hat{f}(\epsilon,k)$. We finally show the optimality of Laplace noise addition mechanism via dual fitting i.e. constructing a feasible solution to the dual DILP $\mathcal{E}$ with an objective function $\hat{f}(\epsilon,k)$ in Section \ref{subsec:feasible_soln}.




%

\subsubsection{Bounding cost function $f(\epsilon,k)$ by the cost of Laplace noise adding mechanism}{\label{subsec:sim_laplace_noise}}

We now define the mechanism $\hat{\textbf{P}}^{\mathcal{L}_{\epsilon}} = \{\hat{P}_u^{\mathcal{L}_{\epsilon}}\}_{u \in \real}$ which corresponds to simulating the action of the server on receiving signal $S_u\sim \mathcal{L}_{\epsilon}(u)$ from user $u$. We often call this in short as the Laplace noise addition mechanism. 

\begin{definition}{\label{def:laplace_noise_addition}}
The distribution $\hat{P}^{\mathcal{L}_{\epsilon}}_u$ is defined as follows for every $u \in \real$.
    \begin{align}{\label{hat_P_defn}}
    \hat{\vect{a}} \sim \hat{P}^{\mathcal{L}_{\epsilon}}_u  \iff \hat{\vect{a}} = & \argmin_{\vect{a} \in \mathbb{R}^k} \Exxlimits{y \sim \mathcal{L}_{\epsilon}(S_u)}{\min_{a \in \texttt{Set}(\vect{a})} \mathfrak{h} (|y -a |)} \text{ where } S_u \sim \mathcal{L}_{\epsilon}(u)\\
    \overset{(a)}{=} & \argmin_{\vect{a} \in \mathbb{R}^k} \Exxlimits{y \sim \mathcal{L}_{\epsilon}(0)}{\min_{a \in \texttt{Set}(\vect{a})} \mathfrak{h} (|y -a |)} + S_u \text{\footnotemark where } S_u \sim \mathcal{L}_{\epsilon}(u)
\end{align}

 Equality $(a)$ follows from the fact that $y \sim \mathcal{L}_{\epsilon}(z) \implies y-z\sim \mathcal{L}_{\epsilon}(0)$ for every $z \in \mathbb{R}$.
\footnotetext{Observe that we choose a deterministic tie-breaking rule amongst all vectors minimising this objective.} 
\end{definition}

Observe that the server responds with set of points $\texttt{Set}(\vect{a})$ for some $\vect{a} \in \mathbb{R}^k$ so as to minimise the expected cost with respect to some user sampled from a Laplace distribution centred at $S_u$.


We show that the following lemma which states that $\hat{P}^{\mathcal{L}_{\epsilon}}$ satisfies $\epsilon$-geographic differential privacy constraints and bound $f(\epsilon,k)$ by $\hat{f}(\epsilon,k)$.

\begin{restatable} 
[detailed proof in Appendix \ref{sec-appendix:primal-feasible}] {lemma}{simrstepa}\label{lemma:bounding_f}
    $\hat{\textbf{P}}^{\mathcal{L}_{\epsilon}}$  satisfies $\epsilon$-geographic differential-privacy constraints i.e. $\hat{\textbf{P}}^{\mathcal{L}_{\epsilon}} \in \mathcal{P}^{(\epsilon)}_{\mathbb{R}^k}$ and thus, we have $f(\epsilon,k) \leq \textit{cost}(\hat{{P}}^{\mathcal{L}_{\epsilon}}) = \hat{f}(\epsilon,k)$
     
\end{restatable}

     

\begin{proof}[Proof Sketch]

     Observe that $\hat{P}^{\mathcal{L}_{\epsilon}} \in \mathcal{P}^{(\epsilon)}_{\mathbb{R}^k}$ from the post processing theorem, refer to \cite{diff-privacy-book} since $S_u \sim \mathcal{L}_{\epsilon}(u)$ satisfies $\epsilon$-geographic differential privacy constraints.\footnote{Post processing theorem can be proven even for $\epsilon$-geographic differential privacy similarly}. Thus, we prove $f(\epsilon,k) \leq \text{cost}(\hat{{P}}^{\mathcal{L}_{\epsilon}})$. The equality is fully proven in Appendix \ref{sec-appendix:primal-feasible}.

\end{proof}

\subsubsection{Obtaining a feasible solution to DILP $\mathcal{E}$}{\label{subsec:feasible_soln}}

We now construct feasible solutions to DILP $\mathcal{E}$. For some $\zeta>0$ and $\hat{\lambda}>0$, we define  

\begin{equation}{\label{eqn:nu_delta_defn}}
    \delta^{(c)}(r) = (\zeta/2) e^{-\zeta|r|} \text{ and } \lambda^{(c)}(r) = \hat{\lambda}\cdot (\zeta/2) e^{-\zeta|r|} \text{ } \forall r \in \mathbb{R}
\end{equation}


Now define $v_{med} = \text{Median}(\texttt{Set}(\vect{v}))$ and for every $\vect{v} \in \mathbbm{R}^k$, we consider the following Differential Equation \eqref{eqn:diff_eqn_nu} in $\hat{\nu}(.)$. 

\begin{equation}{\label{eqn:diff_eqn_nu}}
    -\left[\min_{a \in \texttt{Set}(\vect{v})} \mathfrak{h}(|r-a|)\right] \delta^{(c)}(r) + \lambda^{(c)}(r) + \frac{d\hat{\nu}(r)}{dr} + \epsilon |\hat{\nu}(r)| = 0 ; \text{ with  $\hat{\nu}(v_{med}) = 0$}  
\end{equation}

Observe that this equation precisely corresponds to the second constraint of DILP $\mathcal{E}$ (inequality replaced by equality) with an initial value. We now show that a solution $\hat{\nu}(.)$ to differential equation \eqref{eqn:diff_eqn_nu} exists such that $\hat{\nu}(r)$ is non-negative for sufficiently large $r$ and non-positive for sufficiently small $r$ to satisfy the last two constraints of DILP $\mathcal{E}$ in Lemma \ref{lemma:diff_eqn_soln}.

Observe that the structure of our differential equation is similar to that in \cite[Equation 19]{koufogiannis2015optimality}. However, our differential equation has significantly more complexity since we are minimising over a set of points $\vect{v} \in \mathbb{R}^k$ and also our equation has to be solved for every $\vect{v} \in \mathbb{R}^k$ making it more complex.  


\begin{restatable} 
[Proof in Appendix \ref{subsec-appendix:diff_eqn_soln}] {lemma}{simrstepa}\label{lemma:diff_eqn_soln}
     Choose $\zeta<\epsilon$ and $0< \hat{\lambda} \leq \frac{\epsilon - \zeta}{\epsilon+\zeta} \hat{f}(\epsilon+ \zeta,k)$, then equation $\eqref{eqn:diff_eqn_nu}$ has a unique $\mathcal{C}^1$ solution $\nu^{(c)}(.)$ and there exists $U, L \in \mathbb{R}$ satisfying $\nu^{(c)}(r) \geq 0 \text{ }\forall r \geq U$ and $\nu^{(c)}(r) \leq 0 \text{ } \forall r \leq L$.
     
\end{restatable}

\begin{proof}[Intuitive explanation]
    We just give an intuition for this proof for the case where $\hat{\lambda}$ exceeds $\frac{\epsilon-\zeta}{\epsilon+\zeta} \hat{f}(\epsilon,k)$ by showing two plots in Figure \ref{fig:plotlambda40} and \ref{fig:plotlambda46} for the two cases where $\hat{\lambda} < \frac{\epsilon-\zeta}{\epsilon+\zeta} \hat{f}(\epsilon,k)$ and $\hat{\lambda} > \frac{\epsilon-\zeta}{\epsilon+\zeta} \hat{f}(\epsilon,k)$ respectively. In the first case, $\nu^{(c)}(r)$ is positive for sufficiently large $r$ and in second case, it goes negative for large $r$ demonstrating the requirement of the bound $\frac{\epsilon-\zeta}{\epsilon+\zeta} \hat{f}(\epsilon,k)$ on $\hat{\lambda}$. 


    The two plots are for the case when $\epsilon=1$, $\zeta=0.1$, $\mathfrak{h}(z)=z$ and thus $\frac{\epsilon-\zeta}{\epsilon+\zeta}\hat{f}(\epsilon,k)$ may be approximately by $\frac{9}{11}\times \frac{1}{2}=0.44$ as shown in Section \ref{subsec-comb:step4}. For the purpose of the plots, we choose $\vect{v} = [- \log 4; \text{ }0; \text{ }\log 4]^T$\footnote{We choose this vector since it minimises equation \eqref{eq:hat_f_defn} and a detailed calculation is given in Section \ref{subsec-comb:step4}. 
    }
    and demonstrate the point in the Lemma. 

            \begin{figure}[!htp]
        \centering
        \begin{subfigure}{.5\textwidth}
          \centering
          \includegraphics[width= 0.85\linewidth]{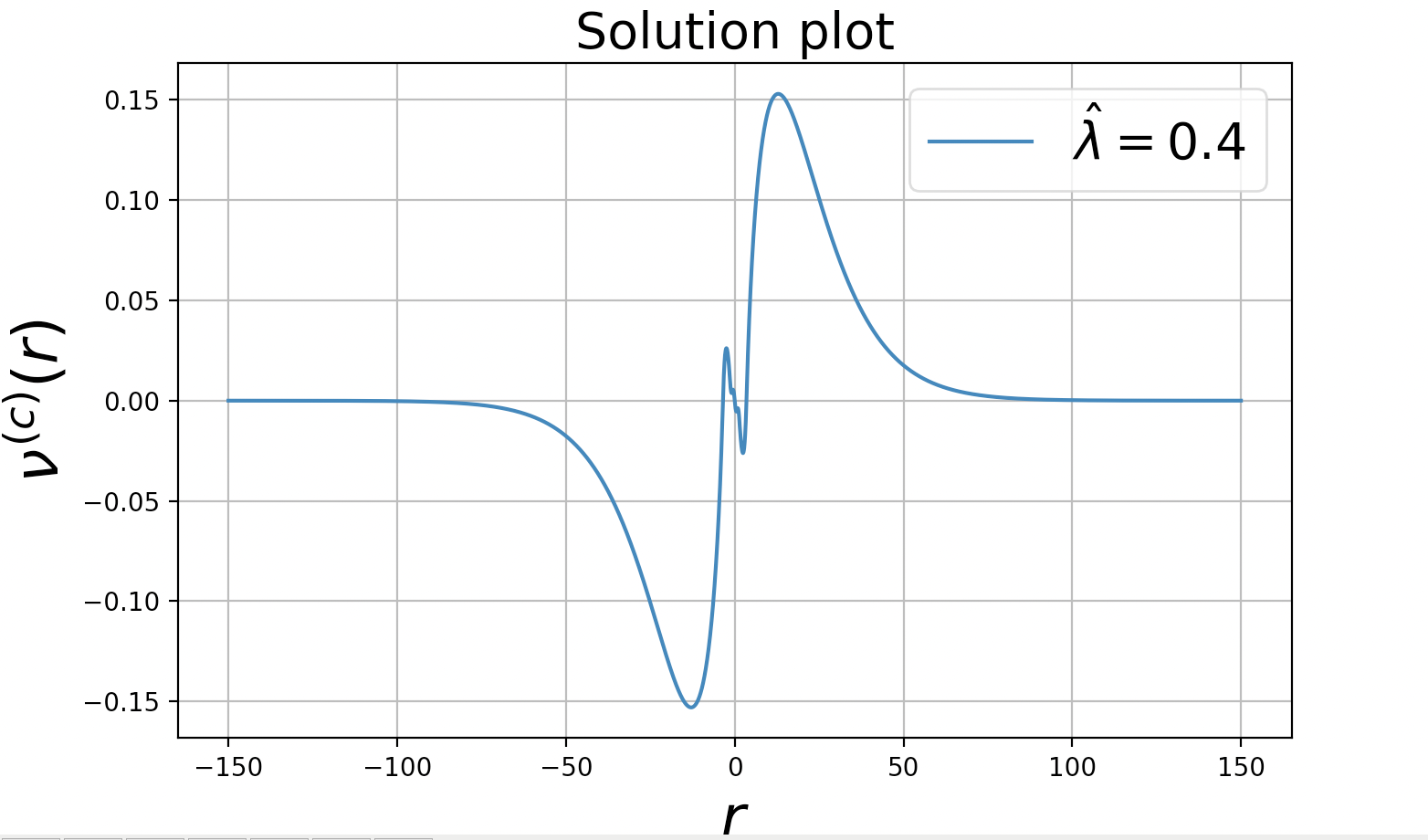}
          \caption{Solution for $\hat{\lambda} = 0.40$}
          \label{fig:plotlambda40}
        \end{subfigure}%
        \begin{subfigure}{.5\textwidth}
          \centering
          \includegraphics[width= 0.85\linewidth]{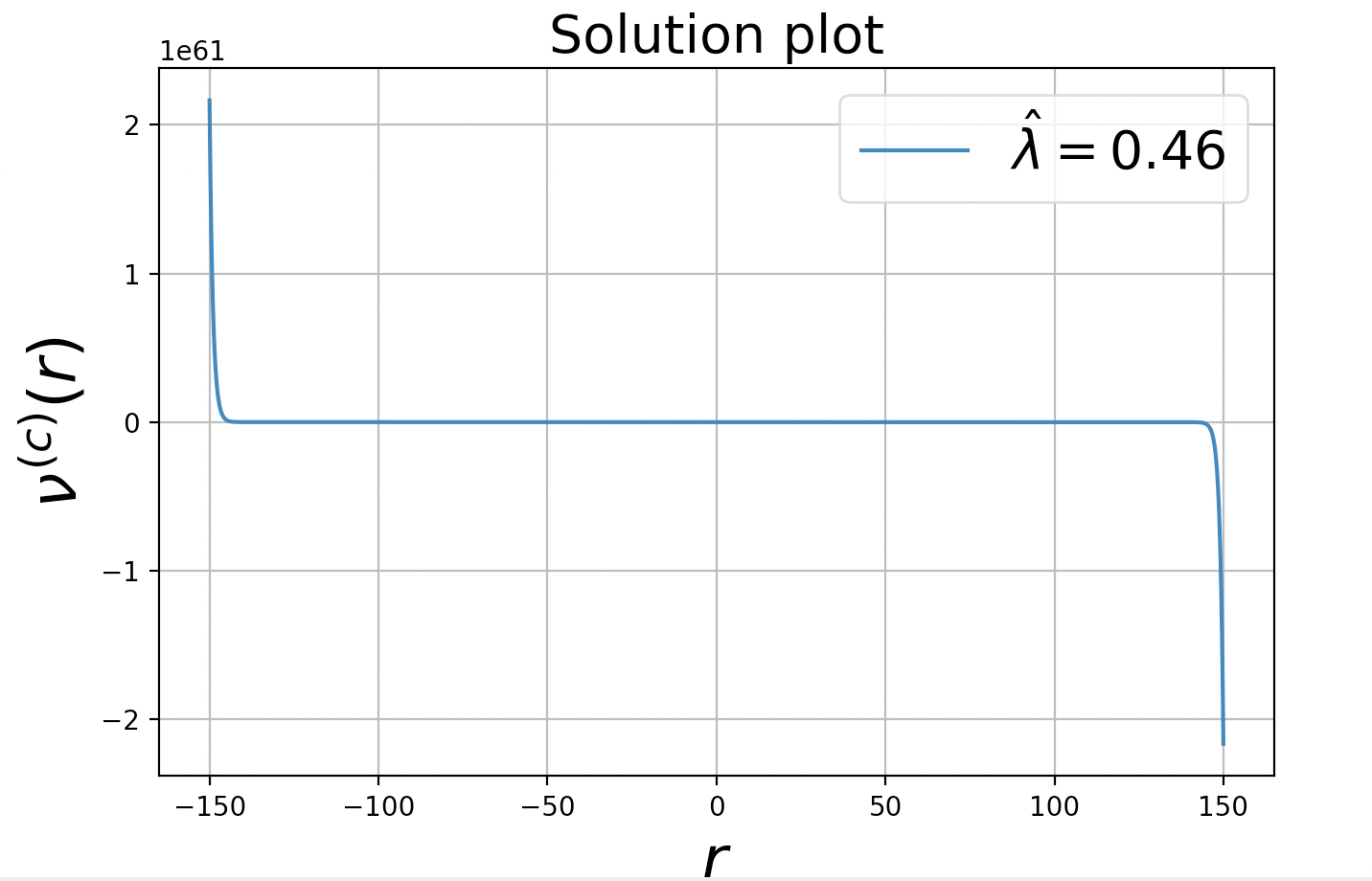}
          \caption{Solution for $\hat{\lambda} = 0.46$}
          \label{fig:plotlambda46}
        \end{subfigure}
        \caption{Solutions for Differential Equation \eqref{eqn:diff_eqn_nu} for $\vect{v} = [- \log 4; \text{ }0; \text{ }\log 4]^T$}
        \end{figure}
These spikes in the solution may be observed due to the selection of $\vect{v} \in \mathbb{R}^3$ due to the term $\left[\min\limits_{a \in \texttt{Set}(\vect{v})} \mathfrak{h}(|r-a|)\right]$ in the differential equation.
\end{proof}

\begin{restatable} 
[detailed proof in Appendix \ref{subsec-appendix:diff_eqn_soln}] {lemma}{simrstepa}\label{lemma:dual_achievable}
    $\text{opt}(\mathcal{E}) \geq \hat{f}(\epsilon,k)$.
\end{restatable}

We present a proof sketch where we do not explicitly show the continuity of the bounds $U(.)$ and $L(.)$. In Appendix \ref{sec:continuous_bounds_claim}, we prove a claim showing the existence of such continuous bounds.


\begin{proof}[Proof Sketch]
    Recall the functions $\lambda^{(c)}(.), \delta^{(c)}(.)$ defined in \eqref{eqn:nu_delta_defn}. Also for every $\vect{v} \in \mathbb{R}^k$, we obtain a function $\nu^{(c)}(.,\vect{v})$ [solution of Equation \eqref{eqn:diff_eqn_nu}] with bounds $U(\vect{v})$ and $L(\vect{v})$ satisfying $\nu^{(c)}(r,\vect{v}) \geq 0 \text{ }\forall u\geq U(\vect{v})$ and $\nu^{(c)}(r,\vect{v}) \leq 0 \text{ }\forall u\leq L(\vect{v})$ and this solution is feasible.


    The objective value of this feasible solution is $\hat{\lambda}$ and the constructed solution is feasible for any $\hat{\lambda} \leq \frac{\epsilon-\zeta}{\epsilon+\zeta} \hat{f}(\epsilon+\zeta,k)$ and $\zeta>0$. Now, since $\hat{f}(\epsilon,k)$ is continuous in $\epsilon$, choosing $\zeta$ to be arbitrarily small enables us to obtain the objective value of the solution arbitrarily close to $\hat{f}(\epsilon,k)$ and thus, $\text{opt}(\mathcal{E}) \geq \hat{f}(\epsilon,k)$. 
\end{proof}


Observe that although we defined the Laplace noise addition mechanism $(\hat{\textbf{P}}^{\mathcal{L}_{\epsilon}})$ (see Definition \ref{def:laplace_noise_addition}) entirely in terms of the user's action, we can consider an alternate mechanism splitting $(\hat{\textbf{P}}^{\mathcal{L}_{\epsilon}})$ into user's action and server's response attaining the same cost: 
\begin{itemize}
    \item User $u$ sends $S_u \sim \mathcal{L}_{\epsilon}(u)$ to the server.

    \item The server on receiving $S_u$ responds with a vector $\vect{a}  = \argmin\limits_{\vect{a} \in \real^k} \Exxlimits{y \sim \mathcal{L}_{\epsilon}(S_u)}{\min\limits_{a \in \texttt{Set}(\vect{a})} \mathfrak{h} (|y -a |)}$.

\end{itemize}

\begin{theorem}{\label{thm:optimality-laplace-sim}}
    For $\epsilon$-geographic differential privacy, sending Laplace noise, that is, user $u$ sends a signal drawn from distribution $\mathcal{L}_{\epsilon}(u)$ is one of the optimal choices of $\Probs{\Signals}$ for users, and in this case $\funfgeo{(\ell_1,\real)}{\Ads}{\epsilon}{\K} {\mathfrak{h}(.)} = \hatfunf{\epsilon}{\K}$.
\end{theorem}

\begin{proof}
    Combining the results in Lemmas \ref{lemma:bounding_f},  \ref{lemma:lp_formulation}, \ref{lemma:dual_achievable} and Theorems  \ref{theorem:weak_duality_result} and Theorem \ref{thm:singpoint:sim}, we obtain $\hat{f}(\epsilon,k) \leq \text{opt}(\mathcal{E}) \leq \text{opt}(\mathcal{O}) \leq \hat{f}(\epsilon,k)$ where $\hat{f}(\epsilon,k)$ denotes the cost of Laplace noise addition mechanism $\hat{P}^{\mathcal{L}_{\epsilon}}$ i.e. $\textit{cost}(\hat{P}^{\mathcal{L}_{\epsilon}}) = \hat{f}(\epsilon,k)$.

\end{proof}

\subsection{Server response given the user sends Laplace Noise}    
\label{subsec-comb:step4}
\newcommand{\kk}{b}


Recall that we proved in Theorem \ref{thm:optimality-laplace-sim} that the Laplace noise addition mechanism is an optimal action for the users. We now focus on the construction of an optimal server action on receiving the signal $s$ from an user.


\begin{enumerate}
    \item User with value $v\in \real$ reports $s$ after adding Laplace noise of scale $\frac{1}{\epsilon}$. More formally, $s$ is drawn from Laplace distribution $\mathcal{L}_{\epsilon}(v)$. 
    \item The server receives $s$ and respond $(s+a_1,\dots,s+a_{\K})$, where $a_1,\dots,a_{\K}$ are fixed real values.
\end{enumerate}

For the case of $\mathfrak{h}(t)=t$, the optimal mechanism is simple enough that the values $a_1, a_2, \ldots, a_k$ can be computed by dynamic programming, as we sketch in the following subsection, and this concludes the proof of Theorem~\ref{thm:lap:sim}. 
For other general increasing functions, the optimal solution for $\{a_i\}_{i=1}^{k}$ may not always be written in closed form, however we can always write a recursive expression to compute the points. 

\subsubsection{Sketch of server response for odd $k$}
We now show the optimal choice of $A$ to optimize cost function $\hat{f}(\epsilon,k)$ [in Equation \eqref{eq:hat_f_defn}]. Specifically, we assume odd $k$ in this section. The solution for even $k$ (refer Theorem \ref{cor:geofinal}) can be constructed using a similar induction where the base case for $k=2$ can be directly optimized. 



Assuming the symmetry of $A$, let $A=\{-y_{\kk-1}, \ldots -y_1, 0, y_1, \ldots, y_{\kk-1}\}$, where $y_1,\ldots,y_{\kk-1}$ are positive numbers in increasing order. We will construct the set $y_1, \ldots, y_{\kk-1}$ inductively.
 Let $x$ be a random variable drawn from Laplace distribution $\Lap$ with parameter $\epsilon$, and the goal is to minimize $D_{\kk}=\Exx{x\sim \Lap}{\min\limits_{a\in A}|x-a|}$. Since the density function of $\Lap$ satisfies $\rho_{\Lap}(x)=\rho_{\Lap}(-x)$, we have 
$$D_{\kk}=\Exx{x\sim \Lap}{\min_{a\in A}\mathfrak{h}(|x-a|) \bigg{|} x} >0,$$ 
i.e. the user has a positive private value. Under this conditioning, the variable $x$ is an exponential random variable of mean 1. In this case, the search result being used by the server will be one of $y_0, y_1$, \ldots, $y_{\kk-1}$. Clearly, $D_1 = 1$. To compute $D_{\kk+1}$, let $s = y_1$. 

Then using the memorylessness property of exponential random variables, we get the recurrence
\begin{align*}
    D_{\kk+1} & = \int_{t=0}^s \min\{\mathfrak{h}(t), \mathfrak{h}(s-t)\}e^{-t}dt +e^{-s}D_\kk\\ & = \int_{t=0}^{s/2} \mathfrak{h}(t)e^{-t}dt + s \int_{t=s/2}^{s} e^{-t}dt - \int_{t=s/2}^{s} \mathfrak{h}(t) e^{-t}dt  +e^{-s}D_\kk.
\end{align*}

The optimal $D_{b+1}$ given $D_b$ can be computed by minimising over all $s \in \mathbb{R}$. However, for the case where $\mathfrak{h}(.)$ is an identity function, we may give a closed form expression below.

\begin{align*}
 D_{\kk+1} 
=& \int_{t=0}^{s/2} te^{-t}dt + s \int_{t=s/2}^{s} e^{-t}dt 
 -  \int_{t=s/2}^{s} t e^{-t}dt  +e^{-s}D_\kk\\
 =& \left(1 - (s/2) e^{-s/2} -  e^{-s/2}\right) + s \left(e^{-s/2}-e^{-s}\right) 
  - \left((s/2) e^{-s/2} +  e^{-s/2} - se^{-s} - e^{-s}\right) + e^{-s}D_\kk\\
 =& 1 - 2e^{-s/2} + e^{-s} + e^{-s}D_\kk \\
=& \left(1-e^{-s/2}\right)^2 + \left(e^{-s/2}\right)^2D_\kk
\end{align*}
Setting $\gamma = e^{-s/2}$, and minimizing by taking derivatives, we get $-2(1-\gamma) + 2\gamma D_\kk = 0$ which in turn gives 
$$\gamma = \frac{1}{D_\kk + 1} \qquad \mbox{and} \qquad D_{\kk+1} = \frac{D_\kk}{D_\kk+1}.$$ 
Plugging in the inductive hypothesis of $D_\kk = 1/\kk$, we get $D_{\kk+1} = 1/(\kk+1)$. Further, we get $s = 2\ln (1+1/\kk)$. Thus, by returning $\K=2\kk-1$ results, the expected ``cost of privacy" can be reduced by a factor of $\kk$. To obtain the actual positions $y_1, .., y_{\kk-1}$ we have to unroll the induction. For $i=1, \ldots, \kk-1$, the position  $y_i$ is given by:
$$y_i =  y_{i-1} + 2\ln (1 + 1/(\kk-i)).$$

Based on the above arguments in the four sections, we have the main theorem \ref{thm:lap:sim}.

\section{Further extensions}{\label{sec:diff-privacy-extensions}}

We describe some additional results below.

\begin{itemize}
    \item When the user is not able to perform the optimal action, we show in Appendix \ref{sec-appendix:mhr} that $\text{cost}^{\mathds{1}(.)}(Z,\Pname,\Qname) = O(\frac{\log k}{k\epsilon})$ for an appropriate server response $\Qname$\footnote{The server's action $\Qname$ involves sampling from the posterior of the noise distribution.} if the user's action $\Pname$ consists of adding symmetric noise whose distribution satisfies log-concave property\footnote{If the random noise with log-concave distribution $g$ is given by $Y$, then we have $\mathbb{E}[Y^{+} \cup \{0\}] = \frac{1}{\epsilon}$ and $g(y) = g(-y)$.}. Observe that this property is satisfied by most natural distributions like Exponential and Gaussian. 

    \item  In practice, the set of users may not belong to $\real$ but in many cases may have a feature vector embedding in $\real^d$. In this scenario, a server could employ dimensionality reduction techniques such as Principal Component Analysis (PCA) to create a small number $d'$ of dimensions which have the strongest correlation to the dis-utility of a hypothetical user with features identical to the received signal. The server may project the received signal only along these dimensions to select the set of $k$ results. In this case, we show that $\text{cost}^{\mathds{1}(.)}(Z,\Pname,\Qname)= O\left(\frac{1}{\epsilon k^{1/d'}}\right)$ under some assumptions as discussed Appendix~\ref{sec-appendix:high-dim-space} when the user's action $\Pname$ consists of adding independent Gaussian noise to every feature.

    \item We further show that Laplace noise continues to be an optimal noise distribution for the user even under a relaxed definition of geographic differential privacy (defined in Definition \ref{def:alt_geo_DP})  
    in Section \ref{sec:diff-privacy-gen}. This definition captures cases when privacy guarantees are imposed only when the distance between users is below some threshold (recall from Section \ref{sec:related_optimal_DP} that such a  setup was studied in \cite{geng2015optimal}).
    
\end{itemize}

\subsection{A generalization of Geographic differential privacy}{\label{sec:diff-privacy-gen}}

Here we consider a generalization of $\epsilon$-geographic differential privacy and define $\mathfrak{g}(.)$-geographic differential privacy for an increasing convex function $\mathfrak{g}(.)$ satisfying Assumption \ref{assn:g_defn}.

\begin{assumption}{\label{assn:g_defn}}
    $\mathfrak{g}(.)$ is a increasing convex function satisfying $\mathfrak{g}(0)=0$ and $\mathfrak{g}(.)$ is differentiable at 0 with $\mathfrak{g}'(0)\neq 0$.
\end{assumption}

\begin{definition}[alternate definition of geo-DP]{\label{def:alt_geo_DP}}
        Let $\epsilon>0$ be a desired level of privacy and let $\mathcal{U}$ be a set of input data and $\mathcal{Y}$ be the set of all possible responses and $\Delta(\mathcal{Y})$ be the set of all probability distributions (over a sufficiently rich $\sigma$-algebra of $\mathcal{Y}$ given by $\sigma(\mathcal{Y})$). For any $\mathfrak{g}(.)$ satisfying Assumption \ref{assn:g_defn} a mechanism $Q: u \rightarrow \Delta(\mathcal{Y})$ is $\mathfrak{g}(.)$-geographic differentially private if 
    for all $S \in \sigma(\mathcal{Y})$ and $u_1,u_2 \in \mathcal{U}$:
    $$ \mathbb{P}(Qu_1 \in S) \leq e^{\mathfrak{g}(|u_1-u_2|)} \mathbb{P}(Qu_2\in S).$$
\end{definition}

Since this definition allows the privacy guarantee to decay non-linearly with the distance between the user values, it is a relaxation of $\epsilon$-geographic DP as defined in Definition \ref{def:geo_DP}. Observe that this definition captures cases where the privacy guarantees exist only when the distance between users is below some threshold by defining $\mathfrak{g}(t)$ to be $\infty$ if $t>T_0$ for some threshold $T_0$. 


Under this notion of differential privacy, we may redefine cost function $f^{\text{alt},\mathfrak{h}(.)}(\epsilon,k)$ as follows. 

\[  \funfgeoalt{U}{\Ads}{\mathfrak{g}(.)}{\K}{\mathfrak{h}(.)}
\defeq \inf_{\Signals}\inf_{\Pname\in \Probsgeog{\Signals}} \inf_{\Qname\in \Qrobs_{ \Signals}} \sup_{u \in \real} \Exxlimits{s\sim \Pname_u}{ \Exxlimits{\vect{a} \sim \Qname_s}{ \min_{a\in \texttt{Set}(\vect{a})}\mathfrak{h}\left(|u-a|\right) }}, \]
where $\Probsgeog{\Signals}:=\{ \Pname | \forall u\in \real, \dsig{u}\text{~is a distribution on $\Signals$,} 
    \text{~and $\mathfrak{g}(.)$-geographic differential privacy is satisfied} \}$. 
The definition of $\Qrobs_{\Signals}$ are similar to that in Section \ref{sec:simmodel}.


We now show that adding Laplace noise continues to remain an optimal action for the users even under this relaxed model of geographic differential privacy. 

\begin{theorem} \label{thm:lap:sim-alt}
For $\mathfrak{g}(.)$-geographic differential privacy, adding Laplace noise, whose density function is $\rho(x)=\frac{\mathfrak{g}'(0)}{2} \cdot e^{- \mathfrak{g}'(0)\disrzero{x}} $, is one of the optimal choices of $\Probsgeog{\Signals}$ for users. Further, when $\mathfrak{h}(z)=z$, we have $\funfgeoalt{\real}{\Ads}{\mathfrak{g}(.)}{\K} {\mathfrak{h}(.)} = O\left(\frac{1}{\mathfrak{g}'(0) k}\right)$ and the optimal mechanism (choice of actions of users and server) itself can be computed in closed form.
\end{theorem}

\begin{proof}
    The proof of this theorem follows identically to that of Theorem \ref{thm:lap:sim}. However, we require a slight modification of Lemma \ref{lemma:diff-privacy-condition} to prove it as stated and proven in Lemma \ref{lemma:diff-privacy-condition_gen}.
\end{proof}

\begin{lemma}{\label{lemma:diff-privacy-condition_gen}}
    Suppose, $P_u$ has a probability density function given by $g(u,.): \mathbb{R}^k \rightarrow \mathbb{R}$ for every $u \in \mathbb{R}$. Then, $P$ satisfies $\mathfrak{g}(.)$-geographic differential privacy iff $\max(|\overline{g}_u(u,\vect{x})|, |\underline{g}_u(u,\vect{x})|) \leq \mathfrak{g}'(0) g(u,\vect{x}) \text{ } \forall u \in \mathbb{R};\forall x \in \mathbb{R}^k$ whenever $\mathfrak{g}(.)$ satisfies Assumption \ref{assn:g_defn}.   
\end{lemma}

The proof of this Lemma is very similar to that of Lemma \ref{lemma:diff-privacy-condition} and proven in Section \ref{sec:diff-privacy-condition_gen_proof}.

\section{Conclusion}
We have defined a new architecture for differential privacy with a small number of multiple selections,
and shown in a stylized model that significant improvements in the privacy-accuracy tradeoffs are indeed possible. Our model ignores some practical considerations, namely, the client's request lives in a high dimensional feature space (and not in one-dimension), and the server has a machine learning model to evaluate the quality of a result that it needs to convey to the client in some compressed form. Addressing these issues while preserving the privacy-accuracy trade-off either theoretically or empirically, will be the focus of future work.



\bibliography{refs}
\appendix

\renewcommand{\funf}[4]{f(#1,#3,#4)}

\renewcommand{\Users}{U}
\renewcommand{\Ads}{\Users}
\renewcommand{\Kads}{\mathcal{A}}
\renewcommand{\diszero}[1]{\left\|#1\right\|_2}

\renewcommand{\sphere}{\mathbb{S}^{\Dim}}

\renewcommand{\dis}[2]{\left\|#1-#2\right\|_2}

\renewcommand{\defdis}[2]{ \theta \left(#1,#2\right)}
\section{Extensions to the models and more detailed model description}

\subsection{Model and Overview (With More Details)}


\subsubsection{Notations}{\label{appendix-sec:notations}}


We use $[m]$ to denote the set $\{1,2,\dots,m\}$, for any non-negative integer $m$. $\natr$ is the set of positive integers, $\ratn$ is the set of rational real numbers, and $\real$ is the set of real numbers. We denote $\mathcal{A} = \real^k$



$\E{V}$ denotes the expectation of real random variable $V$. $\Pr{E}$ denotes the probability of an event $E$. $\I{E}$ is the indicator, which has value 1 when the event $E$ happens and has value 0 when $E$ does not happen, and thus we have $\Pr{E}=\E{\I{E}}$.

For any set $S$, $\probm{S}$ is defined as the set of probability measures on $S$.

We sometimes have integral operations. Since we are optimizing the objective over all possible mechanisms, some functions may not be continuous, and some distributions may not have density functions, so we use the Lebesgue integral.


We use $\Exx{X\sim \mu}{V(X)}$ to denote the expectation of $V(X)$ when the probability measure of $X$ is $\mu$, so
\begin{align*}
    \Exx{X\sim \mu}{V(X)} \defeq \int_{x} V(x)\mu(dx).
\end{align*}

Similarly, we use $\Prxx{X\sim \mu}{E(X)}$ to denote the probability that event $E(X)$ happens when the probability measure of $X$ is $\mu$, so
\begin{align*}
    \Prxx{X\sim \mu}{E(X)} \defeq \Exx{X\sim \mu}{\I{E(X)}} = \int_{x} \I{E(x)}\mu(dx).
\end{align*}

When a probability measure $\mu$ has a probability density (or in other words is continuous), we use $\rho_{\mu}(x)$ to denote its probability density at $x$ and the Lebesgue integral may be replaced by a Reimann integral.

Similarly, we use $\texttt{Set}$ to convert a vector in $\real^k$ to a set. More formally, for any $\vect{a} \in \real^k$, $\texttt{Set}(\vect{a})$ is denoted by $\{a_i: i \in [k]\}$.

\subsection{Problem setting for the restricted setup (results \iffalse\ak{can we not use ads here and be consistent with phrasing in the set-up?}\fi and users lie in the same space)}{\label{appendix-sec:formal-description}}


Users send a signal to the server, and the server sends back $k$ results $A$ which we denote as a vector in $\real^k$. In this subsection, we give a measure-theoretic view of the mechanism $(Z,\Pname,\Qname)$.




We aim to determine a mechanism with the following ingredients.
\begin{enumerate}
\item A set of signals $\Signals$.  
\item Action of users $\dsig{u}(\C)$, denoting the probability that user $u$ sends signal $s\in \C$, for $u\in \Users$ and $\C\in \measure{\Signals}$, where $\measure{\Signals}$ is the set of all the measurable subsets of $\Signals$. 
\item Action of server $\dsigq{s}(\C)$, denoting the probability that server sends back results $A\in \C$ when receiving signal $s$, for $s\in \Signals$ and $\C\in \measure{\Kads}$, where $\measure{\Signals}$ is all the measurable subset of $\Signals$. 
\end{enumerate}

The system should have geographic differential privacy. For any $\C\in \measure{\Signals}$, for any $u_1,u_2\in \Users$, $\dsig{u_1}(\C)\leq e^{\epsilon\cdot |u_1-u_2|} \dsig{u_2}(\C)$, 

%

We want to minimize the distance between the user and the closest ad that the user receives, in the worst case with respect to the distribution of users. Formally, we want to compute
\[  \funfgeo{(d,\Users)}{\Ads}{\epsilon}{\K}{\mathfrak{h}(.)} \defeq  \inf_{\Signals}\inf_{\Pname\in \Probs{\Signals},\Qname\in \Qrobs_{\Signals}} \sup_{u \in \real} \Exxlimits{s\sim \dsig{u}}{\Exxlimits{ \vect{a}\sim \dsigq{s}}{ \min_{a\in \texttt{Set}(\vect{a})}\mathfrak{h}(|u-a|) }}, \]
where
\begin{enumerate}
    \item $\Signals$ can be any set.
    \item  When we are interested in geographic differential privacy on $\real$. We call $\Exxlimits{s\sim \dsig{u}, \vect{a}\sim \dsigq{s}}{ \min\limits_{a\in \texttt{Set}(\vect{a})}\mathfrak{h}(|u-a|) }$ the cost of user $u$ from mechanism $(Z,\Pname,\Qname)$
    

    \item $\Probs{\Signals}$ is the set of mechanism satisfying geographic differential privacy.
    
    \item $\Qrobs_{\Signals}:=\{ \Qname | \forall s\in \Signals, \dsigq{s}({\cdot})\in \probm{\Kads} \}$, which is the set of available actions of the server.
\end{enumerate}

\zhihaohalfcomment{need assumption, e.g. $P_{x}(y)$ is measurable w.r.t. $(x,y)$. Otherwise the construction of new mechanism is not well defined}

\subsection{Summary of Results}


In either model, one of the optimal mechanisms satisfies $H=\Kads$, which means the signals sent by users and servers can be drawn from the same set. In fact, the server will directly return the signal it receives. Furthermore, we do not need to consider all the distribution over users, we only care about the user that has the largest error. 











\begin{theorem}\label{thm:main1}

For $\epsilon$-geographic differential privacy adding Laplace noise i.e. user $u$ draws a signal from $\mathcal{L}_{\epsilon}(u)$ is one of the optimal choices for users, and in this case $\funfgeo{\Users}{\Ads}{\epsilon}{\K}{\mathfrak{h}(.)}$ is $O(1/(\epsilon \K))$ when $\mathfrak{h}(t)=t$. Furthermore, when $\mathfrak{h}(t)=t$, the optimal mechanism can be computed in closed form (Theorem \ref{cor:geofinal}).

\end{theorem}

\subsection{Calculation of optimal mechanism on a ring for the case of $k=2$}    \label{appendix-sec:non-opt-ex}


We calculate the optimal mechanism in geographic differential privacy setting, on a unit ring, when $\epsilon=3/8$, and the number of results is $\K=2$. In this section, we define $d(u,a) = \disb{u}{a}$


We use real numbers in $[-\pi,\pi)$ to denote users and results on a unit ring, and $\disb{x}{a}$ denotes $|x-a|$. Figure \ref{fig:intro3} illustrates the optimal mechanism under geographic DP for $k=2$. This mechanism uses noise that is a piece-wise composition of Laplace noises; we obtain a cost of 0.72 \remove{n improvement of a factor of 0.558 with this optimal mechanism}whereas Laplace noise gives a cost of 0.75. \remove{ an improvement of only 0.585.} 

  
  \begin{figure}
      \centering

    \includegraphics[width = 80mm]{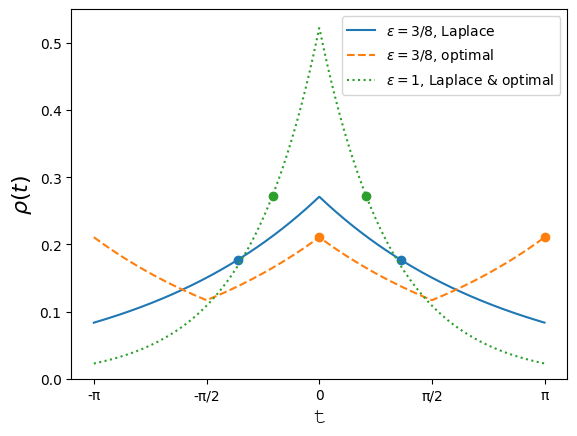}
      \caption{Geographic differential privacy setting when users and results are located on a unit ring, for $\K=2$ and $\epsilon\in \{3/8,1\}$, showing the stark difference between Laplace noise and the optimal noise. Suppose the user has a private value $u$. Then the user sends $u+x$ to the server, where $x$ is drawn from a noise distribution with density $\rho(t)$, depicted here for both Laplace noise and the optimal noise. Suppose the server receives $s$. Then the server's optimal response is $s+a_1$ and $s+a_2$, where the values of $a_1,a_2$ are the $t$-axis values of dots on the density functions, again assuming both Laplace noise and the optimal noise. Laplace is not optimal when $\epsilon=3/8$, while Laplace is optimal when $\epsilon=1$.}

      {
    \color{cyan}

      }
      
      \label{fig:intro3}
  \end{figure}

To find the optimal mechanism for the case of the ring, we solve the DILP $\mathcal{O}$ using a linear program solver and obtain the plot shown in Figure \ref{fig:intro3} with cost of 0.72. However, when the user sends Laplace noise, the server on receiving signal $z$ responds with two points $z+ a_1$ and $z+a_2$ which maybe calculated by the following problem.

\begin{align*}
    \min_{a_1\in [-\pi,\pi), a_2\in [-\pi,\pi)} \int_{-\pi}^{\pi} \min\{\disb{x}{a_1},\disb{x}{a_2}\} \rho(x) dx,
\end{align*}
where $\rho(x)$ is a density function for the Laplace distribution, 
\subsection{Restricted and Unrestricted Setup of the Multi-Selection model}{\label{appendix-sec:restricted_unrestricted_setup}} 

Recall the setup in Section \ref{sec:introduction_setup} where the users and results belonged to different sets $\real$ and $M$ with the definition of dis-utility in Definition \ref{def:disutility_user_result}. In section \ref{subsec:user_results_restricted}, we considered an alternate setup where the users and results belonged to the same set $\real$ and the optimal result for an user $u$ was the result $u$ itself. In this section, we call these setups unrestricted and restricted respectively and define our ``multi-selection'' model separately for both these setups. Finally, we bound the cost function in the unrestricted setup by the cost function in the restricted setup in Theorem \ref{thm:upper-bound-relaxation} thus, showing that it is sufficient to consider the cost function in the restricted setup.



\subsubsection{Unrestricted setup}
Recall that results and users are located in sets ${M}$ and $\real$ respectively and function $\text{OPT}: \real \rightarrow M$ maps every user to its optimal result(ad). Recall that the dis-utility of an user $u$ from a result $m$ is defined in Definition \ref{def:disutility_user_result}.

\subsubsection{Restricted setup}

This setup is very similar to the setup described except the fact that users and results(ads) lie on the same set $\real$. Recall from the description in Section \ref{subsec:user_results_restricted}, the dis-utility of an user $u \in \real$ from a result $a \in \real$ is given by $\mathfrak{h}(|u-a|)$ for some function $\mathfrak{h}(.)$ satisfying equations \eqref{item:first-propertyh} and \eqref{item:second-property-h}.

\subsubsection{The space of server/user actions}
Recall that the goal is to determine a mechanism that has the following ingredients:


\begin{enumerate}
\item A set of signals $\Signals$.
\item The action of users, which involves choosing a signal from a distribution over signals. We use $\dsig{u}$ for $u\in \real$ to denote the distribution of the signals sent by user $u$. This distribution is supported on $\Signals$.
\item The distribution over actions of the server, $\dsigq{s}$ when it receives $s\in \Signals$. This distribution denoting the distribution of the result set returned by the server given signal $s$ may be supported on either $\real^k$ or $M^k$, for the restricted setup and unrestricted setup respectively.
\end{enumerate}

The optimal mechanism is computed by jointly  optimizing over the tuple $(Z,\Pname,\Qname)$.

And thus, we define the set of server responses by $\Qrobs_{\text{unrestricted},\Signals}$ and $\Qrobs_{\text{restricted},\Signals}$ for unrestricted and restricted setup respectively.
\begin{itemize}
    \item $\Qrobs_{\text{unrestricted},\Signals}:=\{ \Qname | \forall s\in \Signals, \dsigq{s}\text{~is a distribution on $M^{\K}$} \}$. 
    \item $\Qrobs_{\text{restricted},\Signals}:=\{ \Qname | \forall s\in \Signals, \dsigq{s}\text{~is a distribution on $\real^{\K}$} \}$.
\end{itemize}

In any {\em feasible geographic DP mechanism}, the  user behavior should satisfy $\epsilon$-geographic differential privacy: for any $u_1,u_2\in \real$, it should hold that 
$P_{u_1}(S) \leq P_{u_2}(S) e^{\epsilon|u_1-u_2|}$ where $S$ is any measurable subset of $Z$.
For any fixed response size $k$, in order to maximize utility while ensuring the specified level of privacy, the goal is to minimize the disutility of the user from the result that the gives the user minimum dis-utility where the minimisation is the worst case user $u$ in ${\real}$.


\subsubsection{Cost functions in both the setups}{\label{appendix-sec:objective}} 

For the unrestricted and restricted setups, we define the cost functions $\funforiggeo{(d,U)}{\Ads}{\epsilon}{\K}{\mathfrak{h}(.)}$ and $\funforiggeo{(d,U)}{\Ads}{\epsilon}{\K}{\mathfrak{h}(.)}$ respectively below. Recall that $Z$ may denote any set.  

\[  \funforiggeo{(d,U)}{\Ads}{\epsilon}{\K}{\mathfrak{h}(.)}
\defeq \inf_{\Signals}\inf_{\substack{\Pname\in \Probsgeo{\Signals}\\ \Qname\in \Qrobs_{\text{unrestricted},\Signals}}} \sup_{u \in \real} \Exxlimits{s\sim \dsig{u}, \vect{a}\sim \dsigq{s}}{ \min_{a\in \texttt{Set}(\vect{a})}\left(\Disutility^{\mathfrak{h}(.)}(u,a)\right) }, \]

\[  \funfnewgeo{(d,U)}{\Ads}{\epsilon}{\K}{\mathfrak{h}(.)}
\defeq \inf_{\Signals}\inf_{\substack{\Pname\in \Probsgeo{\Signals}\\ \Qname\in \Qrobs_{\text{restricted}, \Signals}}} \sup_{u \in \real} \Exxlimits{s\sim \dsig{u}, \vect{a}\sim \dsigq{s}}{ \min_{a\in \texttt{Set}(\vect{a})}\mathfrak{h}\left(|u-a|\right) }, \text{ where }\]

$\Probsgeo{\Signals}:=\{ \Pname | \forall u\in \real, \dsig{u}\text{~is a distribution on $\Signals$,} 
    \text{~and $\epsilon$-geographic differential privacy is satisfied} \}$.

We state a theorem upper bounding $\funforiggeo{U}{\Ads}{\epsilon}{\K}{\mathfrak{h}(.)}$ by $\funfnewgeo{U}{\Ads}{\epsilon}{\K}{\mathfrak{h}(.)}$.

\begin{theorem}{\label{thm:upper-bound-relaxation}}
    For any $\mathfrak{h}(.)$ satisfying equation \eqref{item:first-propertyh}, we have $\funforiggeo{(d,U)}{\Ads}{\epsilon}{\K}{\mathfrak{h}(.)} \leq \funfnewgeo{(d,U)}{\Ads}{\epsilon}{\K}{\mathfrak{h}(.)}$.
\end{theorem}

\begin{proof}

    Recall that
    \[  \funforiggeo{(d,U)}{\Ads}{\epsilon}{\K}{\mathfrak{h}(.)}
    \defeq \inf_{\Signals}\inf_{\substack{\Pname\in \Probsgeo{\Signals}\\ \Qname\in \Qrobs_{\text{unrestricted},\Signals}}} \sup_{u \in \real} \Exxlimits{s\sim \dsig{u}, \vect{a}\sim \dsigq{s}}{ \min_{a\in \texttt{Set}(\vect{a})}\left(\Disutility^{\mathfrak{h}(.)}(u,a)\right) },\]
    \[  \funfgeo{(d,U)}{\Ads}{\epsilon}{\K}{\mathfrak{h}(.)}
    \defeq \inf_{\Signals}\inf_{\substack{\Pname\in \Probsgeo{\Signals}\\ \Qname\in \Qrobs_{\text{restricted},\Signals}}} \sup_{u \in \real} \Exxlimits{s\sim \dsig{u}, \vect{a}\sim \dsigq{s}}{ \min_{a\in \texttt{Set}(\vect{a})}\mathfrak{h}\left(|u-a|\right) }, \]
    and
    \[\Disutility^{\mathfrak{h}(.)}(u,m) := \inf\limits_{u': OPT(u')=m} \mathfrak{h}(|u-u'|).\]

    So we need to prove
    \begin{align*}
    & \inf_{\Signals}\inf_{\substack{\Pname\in \Probsgeo{\Signals}\\ \Qname\in \Qrobs_{\text{unrestricted},\Signals}}} \sup_{u \in \real} \Exxlimits{ s\sim \dsig{u}, \vect{a}\sim \dsigq{s}}{ \min_{a\in \texttt{Set}(\vect{a})}\left(\inf\limits_{u': OPT(u')=a} \mathfrak{h}(|u-u'|)\right) } \\
    \leq &
    \inf_{\Signals}\inf_{\substack{\Pname\in \Probsgeo{\Signals}\\ \Qname\in \Qrobs_{\text{restricted},\Signals}}} \sup_{u \in \real} \Exxlimits{ s\sim \dsig{u}, \vect{a}\sim \dsigq{s}}{ \min_{a\in \texttt{Set}(\vect{a})}\mathfrak{h}\left(|u-a|\right) }
    \end{align*}

    It is sufficient to show, for any $\Signals$ and $\Pname\in \Probsgeo{\Signals}$, 
    
    \begin{align}
    & \inf_{\Qname\in \Qrobs_{\text{unrestricted},\Signals}}  \sup_{u \in \real}\Exxlimits{s\sim \dsig{u}, \vect{a}\sim \dsigq{s}}{ \min_{a\in \texttt{Set}(\vect{a})}\left(\inf\limits_{u': OPT(u')=a} \mathfrak{h}(|u-u'|)\right) }  \nonumber \\
    \leq &
    \inf_{\Qname\in \Qrobs_{\text{restricted},\Signals}} \sup_{u \in \real} \Exxlimits{ s\sim \dsig{u}, \vect{a}\sim \dsigq{s}}{ \min_{a\in \texttt{Set}(\vect{a})}\mathfrak{h}\left(|u-a|\right) }  \label{sc:eq1} 
    \end{align}

    For $\Qname\in \Qrobs_{\text{restricted},\Signals}$, since $\Disutility^{\mathfrak{h}(.)}(u,m) := \inf\limits_{u': OPT(u')=m} \mathfrak{h}(|u-u'|)$, we have
    \begin{align}\label{sc:eq2}
     & \Exxlimits{s\sim \dsig{u}, \vect{a}\sim \dsigq{s}}{ \min_{a\in \texttt{Set}(\vect{a})}\mathfrak{h}\left(|u-a|\right) }   \geq  
    \Exxlimits{ s\sim \dsig{u}, \vect{a}\sim \dsigq{s}}{ \min_{a\in \texttt{Set}(\vect{a})}\Disutility^{\mathfrak{h}(.)}(u,\text{OPT}(a)) }.\nonumber\\
    &\hspace{-3 em}\implies \sup_{u \in \real} \Exxlimits{s\sim \dsig{u}, \vect{a}\sim \dsigq{s}}{ \min_{a\in \texttt{Set}(\vect{a})}\mathfrak{h}\left(|u-a|\right) }   \geq  \sup_{u \in \real} \Exxlimits{s \sim \dsig{u}, \vect{a}\sim \dsigq{s}}{ \min_{a\in \texttt{Set}(\vect{a})}\Disutility^{\mathfrak{h}(.)}(u,\text{OPT}(a))}.
    \end{align}

    Given $\Qname\in \Qrobs_{\text{restricted},\Signals}$, we draw $\vect{a}$ from $\Qname$, and let $\vect{b}=[\text{OPT}(a_1), \text{OPT}(a_2), \ldots, \text{OPT}(a_k)]^T$. 
    Suppose the distribution of $\vect{b}$ is $\Qname'$, and we have
    \begin{align}\label{sc:eq3}
    \Exxlimits{s\sim \dsig{u}, \vect{a}\sim \dsigq{s}}{ \min_{a\in \texttt{Set}(\vect{a})}\Disutility^{\mathfrak{h}(.)}(u,\text{OPT}(a)) }
    =
    \Exxlimits{s\sim \dsig{u}, \vect{b} \sim \dsigqprime{s}}{ \min_{\vect{b} \in \texttt{Set}(\vect{b})}\Disutility^{\mathfrak{h}(.)}(u,b) }.
    \end{align}

    Note that $\Qname' \in \Qrobs_{\text{unrestricted},\Signals}$, so combining Inequality \ref{sc:eq2} and Equality \ref{sc:eq3}, we have Equation \ref{sc:eq1}, which finishes the proof.

\end{proof}

And thus, it is sufficient to study $\funfnewgeo{(d,U)}{\Ads}{\epsilon}{\K}{\mathfrak{h}(.)}$ (defined as $\funfgeo{(d,U)}{\Ads}{\epsilon}{\K}{\mathfrak{h}(.)}$ in Section \ref{sec:simmodel}). 
\subsection{Noise satisfying Monotone Hazard Rate property}  \label{sec-appendix:mhr}


Let $Y$ denote the random noise with density $g$. We assume $Y$ is symmetric about the origin, and let $X = Y^{+} \cup \{0\}$. Let $f$ denote the density function of $X$ (so that $f(x) = 2 g(x)$ for $x \ge 0$), and let $F(x) = \Pr{X \ge x}$. We assume that $\E{X} = \frac{1}{\epsilon}$.\footnote{This implies that a large $\epsilon$ is equivalent to the magnitude of noise being smaller and vice-versa. Although this distribution does not satisfy $\epsilon$ geographic differential privacy, this follows a similar trend w.r.t $\epsilon$. } We assume $f$ is continuously differentiable and log-concave. By~\cite{BagnoliB}, we have $F$ is also log-concave.  Note that several natural distributions such as Exponential (Laplace noise) and Gaussian are log-concave. 

We are interested in choosing $2K-1$ values $S = \{-a_{K-1}, -a_{K-2}, \ldots, - a_1, 0, a_1, \ldots, a_{K-1}\}$ such that for a random draw $y \sim Y$, the expected error in approximating $y$ by its closest point in $S$ is small. For $i = 0,1,2,\ldots,K-1$, we will choose $a_i = F^{-1} \left(1 - \frac{i}{K}\right)$. Let $\phi = \Exx{x \sim X}{\min\limits_{v \in S} |v - x|}$. Note that the error of $Y$ with respect to $S$ is exactly $\phi$.

Our main result is the following theorem:
\begin{theorem}{\label{theorem:mhr}}
$ \phi = O\left(\frac{\log K}{K\epsilon}\right)$.
\end{theorem}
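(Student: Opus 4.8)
The plan is to reduce the claim to two tail estimates for the log-concave survival function $F$. Since $X\ge 0$, for every $x\ge 0$ the nearest point of $S$ is one of $0=a_0<a_1<\cdots<a_{K-1}$ (the negative atoms are never closer), so $\phi$ is exactly the mean quantization error of $X$ against these $K$ points. I would split the half-line into the $K-1$ bounded cells $[a_j,a_{j+1}]$ and the tail cell $[a_{K-1},\infty)$. On $[a_j,a_{j+1}]$ the error is at most $(a_{j+1}-a_j)/2$, while the choice of quantiles gives $\Pr[a_j\le X\le a_{j+1}] = F(a_j)-F(a_{j+1}) = 1/K$; on the tail cell the error is $x-a_{K-1}$. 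Hence, writing $a:=a_{K-1}$ and assuming $K\ge 2$,
\[
\phi \;\le\; \frac{1}{2K}\sum_{j=0}^{K-2}(a_{j+1}-a_j)\;+\;\mathbb{E}\big[(X-a)^{+}\big] \;=\; \frac{a}{2K}\;+\;\int_{a}^{\infty}F(x)\,dx,
\]
using the telescoping sum $a_{K-1}-a_0=a$ and $\mathbb{E}[(X-a)^{+}]=\int_a^\infty F$. So it suffices to prove (i) $a=F^{-1}(1/K)=O(\log K)$ and (ii) $\int_a^\infty F(x)\,dx=O(1/K)$.

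For (i), the key point is that a plain Markov bound gives only $F(a)\le 1/a$, i.e.\ $a\le K$, which is far too weak; log-concavity upgrades this once $F$ has dropped below $1$. Since $\log F$ is concave with $\log F(0)=0$, on $[0,a]$ it lies above the chord joining $(0,0)$ and $(a,\log(1/K))$, so $F(x)\ge (1/K)^{x/a}$ there. Integrating and using $1=\mathbb{E}[X]=\int_0^\infty F\ge\int_0^a F$ yields $1\ge \frac{a(1-1/K)}{\ln K}$, i.e.\ $a\le \frac{\ln K}{1-1/K}\le 2\ln K$.

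For (ii), I would use concavity of $\log F$ in the other direction. For any $x\ge a$ with $F(x)>0$, the slope of $\log F$ across $[a,x]$ is at most its slope across $[0,a]$, which equals $\frac{\log(1/K)-0}{a}=-\frac{\ln K}{a}$; hence $F(x)\le \frac{1}{K}\exp\!\big(-\frac{\ln K}{a}(x-a)\big)$, and this holds trivially wherever $F(x)=0$. Integrating, $\int_a^\infty F\le \frac{1}{K}\cdot\frac{a}{\ln K}=\frac{a}{K\ln K}\le \frac{2}{K}$ by part (i). Combining with the displayed bound, $\phi\le \frac{a}{2K}+\frac{2}{K}\le \frac{\ln K}{K}+\frac{2}{K}=O\big(\frac{\log K}{K}\big)$.

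The only genuine obstacle here is conceptual rather than computational: one must resist bounding the tail by Markov/Chebyshev (which loses exactly the logarithmic savings) and instead extract, from log-concavity of $F$ together with the normalization $\mathbb{E}[X]=1$, the fact that $F$ must decay at a genuinely exponential rate — of order at least $1/(2\log K)$, hence effectively constant — once it has fallen to $1/K$. Every use of log-concavity is through the elementary "increasing slopes of chords" characterization, so no differentiability is needed beyond continuity of $F$ (which is what makes the quantile identities $F(a_j)=1-j/K$ exact). Minor care is needed only in reading $F^{-1}$ as any valid preimage when $F$ has flat stretches or bounded support, and in noting that the bound is understood for $K\ge 2$.
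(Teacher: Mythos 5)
Your proof is correct. The overall skeleton coincides with the paper's: both reduce the problem to the bound $\phi \lesssim \frac{a}{K} + \int_a^\infty F(x)\,dx$ with $a = F^{-1}(1/K)$ (you get it by a cell-by-cell quantization argument in $x$-space with the telescoping sum; the paper gets the same thing by rounding down in quantile space, writing $\phi \le \int_0^1 F^{-1} - \frac1K\sum_i F^{-1}(i/K)$). Where you genuinely diverge is in the two tail estimates. For $\int_a^\infty F = O(1/K)$ the paper invokes the fact that $\int_r^\infty F$ is itself log-concave, so the mean residual life $\ell(r) = \int_r^\infty F / F(r)$ is non-increasing, giving $\int_a^\infty F \le F(a)\,\mathbb{E}[X] = 1/K$ exactly; you instead derive the pointwise exponential tail $F(x) \le \frac1K e^{-(x-a)\ln K/a}$ from the three-chords inequality for the concave function $\log F$ and integrate, losing only a constant. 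For $a = O(\log K)$ the paper works with $h = -\log F$, its derivative at the $1/e$-quantile, and a tangent-line bound, which uses differentiability; your chord bound $F(x) \ge K^{-x/a}$ on $[0,a]$ combined with $\int_0^a F \le 1$ gives $a \le \ln K/(1 - 1/K)$ with a cleaner constant and no derivatives. The net effect is that your argument needs only the "decreasing slopes of chords" characterization of concavity of $\log F$ plus the normalization $\mathbb{E}[X]=1$, so it is more elementary and slightly more general (no smoothness of $f$ required, and no appeal to the secondary fact that $\int_r^\infty F$ is log-concave), at the cost of marginally worse constants in the tail integral.
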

\begin{proof}
Let $G(z) = F^{-1}(z)$ for $z \in [0,1]$. For upper bounding $\phi$, we map each $x \sim X$ to the immediately smaller value in $S$. If we draw $z \in [0,1]$ uniformly at random, the error is upper bounded as:
$$ \phi \le \int_0^1 G(z) dz  - \sum_{i=1}^K \frac{1}{K} G\left(\frac{i}{K}\right) \le \int_0^{1/K} \left( G(z) - G(1/K) \right) dz + \frac{1}{K} G(1/K).$$
Let $q = G(1/K).$ Then the above can be rewritten as:
$$ \phi \le \frac{q}{K} + \int_q^{\infty} F(x) dx. $$ 
Next, it follows from~\cite{BagnoliB} that if $F$ is log-concave, then so is $\int_r^{\infty} F(x) dx$. This means the function 
$$ \ell(r) = \frac{\int_r^{\infty} F(x) dx}{F(r)} $$
is non-increasing in $r$. Therefore, 
$$ \int_q^{\infty} F(x) dx \le F(q) \int_0^{\infty} F(x) dx = \frac{1}{K} \E{X} = \frac{1}{\epsilon K}.$$

Let $h(x) = -\log F(x)$. Then, $h$ is convex and increasing. Further, $h(q) = \log K$. Let $s = F^{-1}(1/e)$ so that $h(s) = 1$. Since
$$ h(q) - h(s) \ge (q-s) h'(s),$$ 
we have 
$$ q - s \le \frac{\log K}{h'(s)}.$$
Further, $h(s) \le s h'(s)$ so that 
$$ \frac{1}{h'(s)} \le s.$$
Since $F(s) = 1/e$ and $\E{X} = \frac{1}{\epsilon} \ge \int_0^s F(x) dx$, we have $s \le \frac{e}{\epsilon}$. Therefore, $h'(s) \ge 1/{e\epsilon}$. Putting this together,
$$ q \le \frac{s}{\epsilon} + \frac{\log K}{h'(s)} \le \frac{e}{\epsilon}(1+ \log K) = O(\frac{\log K}{\epsilon}).$$

Therefore,
$$ \phi \le \frac{q}{K} + \int_q^{\infty} F(x) dx = O\left(\frac{\log K}{K\epsilon}\right) + \frac{1}{K\epsilon}$$
completing the proof.
\end{proof}

\subsection{User Features lying on some high dimensional space}{\label{sec-appendix:high-dim-space}}

In this subsection, we relax the assumption that users lying on an infinite dimensional line, rather for every user $u$ belongs to some set $U$ and every movie $m$ belongs to some set $M$ and consider some function $\kappa: U \rightarrow \mathbb{R}^d$ mapping every user to some user feature vector and denote the dis-utility of an user $u$ from a movie $m$ by $d(\kappa(u),m)$. Observe that this could be any complex function in high dimensions which could be captured by some machine learning model.
We now make the following assumption for movies lying in some vicinity of user $u$. Informally, this means that the the user does not get dis-utility from every possible feature but only gets a dis-utility only from some $d'$ directions. 

\begin{assumption}{\label{ass:high_dimension}}
    Fix any user $u \in U$. There exists a subset of movies $M_u \subseteq M$ (call it movies in neighbourhood of user $u$), a matrix $P_u \in \mathbb{R}^{d' \times d}$  function $\lambda_u: M_u \rightarrow \mathbb{R}^{d'}$ satisfying the following properties
    \begin{itemize}
        \item The disutility of users $u'\in U$ lying in a neighbourhood of $u$ i.e. $|\kappa(u')-\kappa(u)|_1<\delta$ from movies $m \in M_u$ may be approximated by $d(u', m) \approx ||P_u\cdot \kappa(u') - \lambda_u(m)||_1$.
        \item The function $\lambda_u$ is invertible i.e. the set of movies $M_u$ is densely populated and movies exist along most directions. 
    \end{itemize}
\end{assumption}

These assumptions imply that the functions described above hold not just for user $u$ but is also true for users $u'$ with $\kappa(u')$ lying in a neighbourhood of $\kappa(u)$. The function $\lambda_u$ denotes a map from the movie space to the feature space for an user $u \in U$. $P_u$ effectively captures the linear combination of the features that play a role in the dis-utility of user $u$.


Note that some of these assumptions a bit tight and may not be true in reality but in this section we analyse the server response and user action under this mechanism under the Gaussian noise addition mechanism. We analyse the Gaussian mechanism as it is easier to analyse since additions of Gaussian is also a Gaussian. Also we make an assumption that noise parameter $1/ \epsilon << \delta$.

\textbf{Notations:} For some $Y \sim \mathcal{N}(0,\sigma^2)$ and suppose $X= |Y|$. Now denote $F_{\sigma}(x) = \text{Pr}(X \geq x)$ for some $x>0$ and observe that $F_{\sigma}$ is invertible from $(0,1] \rightarrow [0,\infty)$. We also extend its definition to $(0,2)$ by defining $F^{-1}_{\sigma}(2-x) = - F^{-1}_{\sigma}(x)$ for $x>0$

Let us denote the sum of all squares of all entries in row $i$ of matrix $P$ is given by $P^{(2)}[i,:]$ and denote all integers from $a$ to $b$ by $[a \ldots b]$ and $[a \ldots b]^k$ denotes the set of all $k$ dimensional vectors in with each component taking an integral value from $a$ to $b$.



\textbf{User's Action:} 

An user $u \in U$ adds Gaussian noise with parameter $\epsilon$ i.e. $\mathcal{N}(0,\frac{1}{\epsilon^2} {I})$ to its feature $\kappa(u)$ and sends it to the server, let us call it distribution $\mathcal{F}_u$.

\textbf{Server's Action:} 

Suppose the server receives a signal $\hat{f}$. The signal sends back $k= (2k'+1)^{d'}$ along each of the the $d'$ dimensions which we describe below. Further consider some user $u'$ whose feature vector $\kappa(u')$ lies in $1/ \epsilon$ vicinity of signal $\hat{f}$. Let us index the movies by $\mathcal{M}_{\hat{f}} = [\mathfrak{m}_\vect{i}]_{\vect{i} \in {[-k' \ldots k']}^{d'}}$.

For every $\vect{i} \in {[-k' \ldots k']}^{d'}$, define $\mathfrak{m}_\vect{i}$ as follows:

\begin{align}
    m_{\vect{i}} := \lambda_{u'}^{-1}(P_{u'} \hat{f} + \vect{v}) \text{ for }& \vect{v} \in \mathbbm{R}^{d'} \\
    & \text{ where } v_j := F^{-1}_{ \frac{P^{(2)}_{u'}[j,:]}{\epsilon^2} }(1-i_j/k') \text{ } \forall j \in [1,\ldots,d']
\end{align}


\begin{lemma}
    $\mathbbm{E}_{\hat{f} \sim \mathcal{F}_u} \left[\min\limits_{m \in \mathcal{M}_{\hat{f}}} d(u,m)\right] \overset{(a)}{\approx} O\left(d' \frac{\log k'}{\epsilon k'}\right) = O\left(\frac{\log k}{\epsilon k^{1/d'}}\right)$
\end{lemma}

\begin{proof}
    This follows since $f'$ is a Gaussian Random vector and thus $j^{th}$ component of $P_{u'} \hat{f}$ is given by Gaussian random variable with variance $\frac{P^{(2)}_{u'}[j,:]}{\epsilon^2}$. The first point in Assumption \ref{ass:high_dimension} implies the disutility may be approximated by $||P_{u'}\cdot \kappa(u) - \lambda_{u'}(m)||_1$ since the 

    Now apply Theorem \ref{theorem:mhr} and since the placement of points is identical to that in Section \ref{sec-appendix:mhr}, we get the desired equality in $(a)$.
\end{proof}

\subsection{Additional problems that may be be formulated as a DILP}{\label{sec-appendix:weak_duality_application}}
 We now discuss two scenarios where a problem maybe solved by formulation as an optimization problem over function spaces involving constraints on both derivatives and integrals.

    \subsubsection{A problem from job scheduling}

    Job scheduling has been a well-studied problem and has attracted a lot of attention in theory of operations research starting from \cite{conway1967theory,baker1974introduction}. In \cite{anderson1981new}, a new continuous time model was proposed for the same described formally in \cite[Section 2]{anderson1981new} and the optimization variable functions are given by $\{x_i(.)\}_{i=1}^{N}: \mathbb{R} \rightarrow \mathbb{R}^{+}, \{y_{i,j}(.),u_{i,j}(.)\}_{j=1,i=1}^{N_i,N}: \mathbb{R} \rightarrow \mathbb{R}^{+}$. In this model, we need to produce multiple items with each item goes through multiple job shops for an operation to be performed. and there is a continuous demand for each item. We aim to minimise the cumulative backlog for items in continuous time.

    We now introduce some notations for the same by restating the problem in \cite{anderson1981new}.

    \begin{itemize}
        \item $N$ : number of parts
        \item $n_i$: number of operations required by part $i$ 
        \item $M$: number of machines
        \item $m_{i,j}$: machine on which the $j^{th}$ operation is performed for part $i$
        \item $r_{i,j}$: Maximum rate of procession on $m_{i,j}$ of the $j^{th}$ operation on part $i$
        \item $s_{i,j}$ Initial stock of part $i$ between the $j^{th}$ and $(j+1)^{th}$ operation.
        \item $b_i$: Initial backlog of part $i$
        \item $d_i(t)$: Rate of demand of part $i$ at time $t$
        \item $k_i$: cost per unit time of unit backlog of part $i$
        \item $T$ time horizon
    \end{itemize}

    Observe that the optimization variables in this problem are $x_i(t):[0,1] \rightarrow \mathbb{R}^{+}$ for $i \in [n]$ and $y_{i,j}(t):[0,1] \rightarrow \mathbb{R}^{+}$ for $j \in [n_i], i \in [n]$

    As discussed in \cite{anderson1981new}, we now formulate our optimization problem below.

    \begin{equation}{\label{eqn:optim_formulation}}
   \left\{
    \begin{aligned}
    \min & \int_{t \in 0}^{T} \sum_{i=1}^{n} k_i(t)x_i(t) du\\
    \textrm {s.t. } \quad & x'_i(t) = d_i(t) - r_{i,n_i}u_{i,n_i}(t) \text{ }\forall j \in [n_i]; i \in [n];\\
     & y'_{i,j}(t) = r_{i,j}u_{i,j}(t) - r_{i,j+1} u_{i,j+1}(t) \text{ }\forall j \in [n_i]; i \in [n];\\
    & x_i(t) \geq 0; y_{i,j}(t) \geq 0;  u_{i,j}(t) \geq 0\text{ } \forall j \in [n_i]; i \in [n]; t \in [0,T]\\
    & \sum_{i,j: m_{i,j}=k} u_{i,j}(t) \leq 1 \text{ } \forall k \in [M] \\
    & x_i(0)=b_i ; y_{i,j}(0) = s_{i,j} ; \forall j \in [n_i]; i \in [n];
    \end{aligned}
    \right.
    \end{equation}

    Observe that since it involves constraints on both the derivatives and integrals, this maybe modelled as a differential integral linear program (DILP) and our weak duality framework may be a useful tool in analysing the same. 


    \subsubsection{A problem from control theory}

    In control theory \cite{evans1983introduction}, we often have a state $\vect{x}(t)$ whose evolution is given by a differential equation. We redefine the following equation from \cite[Chapter 4]{evans1983introduction} below to characterise its evolution.

    \begin{equation}{\label{eqn:ode_equation}}
            \left\{
            \begin{aligned}
                & \vect{x}(t) = \vect{f}(\vect{x}(t),{\alpha}(t))\\
                & \vect{x}(0) = x^{0}
            \end{aligned}
            \right.
    \end{equation}

    Here $x^0 \in \mathbb{R}^n$, $A \in \mathbb{R}^m$, $\vect{f} : \mathbb{R}^n \times A \rightarrow \mathbb{R}, \alpha : [0, \infty) \rightarrow A$ is the control, and $\vect{x} : [0, \infty) \rightarrow \mathbb{R}^n$ is the response of the system. Denote the set of admissible controls by $\mathcal{A} = \{\alpha(.) : [0, \infty) \rightarrow A | \alpha(.) \text{ is measurable}\}$ 

    We also introduce the payoff functional $(P)$ 
    $$P[\alpha(.)] =  \int_{t=0}^{T} r(x(t), \alpha(t)) dt + g(x(T))$$,

where the terminal time $T > 0$, running payoff $r : \mathbb{R}^n \times A \rightarrow \mathbb{R}$ and terminal payoff $g : \mathbb{R}^n \rightarrow \mathbb{R}$ are given. Observe that if the function $f(.)$ is linear and the running payoff is linear, we may represent it as a differential integral linear program.

It may also be typical to have constraints on the the control input $\alpha$, for example there could be some constraints on its rate of growth or some constraints on its Lipschitz continuity.

\section{Proofs of skipped lemmas and theorems}

\subsection{Simulating the server: Detailed proof of Theorem \ref{thm:singpoint:sim}}\label{sec-appendix:simproof}



We first show it is sufficient to consider mechanisms in which servers directly return the received signal from the user, thus removing the server from the picture.

Recall that $\Probs{\Signals}$ is the set of differential private mechanisms that adopt signal set $\Signals$. Then $\Probs{\Kads}$ is the set of differential private mechanisms in which users pick signals from $\Kads= \real^k$.


\begin{theorem*}[Restatement of Theorem \ref{thm:singpoint:sim}]   

We have
\begin{align*}
\funfgeo{(d,\Users)}{\Ads}{\epsilon}{\K}{\mathfrak{h}(.)} =\inf_{\Pname\in \Probs{\real^{\K}}} \sup_{u \in \real} \Exx{\vect{a} \sim \dsig{u}}{ \min_{a\in \texttt{Set} (\vect{a})}\mathfrak{h}(|u-a|) }.
\end{align*}
\end{theorem*}

\begin{proof}

Fix $\Signals,\Pname\in \Probs{\Signals},\Qname\in \Qrobs_{\Signals}$ and recall that that $\mathcal{A} = \real^k$ from Appendix \ref{appendix-sec:notations}. For any $\C\in \measure{\Kads}$, let
\begin{align*}
    \Ppxx{u}{\C}=\Prxx{ s\sim \dsig{u}, \vect{a}\sim \dsigq{s}}{\vect{a}\in \C},
\end{align*}
which is the probability that user $u$ receives a result set in $\C$.

Then for any $u_1,u_2\in \real, \C\in \measure{\Kads}$, we have
\begin{align*}
    \Ppxx{u_1}{\C}
    =\Prxx{ s\sim \dsig{u_1}, \vect{a}\sim \dsigq{s}}{\vect{a}\in \C}  =&\Exx{s\sim \dsig{u_1}}{\Prxx{ \vect{a}\sim \dsigq{s}}{\vect{a}\in \C}}    \\
    =& \int_{s\in \Signals}{\left(\Prxx{ \vect{a}\sim \dsigq{s}}{\vect{a}\in \C}\cdot \dsig{u_1}(ds) \right)}    \\
    \leq & e^{\epsilon\cdot|u_1-u_2|}\int_{s\in \Signals}{\left(\Prxx{ \vect{a}\sim \dsigq{s}}{\vect{a}\in \C}\cdot \dsig{u_2}(ds) \right)}    \\
    = & e^{\epsilon\cdot|u_1-u_2|}\Exx{s\sim \dsig{u_2}}{\Prxx{ \vect{a}\sim \dsigq{s}}{\vect{a}\in \C} }    \\
    = & e^{\epsilon\cdot |u_1-u_2|}\Exx{s\sim \dsig{u_2},\vect{a}\sim \dsigq{s}}{\I{\vect{a}\in \C} }    \\
    = & e^{\epsilon\cdot |u_1-u_2|} \Ppxx{u_2}{\C},
\end{align*}
so $\Ppname\in \Probs{\Kads}$ because $\Ppxx{u}{\cdot}\in \probm{\Kads}$ for any $u\in \real$.

At the same time,
\begin{align*}
\Exx{s\sim \dsig{u}, \vect{a}\sim \dsigq{s}}{ \min_{a\in \texttt{Set}(\vect{a})}\mathfrak{h}(|u-a|) }
=\Exx{\vect{a}\sim \dsighat{u}}{ \min_{a\in \texttt{Set}(\vect{a})}\mathfrak{h}(|u-a|) } \text { 
 so we have},
\end{align*}
\begin{align*}
\funfgeo{(d,\Users)}{\Ads}{\epsilon}{\K}{\mathfrak{h}(.)} = \inf_{\Pname\in \Probs{\Kads}} \sup_{u \in {\real}} \Exx{\vect{a}\sim \dsig{u}}{ \min_{a\in \texttt{Set}(\vect{a})}\mathfrak{h}(|u-a|) }.
\end{align*}

\end{proof}







It is sufficient to assume the server directly sends back the received signal, and we care about the user with largest cost

\subsection{Dual and primal DILPs: Proof of Lemmas \ref{lemma:diff-privacy-condition}, Lemma \ref{lemma:diff-privacy-condition_gen} and \ref{lemma:arbitrary_aprroximation}}{\label{appendix-sec:primal-feasible}}
\subsubsection{Proof of Lemma \ref{lemma:diff-privacy-condition}}{\label{sec:diff-privacy-condition-proof}}

\begin{lemma*}
    Suppose, $P_u$ has a probability density function given by $g(u,.): \mathbb{R}^k \rightarrow \mathbb{R}$ for every $u \in \mathbb{R}$. Then, $P$ satisfies $\epsilon$-geographic differential privacy iff $\max(|\overline{g}_u(u,\vect{x})|, |\underline{g}_u(u,\vect{x})|) \leq \epsilon g(u,\vect{x}) \text{ } \forall u \in \mathbb{R};\forall x \in \mathbb{R}^k$ \footnote{$\underline{g}_u(u,\vect{x})$, $\overline{g}_u(u,\vect{x})$ denote the lower and upper partial derivative w.r.t. $u$}   
\end{lemma*} 

\begin{proof}

\textbf{We first prove the only if statement.}

Here assuming $\epsilon$- geographic differential privacy, we prove the desired constraint on $g(u,.)$.

    Observe that since, $\{P_u\}_{u \in \mathbb{R}}$ satisfies $\epsilon$ - geographic-differential privacy, we must have $g(u_1, \vect{x}) \leq e^{\epsilon|u_1-u_2|} g(u_2, \vect{x})$.

  Thus, applying $\log$ on both sides we get [assuming $g(u,x)\neq 0$], 

  \begin{align*}
      & \log g(u + \delta u, \vect{x}) - \log g(u, \vect{x}) \leq \epsilon|\delta u|\\
      \implies & \frac{\max(|\overline{g}_u(u,\vect{x})|, |\underline{g}_u(u,\vect{x})|)} {g(u,\vect{x})} \leq \epsilon
  \end{align*}
The last implication follows on limiting $\delta u$ towards zero and applying the lower and upper limits respectively and thus we prove the only if statement.

However, if $g(u,x) = 0$, we get the following 

\begin{align*}
      & g(u + \delta u, \vect{x}) - g(u,x)\leq 0\\
      \implies & {\max(|\overline{g}_u(u,\vect{x})|, |\underline{g}_u(u,\vect{x})|)} \leq 0
\end{align*}

The last inequality follows on applying $\delta u$ tending to 0 and applying upper and lower limits respectively.

\textbf{We now prove the if statement.}

Here assuming, $\max(|\overline{g}_u(u,\vect{x})|, |\underline{g}_u(u,\vect{x})|) \leq \epsilon g(u,\vect{x})$ we prove $\epsilon$-geographic differential privacy.

We first show that $g(u,\vect{x})$ is continuous in $u$. Observe that $\liminf\limits_{\delta u \to 0} g(u+\delta u) - g(u) = \lim\limits_{\delta u \to 0} \delta u\times \liminf\limits_{\delta u \to 0} \frac{ g(u+\delta u) - g(u)}{\delta u} = 0$ as $|\underline{g}_u (u,\vect{x})|= |\liminf\limits_{\delta u \to 0} \frac{ g(u+\delta u) - g(u)}{\delta u}| \leq \epsilon g(u,\vect{x})$. Similarly, using the bound on upper derivative, we show that $\limsup\limits_{\delta u \to 0} g(u+\delta u) - g(u) = 0$ and thus, we have continuity of $g(u,\vect{x})$ at $u$ for every $u \in \mathbb{R}^k$.

We first prove assuming $g(u,x) \neq 0 \forall u \in \mathbbm{R}$.
Now consider $u_1 >u_2$ for some $u_1,u_2 \in \mathbb{R}$.\\

\begin{align*}
    \log g(u_1,\vect{x}) - \log g(u_2, \vect{x}) \leq \int_{u = u_2}^{u_1} \frac{\overline{g}_u(u,\vect{x})}{g(u,\vect{x})} \leq \epsilon (u_1 -u_2) 
\end{align*}

The first inequality follows since we use an upper derivative.

\begin{align*}
    \log g(u_1,\vect{x}) - \log g(u_2, \vect{x}) \geq \int_{u = u_2}^{u_1} \frac{\underline{g}_u(u,\vect{x})}{g(u,\vect{x})} \geq -\epsilon (u_1 -u_2) 
\end{align*}

Thus, we get $g(u_1, \vect{x}) \leq e^{\epsilon|u_1-u_2|} g(u_2,\vect{x})$ on taking exponent on both sides 


We now consider the other case that $g(u,\vect{x})=0$
for some $u \in \mathbbm{R}$. We now show that this implies $g(u,\vect{x})=0 \text{ } \forall u \in \mathbbm{R}$. Suppose not and there exists some $u_1$ s.t. $g(u_1,\vect{x})>0$ and consider the largest $u<u_1$ s.t. $g(u,\vect{x})=0$ (call it $u_0$)

Now, observe the following observe that we define the limits over extended reals i.e. $\mathbbm{R} \cup \{-\infty,\infty\}$. Observe that\\

\begin{align*}
    \lim\limits_{u \to u_1} g(u,\vect{x}) & = \lim\limits_{u \to u_0} g(u,\vect{x}) + \lim\limits_{\substack{d^l \to u_0\\d^u \to u_1}} (\log g(d^u,\vect{x}) - \log g(d^l,\vect{x}))\\
    & \leq \lim\limits_{u \to u_0} g(u,\vect{x}) + \int_{u = u_0}^{u_1} \frac{\overline{g}_u(u,\vect{x})}{g(u,\vect{x})}\\
    & {\leq} \lim\limits_{u \to u_0} g(u,\vect{x}) + \epsilon (u_1-u_0)\\
    & \overset{(a)}{=} -\infty + \epsilon(u_1-u_0) = -\infty
\end{align*}

$(a)$ follows from the fact that $g(u,\vect{x})$ is continuous in $u$ and $g(u_0,\vect{x})=0$.

Since, $g(u,\vect{x})$ is a continuous function, this implies that $g(u_1,\vect{x})=0$. Since we can make this argument for every $u_1 \in \mathbbm{R}$, we argue that $g(u,\vect{x})=0 \text{ }\forall u \in \mathbbm{R}$ and thus satisfy the condition of $\epsilon$-geographic DP.



\end{proof}

\subsubsection{Proof of Lemma \ref{lemma:diff-privacy-condition_gen}}{\label{sec:diff-privacy-condition_gen_proof}}

\begin{lemma*}
    Suppose, $P_u$ has a probability density function given by $g(u,.): \mathbb{R}^k \rightarrow \mathbb{R}$ for every $u \in \mathbb{R}$. Then, $P$ satisfies $\mathfrak{g}(.)$-geographic differential privacy iff $\max(|\overline{g}_u(u,\vect{x})|, |\underline{g}_u(u,\vect{x})|) \leq \mathfrak{g}'(0) g(u,\vect{x}) \text{ } \forall u \in \mathbb{R};\forall x \in \mathbb{R}^k$ \footnote{$\underline{g}_u(u,\vect{x})$, $\overline{g}_u(u,\vect{x})$ denote the lower and upper partial derivative w.r.t. $u$} whenever $\mathfrak{g}(.)$ satisfies Assumption \ref{assn:g_defn}.   
\end{lemma*}


\begin{proof}

\textbf{We first prove the only if statement.}

Assuming $\mathfrak{g}(.)$ geographic differential privacy we prove the desired constraint on $g(u,.)$. 

    Observe that since, $\{P_u\}_{u \in \mathbb{R}}$ satisfies $\mathfrak{g}(.)$ geographic-differential privacy, we must have $g(u_1, \vect{x}) \leq e^{\mathfrak{g}(|u_1-u_2|)} g(u_2, \vect{x})$.

  Thus, applying $\log$ on both sides we get, 

  \begin{align*}
      & \log g(u + \delta u, \vect{x}) - \log g(u, \vect{x}) \leq \mathfrak{g}(|\delta u|)\\
      \implies & \frac{\max(|\overline{g}_u(u,\vect{x})|, |\underline{g}_u(u,\vect{x})|)} {g(u,\vect{x})} \leq \mathfrak{g}'(0)
  \end{align*}
The last implication follows on limiting $\delta u$ towards zero and $\mathfrak{g}(0)=0$ thus, applying the lower and upper limits respectively and thus we prove the only if statement.

However, if $g(u,x) = 0$, we get the following 

\begin{align*}
      & g(u + \delta u, \vect{x}) - g(u,x)\leq 0\\
      \implies & {\max(|\overline{g}_u(u,\vect{x})|, |\underline{g}_u(u,\vect{x})|)} \leq 0
\end{align*}

The last inequality follows on applying $\delta u$ tending to 0 and applying upper and lower limits respectively.

\textbf{We now prove the if statement.}

Here assuming $\max(|\overline{g}_u(u,\vect{x})|, |\underline{g}_u(u,\vect{x})|) \leq \mathfrak{g}'(0) g(u,\vect{x})$, we prove $\mathfrak{g}(.)$-geographic differential privacy.

We first show that $g(u,\vect{x})$ is continuous in $u$. Observe that $\liminf\limits_{\delta u \to 0} g(u+\delta u) - g(u) = \lim\limits_{\delta u \to 0} \delta u\times \liminf\limits_{\delta u \to 0} \frac{ g(u+\delta u) - g(u)}{\delta u} = 0$ as $|\underline{g}_u (u,\vect{x})|= |\liminf\limits_{\delta u \to 0} \frac{ g(u+\delta u) - g(u)}{\delta u}| \leq \epsilon g(u,\vect{x})$. Similarly, using the bound on upper derivative, we show that $\limsup\limits_{\delta u \to 0} g(u+\delta u) - g(u) = 0$ and thus, we have continuity of $g(u,\vect{x})$ at $u$ for every $u \in \mathbb{R}^k$.

We first prove assuming $g(u,x) \neq 0 \forall u \in \mathbbm{R}$.
Now consider $u_1 >u_2$ for some $u_1,u_2 \in \mathbb{R}$.\\

\begin{align*}
    \log g(u_1,\vect{x}) - \log g(u_2, \vect{x}) \overset{(a)}{\leq} \int_{u = u_2}^{u_1} \frac{\overline{g}_u(u,\vect{x})}{g(u,\vect{x})} \leq & \mathfrak{g}'(0) (u_1 -u_2) \\
    & \overset{(b)}{\leq} \mathfrak{g}(u_1-u_2)
\end{align*}

The first inequality $(a)$ follows since we use an upper derivative and the inequality $(b)$ follows from the fact that $\mathfrak{g}(.)$ is a convex function and $\mathfrak{g}(0)=0$ as defined in Section~\ref{sec:diff-privacy-gen}.

\begin{align*}
    \log g(u_1,\vect{x}) - \log g(u_2, \vect{x}) \geq \int_{u = u_2}^{u_1} \frac{\underline{g}_u(u,\vect{x})}{g(u,\vect{x})} \geq & -\mathfrak{g}'(0) (u_1 -u_2) \\
    & \overset{(c)}{\geq} - \mathfrak{g}(u_1-u_2)
\end{align*}

Inequality $(c)$ follows from the fact that $\mathfrak{g}(.)$ is a convex function. Thus, we get $g(u_1, \vect{x}) \leq e^{-\mathfrak{g}(|u_1-u_2|)} g(u_2,\vect{x})$ on taking exponent on both sides.
 
We now consider the other case that $g(u,\vect{x})=0$
for some $u \in \mathbbm{R}$. We now show that this implies $g(u,\vect{x})=0 \text{ } \forall u \in \mathbbm{R}$. Suppose not and there exists some $u_1$ s.t. $g(u_1,\vect{x})>0$ and consider the largest $u<u_1$ s.t. $g(u,\vect{x})=0$ (call it $u_0$)

Now, observe the following observe that we define the limits over extended reals i.e. $\mathbbm{R} \cup \{-\infty,\infty\}$. Observe that\\

\begin{align*}
    \lim\limits_{u \to u_1} g(u,\vect{x}) & = \lim\limits_{u \to u_0} g(u,\vect{x}) + \lim\limits_{\substack{d^l \to u_0\\d^u \to u_1}} (\log g(d^u,\vect{x}) - \log g(d^l,\vect{x}))\\
    & \leq \lim\limits_{u \to u_0} g(u,\vect{x}) + \int_{u = u_0}^{u_1} \frac{\overline{g}_u(u,\vect{x})}{g(u,\vect{x})}\\
    & {\leq} \lim\limits_{u \to u_0} g(u,\vect{x}) + \epsilon (u_1-u_0)\\
    & \overset{(a)}{=} -\infty + \epsilon(u_1-u_0) = -\infty
\end{align*}

$(a)$ follows from the fact that $g(u,\vect{x})$ is continuous in $u$ and $g(u_0,\vect{x})=0$.

Since, $g(u,\vect{x})$ is a continuous function, this implies that $g(u_1,\vect{x})=0$. Since we can make this argument for every $u_1 \in \mathbbm{R}$, we argue that $g(u,\vect{x})=0 \text{ }\forall u \in \mathbbm{R}$ and thus satisfy the condition of $\mathfrak{g}(.)$-geographic DP.

\end{proof}

\subsubsection{Proof of Lemma \ref{lemma:arbitrary_aprroximation}}{\label{sec:proof_arbitrary_approximation}}

To prove this lemma, we first show another lemma showing a necessary and sufficient condition for $\epsilon$-geographic differential privacy of distributions $\{P_u\}_{u \in \mathbb{R}}$ assuming that it has a probability density function.

We now restate and prove Lemma \ref{lemma:lp_formulation}
\begin{lemma*}
    Given any mechanism $\textbf{P} \in \mathcal{P}^{(\epsilon)}_{\textbf{R}^k}$ (satisfying $\epsilon$-geographic differential privacy), we can construct a sequence of mechanisms $\textbf{P}^{(\eta)} \in \mathcal{P}^{(\epsilon)}_{\textbf{R}^k}$ with a bounded probability distribution function for every $u \in \mathbbm{R}$ such that $\textbf{P}^{(\eta)}$ is an arbitrary cost approximation of mechanism $\textbf{P} \in \mathcal{P}^{(\epsilon)}_{\textbf{R}^k}$.
    \end{lemma*}


 \begin{proof}

For any mechanism $\textbf{P}$ belonging to the set $\mathcal{P}^{(\epsilon)}_{\mathbb{R}^k}$, we can construct alternative mechanisms $\textbf{P}^{(\eta)}$ within the same set, aiming to provide an arbitrary cost approximation of the original mechanism.
 
 For this, we borrow ideas from theory of mollifier in \cite{99c43556-16f5-3ed5-90f5-6c2bbf257ea3}. We now give a construction ${P}^{(\eta)}_u$ with a probability density function $\rho_V(u,.): \mathbb{R}^k \rightarrow \mathbb{R}$ for every $u \in \mathbb{R}$.

For $\eta >0$ and $\vect{a}\in R^k$, define the $\eta$-ball of $\vect{a}$, denoted by $\Aball{\vect{a}}{\eta}$, to be
    
    \begin{align*}
        \Aball{A}{\eta}\defeq \left\{ \vect{a}'\in R^k: |\vect{a}-\vect{a}'\|_{\infty}<\eta \right\},
    \end{align*}
    and use $\Aballsize{\vect{a}}{\eta}$ to denote the Lebesgue measure (corresponding to the volume) of $\Aball{A}{\eta}$.  Note that there exists a real number, denoted by $\ballsize{\eta}\in \mathbb{R}$, satisfying $\Aballsize{\vect{a}}{\eta}=\ballsize{\eta}$ for any $\vect{a}\in \mathbb{R}^k$ as this volume is independent of $\vect{a}$.

    Fix $\eta>0$, we define another probability measure ${P}^{(\eta)}_u$ which has density $\rhovaname(u,\vect{a})$ for all $\vect{a}\in \mathbb{R}^k$.
    \begin{align*}
        \rhova{u,\vect{a}}\defeq \frac{{P}_u({\Aball{\vect{a}}{\eta}}) }{\ballsize{\eta}}.
    \end{align*}
    $\rho_V(u,.)$ defines a valid probability measure because
    \begin{align*}
        \int_{\mathbb{R}^k} \rho_V(u,\vect{a}) d\vect{a}
        =& \int_{\mathbb{R}^k} \frac{ P_u{(\Aball{\vect{a}}{\eta}}) }{\ballsize{\eta}} d\vect{a} \\
        =& \frac{ 1 }{\ballsize{\eta}} \int_{\vect{a}\in \mathbb{R}^k}\int_{\vect{b}\in \Aball{\vect{a}}{\eta}}  P_u({d\vect{b}})  d\vect{a} \\
        =& \frac{ 1 }{\ballsize{\eta}} \int_{\vect{b}\in \mathbb{R}^k} P_u({d\vect{b}}) \int_{\vect{a}\in \Aball{\vect{b}}{\eta}}    d\vect{a} \\
        =& \frac{ 1 }{\ballsize{\eta}} \int_{\vect{b}\in \mathbb{R}^k} P_u({d\vect{b}}) \ballsize{\eta} = 1.
    \end{align*}


To show $\textbf{P}^{(\eta)} \in \mathcal{P}^{(\epsilon)}_{\mathbb{R}^k}$, we now show

$$ \rho_V(u_1, \vect{a}) = \frac{{P}_{u_1}(\Aball{\vect{a}}{\eta})}{\ballsize{\eta}} \leq \frac{{P}_{u_2}(\Aball{\vect{a}}{\eta}) e^{\epsilon |u_1-u_2|}}{\ballsize{\eta}} = e^{\epsilon |u_1-u_2|} \rho_V(u_2,\vect{a}) $$

This implies for any measurable set $S$ in $\mathbb{R}^k$, we have ${P}^{(\eta)}_{u_1}(S) \leq e^{\epsilon |u_1-u_2|}{P}^{(\eta)}_{u_2}(S)$ and thus satisfies $\epsilon$-geographic differential privacy and also implies that that $\rho_V(u,\vect{a})$ is continuous in $u$ for every $\vect{a} \in \mathbb{R}^k$.

Also observe that $\rho_V(u,\vect{a})$ is always bounded above by $\frac{1}{|B_{\eta}|}$ as ${P}_{u}(.)$ is a probability measure. Observe that since, we showed that $\rho_V(.,\vect{a})$ is \textit{Riemann integrable} and is non-negative for every $\vect{a} \in \mathbb{R}^k$ and $\rho_V(u,\vect{a})$ is continuous in $u$, we showed that $\rho_V(.,.)$ is Riemann integrable over $\mathbbm{R} \times \mathbbm{R}^k$. \footnote{Observe that this integral may go to $\infty$ but is always defined.}

In the next set of equations \eqref{eq:costetaend-1} and \eqref{eq:costetaend-2}, we upper and lower bound $\mathbb{E}_{\vect{a} \sim {P}^{(\eta)}_u} \left[\min\limits_{a \in \texttt{Set}(\vect{a})} \mathfrak{h}(|u-a|)\right]$ and then apply limit $\eta \to 0$ on both the upper and lower bounds in Equations \eqref{eqn:supremum-convergence-1} and \eqref{eqn:supremum-convergence-2} to show that $\textbf{P}^{(\eta)}$ arbitrarily approximates $\textbf{P}$ in Equation \eqref{eq:arbitrary-approx}.

    


    
    \begin{align}
         \mathbb{E}_{\vect{a} \sim {P}^{(\eta)}_u} \left[\min\limits_{a \in \texttt{Set}(\vect{a})} \mathfrak{h}(|u-a|)\right]
        =&
        \int_{\vect{a}\in \mathbb{R}^k} \min_{a\in \texttt{Set}(\vect{a})} \mathfrak{h}(|u-a|) \rho_V(u,\vect{a}) d\vect{a}  \label{eq:costetabegin} \\
        =&
        \int_{\vect{a}\in \mathbb{R}^k} \min_{a\in \texttt{Set}(\vect{a})} \mathfrak{h}(|u-a|) \frac{P_u(\Aball{\vect{a}}{\eta})}{\ballsize{\eta}} d\vect{a}   \\
        =&
        \int_{\vect{a}\in \mathbb{R}^k} \int_{\vect{b} \in \Aball{\vect{a}}{\eta}} \min_{a\in \texttt{Set}(\vect{a})} \mathfrak{h}(|u-a|) \frac{P_u(d\vect{b})}{\ballsize{\eta}} d\vect{a}   \\
        =&
        \frac{ 1 }{\ballsize{\eta}}  \int_{\vect{b}\in \mathbb{R}^k}  \int_{\vect{a} \in \Aball{\vect{b}}{\eta}}  P_u(d\vect{b})  \min_{a\in \texttt{Set}(\vect{a})} \mathfrak{h}(|u-a|) d\vect{a}   \\
        \leq&
        \frac{ 1 }{\ballsize{\eta}}  \int_{\vect{b}\in \mathbb{R}^k}  \int_{\vect{a} \in \Aball{\vect{b}}{\eta}}   \left(\min_{a\in \texttt{Set}(\vect{b})} \mathfrak{h}(|u-a|+\eta)\right) d\vect{a} P_u(d\vect{b})\label{ineq:costaddeta} \\
        =&
        \frac{ 1 }{\ballsize{\eta}} \int_{\vect{b}\in \mathbb{R}^k}  \int_{\vect{a} \in \Aball{\vect{b}}{\eta}}  \left(\min_{a\in \texttt{Set}(\vect{b})} \mathfrak{h}(|u-a|+\eta)\right) d\vect{a} P_u(d\vect{b})\label{eq:mct}\\
        =&
          \int_{\vect{b}\in \mathbb{R}^k} \left(\min_{a\in \texttt{Set}(\vect{b})} \mathfrak{h}(|u-a|+\eta) \right) P_u(d\vect{b})  \\
        =& \mathbb{E}_{\vect{a} \sim P_u} \left[\min\limits_{a \in \texttt{Set}(\vect{a})} \mathfrak{h}(|u-a|+\eta)\right] \label{eq:costetaend-1}
    \end{align}
    where Inequality \ref{ineq:costaddeta} holds because for any $\vect{b}\in \mathbb{R}^k$ and $\vect{a}\in \Aball{\vect{b}}{\eta}$, we have
    \begin{align*}
        & \min_{a\in \texttt{Set}(\vect{a})}  (|u-a|)
        \overset{(a)}\leq
        \min_{a\in \texttt{Set}(\vect{b})} (|u-a|) + \eta.\\
        \overset{(b)}{\implies} & \min_{a\in \texttt{Set}(\vect{a})}  \mathfrak{h}(|u-a|)
        \leq
        \min_{a\in \texttt{Set}(\vect{b})} \mathfrak{h}(|u-a|+ \eta)
    \end{align*}    

$(a)$ follows from triangle inequality and the fact that $|a_i-b_i| \leq \eta \text{ }\forall i \in [k]$. $(b)$ follows since $\mathfrak{h}$ is monotonic.

Equality $\eqref{eq:mct}$ follows from monotone convergence theorem since $\min_{a\in \texttt{Set}(\vect{b})} \mathfrak{h}(|u-a|+\eta)$ is monotonic in $\eta$ as $\mathfrak{h}(.)$ is monotonic.

We now show that $\mathbb{E}_{\vect{a} \sim {P}^{(\eta)}_u} \left[\min\limits_{a \in \texttt{Set}(\vect{a})} \mathfrak{h}(|u-a|)\right] \geq \mathbb{E}_{\vect{a} \sim P_u} \left[\min\limits_{a \in \texttt{Set}(\vect{a})} \mathfrak{h}(\left[|u-a|-\eta\right]_{+})\right] $. Note that in the following expression $[x]_{+}$ denotes $\max(x,0)$.

\begin{align}
         \mathbb{E}_{\vect{a} \sim {P}^{(\eta)}_u} \left[\min\limits_{a \in \texttt{Set}(\vect{a})} \mathfrak{h}(|u-a|)\right]
        =&
        \int_{\vect{a}\in \mathbb{R}^k} \min_{a\in \texttt{Set}(\vect{a})} \mathfrak{h}(|u-a|) \rho_V(u,\vect{a}) d\vect{a}  \label{eq:costetabegin-2} \\
        =&
        \int_{\vect{a}\in \mathbb{R}^k} \min_{a\in \texttt{Set}(\vect{a})} \mathfrak{h}(|u-a|) \frac{P_u(\Aball{\vect{a}}{\eta})}{\ballsize{\eta}} d\vect{a}   \\
        =&
        \int_{\vect{a}\in \mathbb{R}^k} \int_{\vect{b} \in \Aball{\vect{a}}{\eta}} \min_{a\in \texttt{Set}(\vect{a})} \mathfrak{h}(|u-a|) \frac{P_u(d\vect{b})}{\ballsize{\eta}} d\vect{a}   \\
        =&
        \frac{ 1 }{\ballsize{\eta}} \int_{\vect{b}\in \mathbb{R}^k}  \int_{\vect{a} \in \Aball{\vect{b}}{\eta}}  P_u(d\vect{b})  \min_{a\in \texttt{Set}(\vect{a})} \mathfrak{h}(|u-a|) d\vect{a}   \\
        \geq&
        \frac{ 1 }{\ballsize{\eta}}  \int\limits_{\vect{b}\in \mathbb{R}^k}  \int\limits_{\vect{a} \in \Aball{\vect{b}}{\eta}}   \min_{a\in \texttt{Set}(\vect{b})} \mathfrak{h}\left(\left[|u-a|-\eta\right]_{+}\right) d\vect{a} P_u(d\vect{b})\label{ineq:costaddeta-2} \\
        =&
        \frac{ 1 }{\ballsize{\eta}} \int\limits_{\vect{b}\in \mathbb{R}^k}  \int\limits_{\vect{a} \in \Aball{\vect{b}}{\eta}}   \min_{a\in \texttt{Set}(\vect{b})} \mathfrak{h}\left(\left[|u-a|-\eta\right]_{+}\right) d\vect{a} P_u(d\vect{b})\label{eq:mct-2}\\
        =&
          \int_{\vect{b}\in \mathbb{R}^k} \min_{a\in \texttt{Set}(\vect{b})} \mathfrak{h}\left(\left[|u-a|-\eta\right]_{+}\right) P_u(d\vect{b})  \\
        =& \mathbb{E}_{\vect{a} \sim P_u} \left[\min_{a\in \texttt{Set}(\vect{a})} \mathfrak{h}\left(\left[|u-a|-\eta\right]_{+}\right)\right] \label{eq:costetaend-2}
    \end{align}
    where Inequality \ref{ineq:costaddeta-2} holds because for any $\vect{b}\in \mathbb{R}^k$ and $\vect{a}\in \Aball{\vect{b}}{\eta}$, we have
    \begin{align*}
        & \min_{a\in \texttt{Set}(\vect{a})}  (|u-a|)
        \overset{(a)}\geq
        \min_{a\in \texttt{Set}(\vect{b})} \left[(|u-a|) - \eta\right]_{+}.\\
        \overset{(b)}{\implies} & \min_{a\in \texttt{Set}(\vect{a})}  \mathfrak{h}(|u-a|)
        \geq
        \min_{a\in \texttt{Set}(\vect{b})} \mathfrak{h}\left(\left[|u-a|-\eta\right]_{+}\right)
    \end{align*}    

$(a)$ follows from triangle inequality and the fact that $|a_i-b_i| \leq \eta \text{ }\forall i \in [k]$ and the fact $(b)$ follows since $\mathfrak{h}$ is monotonic and $\mathfrak{h}(0)=0$

Now observe that $\mathbb{E}_{\vect{a} \sim P_u} \left[\min\limits_{a\in \texttt{Set}(\vect{a})} \mathfrak{h}\left(\left[|u-a|-\eta\right]_{+}\right)\right]$
and $\mathbb{E}_{\vect{a} \sim P_u} \left[\min\limits_{a \in \texttt{Set}(\vect{a})} \mathfrak{h}(|u-a|+\eta)\right]$ is continuous in $\eta$ for every $u$. Observe that each of these functions is monotonic in $\eta$. 

Now we bound $\mathbb{E}_{\vect{a} \sim P_u} \left[\min\limits_{a \in \texttt{Set}(\vect{a})} \mathfrak{h}(|u-a|+\eta)\right]$ below by some linear function on \\$\mathbb{E}_{\vect{a} \sim P_u} \left[\min\limits_{a \in \texttt{Set}(\vect{a})} \mathfrak{h}(|u-a|)\right]$ for every $u \in \mathbb{R}$ and use it to show the convergence of the supremum in \eqref{eqn:supremum-convergence-1} and a similar approach for $\mathbb{E}_{\vect{a} \sim P_u} \left[\min\limits_{a \in \texttt{Set}(\vect{a})} \mathfrak{h}(|u-a|+\eta)\right]$. Let us denote the Lipschitz constant of $\log \mathfrak{h}(.)$ in $[\mathcal{B},\infty)$ by $M$. We now split $\mathbb{R}^k$ into two sets $\vect{a}_1(u) = \{\vect{a}: \min\limits_{a \in \texttt{Set}(\vect{a})}|u-a| > \mathcal{B}\}$ [recall from definition of $\mathcal{B}$ from Property \ref{item:second-property-h}] and $\vect{a}_2(u) = \{\vect{a}: \min\limits_{a \in \texttt{Set}(\vect{a})}|u-a| \leq \mathcal{B}\}$. 
With this, we now define 
\begin{equation}{\label{eq:kappa_defn}}
    \kappa(\eta): = \sup_{\mathfrak{v} \in [0,\mathcal{B}]} \mathfrak{h}(\mathfrak{v}+\eta) - \mathfrak{h}(\mathfrak{v})
\end{equation}

Observe that this supremum is finite for every $\eta$ since, the interval is closed and bounded and the function is continuous. We can now apply Dini's theorem \cite[Theorem 7.13]{rudin1976principles} to conclude $\kappa(\eta)$ goes to 0 as $\eta$ tends to 0. This argument goes as follows.

\begin{itemize}
    \item Observe that $\mathfrak{h}(\mathfrak{v}+\eta)$ converges point-wise in $\mathfrak{v}$ monotonically to $\mathfrak{h}(\mathfrak{v})$
    \item Since, $[0,\mathcal{B}]$ is a compact set(closed and bounded), we can argue that $\mathfrak{h}(\mathfrak{v}+\eta)$ uniformly converges to $\mathfrak{h}(\mathfrak{v})$ for $\mathfrak{v} \in [0,\mathcal{B}]$ implying convergence of $\kappa(\eta)$ to 0 as $\eta \to 0$.
\end{itemize}


Now observe for every $\vect{a} \in \vect{a}_1(u)$, we get the following

\begin{align}{\label{eq:ineq-firstpart}}
    \min\limits_{a \in \texttt{Set}(\vect{a})} \mathfrak{h}\left(|u-a|+\eta\right) = \mathfrak{h}\left(\min\limits_{a \in \texttt{Set}(\vect{a})} \mathfrak{h}(|u-a|+\eta\right) & \overset{(a)}{\leq} \mathfrak{h}\left(\min\limits_{a \in \texttt{Set}(\vect{a})}|u-a|\right) + \kappa(\eta)\nonumber\\
    & = \min\limits_{a \in \texttt{Set}(\vect{a})} \mathfrak{h}(|u-a|) + \kappa(\eta)
\end{align}

Observe that $(a)$ follows from Definition \ref{eq:kappa_defn} since $\min\limits_{a \in \texttt{Set}(\vect{a})}|u-a| \leq \mathcal{B}$ and the equality follows since $\mathfrak{h}(.)$ is monotonic thus, $\mathfrak{h}(\min(a,b)) = \min \left(\mathfrak{h}(a),\mathfrak{h}(b)\right)$

Similarly, for every $\vect{a} \in \vect{a}_2(u)$, we get the following

\begin{align}{\label{eq:ineq-secondpart}}
    \min\limits_{a \in \texttt{Set}(\vect{a})} \mathfrak{h}(|u-a|+\eta)= \mathfrak{h}\left(\min\limits_{a \in \texttt{Set}(\vect{a})} (|u-a|+\eta)\right) & \overset{(b)}{\leq} \mathfrak{h}\left(\min\limits_{a \in \texttt{Set}(\vect{a})}|u-a|\right)\cdot e^{M\eta}\nonumber\\
    & = \min\limits_{a \in \texttt{Set}(\vect{a})} \mathfrak{h}(|u-a|)\cdot e^{M\eta}
\end{align}

$(b)$ follows since, $\log \mathfrak{h}(z)$ is Lipschitz-continous in $[\mathcal{B},\infty)$, thus 
$\mathfrak{h}(z+\eta) \leq \mathfrak{h}(z) e^{M\eta}$ for $z \geq \mathcal{B}$.

Combining equations \eqref{eq:ineq-firstpart} and \eqref{eq:ineq-secondpart} coupled with the fact that $\vect{a}_1(u) \cup \vect{a}_2(u) = \mathbb{R}^k$, we get the following result for every $\vect{a} \in \mathbb{R}^k$

\begin{align}
    \min\limits_{a \in \texttt{Set}(\vect{a})} \mathfrak{h}(|u-a|+\eta) & \leq \max\left(\min\limits_{a \in \texttt{Set}(\vect{a})} \mathfrak{h}(|u-a|)\cdot e^{M\eta}, \min\limits_{a \in \texttt{Set}(\vect{a})} \mathfrak{h}(|u-a|) + \kappa(\eta)\right)\\
    & \leq \min\limits_{a \in \texttt{Set}(\vect{a})} \mathfrak{h}(|u-a|)\cdot e^{M\eta} + \kappa(\eta)
\end{align}

 With this, we now observe that

\begin{align}{\label{eqn:supremum-convergence-1}}
    & \sup_{u \in \mathbb{R}}\left(\mathbb{E}_{\vect{a} \sim P_u} \left[\min\limits_{a \in \texttt{Set}(\vect{a})} \mathfrak{h}(|u-a|+\eta)\right]\right) \leq \sup_{u \in \mathbb{R}} \left(\mathbb{E}_{\vect{a} \sim P_u} \left[\min\limits_{a \in \texttt{Set}(\vect{a})} \mathfrak{h}(|u-a|)\right]\right)e^{M\eta} + \kappa(\eta)\nonumber\\
    \overset{(c)}{\implies}  & \lim\limits_{\eta \to 0}
    \left(\sup_{u \in \mathbb{R}}\left(\mathbb{E}_{\vect{a} \sim P_u} \left[\min\limits_{a \in \texttt{Set}(\vect{a})} \mathfrak{h}(|u-a|+\eta)\right]\right)\right) \leq \sup_{u \in \mathbb{R}}\left(\mathbb{E}_{\vect{a} \sim P_u} \left[\min\limits_{a \in \texttt{Set}(\vect{a})} \mathfrak{h}(|u-a|)\right]\right)
\end{align}

$(c)$ follows since $\lim_{\eta\to 0} \kappa(\eta) = 0$

Using a very similar approach, we can show that 

\begin{align}{\label{eqn:supremum-convergence-2}}
    \lim\limits_{\eta \to 0}\left(\sup_{u \in \mathbb{R}}\left(\mathbb{E}_{\vect{a} \sim P_u} \left[\min\limits_{a \in \texttt{Set}(\vect{a})} \mathfrak{h}(\left[|u-a|-\eta\right]_{+}\right]\right)\right) \geq \sup_{u \in \mathbb{R}}\left(\mathbb{E}_{\vect{a} \sim P_u} \left[\min\limits_{a \in \texttt{Set}(\vect{a})} \mathfrak{h}(|u-a|)\right]\right)
\end{align}

Now, combining equations \eqref{eqn:supremum-convergence-1}, \eqref{eqn:supremum-convergence-2} and bounds in equation \eqref{eq:costetaend-2} and \eqref{eq:costetaend-1}, we obtain 

\begin{equation}{\label{eq:arbitrary-approx}}
    \lim_{\eta \to 0}\sup_{u \in \mathbb{R}}\left(\mathbb{E}_{\vect{a} \sim {P}^{(\eta)}_u} \left[\min\limits_{a \in \texttt{Set}(\vect{a})} \mathfrak{h}(|u-a|)\right]\right) = \sup_{u \in \mathbb{R}}\left(\mathbb{E}_{\vect{a} \sim P_u} \left[\min\limits_{a \in \texttt{Set}(\vect{a})} \mathfrak{h}(|u-a|)\right]\right)
\end{equation}

Thus, we show that the distributions $\textbf{P}^{(\eta)} \in \mathcal{P}^{(\epsilon)}_{\mathbb{R}^k}$ with a valid probability density function defined by $\rho_V(u,.)$ arbitrarily approximates $\textbf{P}$ for every distribution $\textbf{P} \in \mathcal{P}^{(\epsilon)}_{\mathbb{R}^k}$.



We now construct the DILP $\mathcal{O}$ in \eqref{orig_primal} where $g(u,.): \mathbb{R}^k \rightarrow \mathbb{R}$ denotes the probability density corresponding to ${P}^{(\eta)}_u$ for every $u \in \mathbb{R}$. The last 2 conditions in DILP $\mathcal{O}$ follow as a consequence of Lemma \ref{lemma:diff-privacy-condition}. 



 \end{proof}

\subsection{Dual fitting: Proof of Lemma \ref{lemma:bounding_f}}\label{sec-appendix:primal-feasible}

\begin{restatable} 
[detailed proof of Lemma \ref{lemma:bounding_f}] {lemma*}{simrstepa}\label{lemma-complete:bounding_f}
    $\hat{\textbf{P}}^{\mathcal{L}_{\epsilon}}$  satisfies $\epsilon$-geographic differential-privacy constraints i.e. $\hat{P}^{\mathcal{L}_{\epsilon}} \in \mathcal{P}^{(\epsilon)}_{\mathbb{R}^k}$ and thus, we have $f(\epsilon,k) \leq \text{cost}(\hat{\textbf{P}}^{\mathcal{L}_{\epsilon}}) = \hat{f}(\epsilon,k)$
     
\end{restatable}

\begin{proof}
     We denote $\vect{a}+z$ as $\vect{a}+ z\mathbbm{1}_k$ (i.e.) adding $z \in U$ to every component of $\vect{a} \in U^k$ 


    \begin{align*}
        f(\epsilon,k) & \overset{(a)}{\leq} \sup_{u \in \mathbb{R}} \underset{\vect{a} \sim \hat{P}^{\mathcal{L}_{\epsilon}}_u}{\mathbb{E}} \left[\min_{a \in \texttt{Set}(\vect{a})} \mathfrak{h}(|u-a|) \right] \\
        & \overset{(b)}{=} \sup_{u \in \mathbb{R}} \underset{S_u \sim \mathcal{L}_{\epsilon}(u)}{\mathbb{E}} \left(\left[\min_{a \in \texttt{Set}(\hat{\vect{a}})} \mathfrak{h}(|u-a|) \right]\Biggl\vert_{\hat{\vect{a}}=\argmin\limits_{\vect{a} \in \mathbb{R}^k} \mathbb{E}_{y \sim \mathcal{L}_{\epsilon}(0)}\left[\min\limits_{a \in \texttt{Set}(\vect{a})} \mathfrak{h} (|y -a |)\right] + S_u}\right)\\
        & \overset{(c)}{=} \sup_{u \in \mathbb{R}} \underset{S_u \sim \mathcal{L}_{\epsilon}(u)}{\mathbb{E}} \left(\left[\min_{a \in \texttt{Set}(\hat{\vect{b}})} \mathfrak{h}(|(u-S_u)-a|) \right]\Biggl\vert_{\hat{\vect{b}}=\argmin\limits_{\vect{a} \in \mathbb{R}^k} \mathbb{E}_{y \sim \mathcal{L}_{\epsilon}(0)}\left[\min\limits_{a \in \texttt{Set}(\vect{a})} \mathfrak{h} (|y -a|)\right] }\right)\\
        & \overset{(d)}{=} \sup_{u \in \mathbb{R}} \underset{y \sim \mathcal{L}_{\epsilon}(0)}{\mathbb{E}} \left[\min_{a \in \texttt{Set}(\hat{\vect{b}})} \mathfrak{h}(|y-a|) \right]\Biggl\vert_{\hat{\vect{b}}=\argmin\limits_{\vect{a} \in \mathbb{R}^k} \mathbb{E}_{y \sim \mathcal{L}_{\epsilon}(0)}\left[\min\limits_{a \in \texttt{Set}(\vect{a})} \mathfrak{h} (|y-a|)\right] }\\
        & \overset{(e)}{=}  \min_{\vect{a} \in \mathbb{R}^k} \mathbb{E}_{y \sim \mathcal{L}_{\epsilon}(0)} \left[\min_{a \in \texttt{Set}(\vect{a})} \mathfrak{h} (|y-a|)\right] = \hat{f}(\epsilon,k)
    \end{align*}


$(a)$ follows from the fact that $\hat{P}^{\mathcal{L}_{\epsilon}} \in \mathcal{P}^{(\epsilon)}_{\mathbb{R}^k}$ from the post processing theorem, refer to \cite{diff-privacy-book} since $S_u \sim \mathcal{L}_{\epsilon}(u)$ satisfies $\epsilon$-geographic differential privacy constraints.\footnote{Post processing theorem can be proven even for $\epsilon$-geographic differential privacy similarly}. 
$(b)$ follows from the definition in Equation \eqref{eq:hat_f_defn} and $(c)$ follows on substituting $\hat{\vect{b}} = \hat{\vect{a}}-S_u$ and $(d)$ follows on substituting $u-S_u$ by $y$ and the fact that $\hat{\vect{b}}$ is independent of $S_u$ and $(e)$ follows since $\hat{\vect{b}}$ is minimised over the same objective function and independent of $u$.

\end{proof}
\subsection{Dual fitting: Proofs of Lemmas \ref{lemma:diff_eqn_soln} and Lemma \ref{lemma:dual_achievable}}\label{sec-appendix:dual-soln-construction}





Recall from Section \ref{subsec-comb:dual-soln-construction} the following assignment to functions $\delta^{(c)}(.)$ and $\lambda^{(c)}(.)$

\begin{equation*}
    \delta^{(c)}(r) = (\zeta/2) e^{-\zeta|r|} \text{ and } \lambda^{(c)}(r) = \hat{\lambda}\cdot (\zeta/2) e^{-\zeta|r|} \text{ } \forall r \in \mathbb{R}
\end{equation*}


Recall that we considered the following Differential Equation \eqref{eqn:diff_eqn_nu} in $\hat{\nu}(.)$. 

\begin{equation*}
    -\left[\min_{a \in \texttt{Set}(\vect{v})} \mathfrak{h}(|r-a|)\right] \delta^{(c)}(r) + \lambda^{(c)}(r) + \frac{d\hat{\nu}(r)}{dr} + \epsilon |\hat{\nu}(r)| = 0 ; \text{ with  $\hat{\nu}(v_{med}) = 0$}  
\end{equation*}

We show that there always exists a solution $\hat{\nu}(.)$ to \eqref{eqn:diff_eqn_nu} such that $\hat{\nu}(r)$ is non-negative for sufficiently large $u$ and non-positive for sufficiently small $r$ \footnote{We may still need to prove continuity of the bounds $U(.)$ and $L(.)$ which we discuss in Lemma \ref{lemma:dual_achievable}} to satisfy the last two constraints of DILP $\mathcal{E}$ in Section below. Observe that the solution could depend on $\vect{v}$. In the next subsection, we prove two technical lemmas which is used in the proof of Lemmas and Claims in Section \ref{subsec-appendix:diff_eqn_soln}.

\subsubsection{Technical lemmas used }



\begin{lemma}{\label{hat_f_lemma}}
   Consider any vector $\vect{q} \in \mathbb{R}^k$ and consider any $v_{med} \in \mathbb{R}$ satisfying $v_{med} \geq \text{Median}(\texttt{Set}(\vect{q}))$. Then the following holds true (recall the definition of $\hat{f}(\epsilon,k)$ from Equation \eqref{eq:hat_f_defn})

   \begin{equation}
       2 \int_{v_{med}}^{\infty} \left[ \min_{a \in \texttt{Set}(\vect{q})}  \mathfrak{h}(|t-a|)\right] \left(\epsilon/2\right) e^{-\epsilon (t-v_{med})} dt\geq \hat{f}(\epsilon,k).
   \end{equation}
\end{lemma}

\textbf{Notation}: In this proof we use $q_{[j:k]}$ to denote a vector in $\mathbb{R}^{k-j+1}$ constructed from all coordinates of $\vect{q}$ starting from $j^{th}$ coordinate to the $k^{th}$ coordinate of $\vect{q}$.

\begin{proof}

    Suppose the median of $\texttt{Set}(\vect{q})$ is denoted by $q_{med}$ and construct $\vect{v} = \vect{q}+ (v_{med} - q_{med}) \mathbbm{1}_k$ \footnote{$\mathbbm{1}_k$ is a vector in $\mathbb{R}^k$ with all elements unity}. W.L.O.G, we assume that components in $\vect{q}$ are sorted in ascending order. Consider the smallest index $j$ in $[k]$ s.t. $q_j > v_{med}$. Clearly, $j > \frac{k+1}{2}$ if $k$ is odd and $j>\frac{k}{2}$ if $k$ is even.

    We construct a vector $\tilde{\vect{q}} \in \mathbb{R}^k$ by defining $\tilde{q}_{[j:k]} = q_{[j:k]}$ and $\tilde{q}_{[1:k-j+1]} = 2v_{med} \mathbbm{1} - q_{[j:k]}$. For all other entries of $\tilde{\vect{q}}$, define it to be $v_{med}$. Observe that we choose a points from $q_{[j:k]}$ and choose other points by symmetrizing around $v_{med}$ and $j>\frac{k}{2} + 1$ implies $v_{med} \in \texttt{Set}(\tilde{\vect{q}})$ as $2(k-j+1) < k$. Now consider the following two exhaustive cases.


    Case - 1: \textbf{$j > \frac{k}{2} + 1$}.

    \begin{align}
        2 \int_{v_{med}}^{\infty} \left[ \min_{a \in \texttt{Set}(\vect{q})}  \mathfrak{h}(|t-a|) \right]&  \left(\epsilon/2\right) e^{-\epsilon (t-v_{med})} dt \\
       & \overset{(a)}{\geq} 2\int_{v_{med}}^{\infty} \left[ \min_{a \in \texttt{Set}(q_{[j:k]}) \cup \{v_{med}\}}  \mathfrak{h}(|t-a|)\right] \left(\epsilon/2\right) e^{-\epsilon (t-v_{med})} dt \\
        & \overset{(b)}{=} \int_{-\infty}^{\infty} \left[ \min_{a \in \texttt{Set}(\tilde{\vect{q}})}  \mathfrak{h}(|t-a|)\right] \left(\epsilon/2\right) e^{-\epsilon |t-v_{med}|} dt \\
        & = \mathbb{E}_{y \sim \mathcal{L}_{\epsilon}(v_{med})} \left[ \min_{a \in \texttt{Set}(\tilde{\vect{q}})}  \mathfrak{h}(|y-a|)\right] \overset{(c)}{\geq} \hat{f}(\epsilon,k).
    \end{align}

$(a)$ follows from the fact that $q_j$ is the smallest element in list larger than $v_{med}$. $(b)$ follows from the fact that $\tilde{\vect{q}}$
is constructed from a symmetric extension of $q_{[j:k]}$ about $v_{med}$ and $v_{med} \in \texttt{Set}(\tilde{\vect{q}})$. $(c)$ follows from the fact that it is minimised over all collection of $k$ points as defined in Equation \eqref{eq:hat_f_defn}.

Case -2: $j = \frac{k}{2}+1$. And thus, 

    \begin{align}
        2 \int_{v_{med}}^{\infty} \left[ \min_{a \in \texttt{Set}(\vect{q})}  \mathfrak{h}(|t-a|)\right]& \left(\epsilon/2\right) e^{-\epsilon (t-v_{med})} dt \\
        & \overset{(a)}{\geq} 2\int_{v_{med}}^{\infty} \left[ \min_{a \in \texttt{Set}(q_{[j:k]}) }  \mathfrak{h}(|t-a|)\right] \left(\epsilon/2\right) e^{-\epsilon (t-v_{med})} dt \\
        & \overset{(b)}{=} \int_{-\infty}^{\infty} \left[ \min_{a \in \texttt{Set}(\tilde{\vect{q}})}  \mathfrak{h}(|t-a|)\right] \left(\epsilon/2\right) e^{-\epsilon |t-v_{med}|} dt \\
        & = \mathbb{E}_{y \sim \mathcal{L}_{\epsilon}(v_{med})} \left[ \min_{a \in \texttt{Set}(\tilde{\vect{q}})}  \mathfrak{h}(|y-a|)\right] \overset{(c)}{\geq} \hat{f}(\epsilon,k).
    \end{align}

Note that $(a)$ follows from the fact that $q_{j} + q_{j-1} > 2v_{med}$ as $\text{Median}(\texttt{Set}(\vect{q})) > v_{med}$ and $\text{Median}(\texttt{Set}(\vect{q})) = \frac{q_j+q_{j-1}}{2}$. $(b)$ follows from the fact that $\tilde{\vect{q}}$ is constructed from a symmetric extension of $q_{[j:k]}$ about $v_{med}$. $(c)$ follows from the fact that it is minimised over all collection of $k$ points as defined in Equation \eqref{eq:hat_f_defn}.

\end{proof}

We now state the next lemma which say that we says that there exists a solution to Equation \eqref{eqn:diff_eqn_nu}. This lemma states that zeros of the solution to Equation \eqref{eqn:diff_eqn_nu} with negative derivative is upper bounded and states that zeros with positive derivative is lower bounded. 

\begin{lemma}{\label{diff_eqn_soln_first_lemma}}
    Choose any $0<\hat{\lambda} \leq \frac{\epsilon - \zeta}{\epsilon+\zeta} \hat{f}(\epsilon+ \zeta,k)$ for some $0<\zeta<\epsilon$. Then for every $\vect{v} \in \mathbbm{R}^k$, a unique $\mathcal{C}^1$ solution $\nu^{(c)}(.)$ exists to Differential Equation \eqref{eqn:diff_eqn_nu} satisfying the following two constraints for some constant $C$ independent of $\vect{v} \in \mathbb{R}^k$. Note that $v_{\floor{i}}$ denotes the $i^{th}$ largest component of $\vect{v}$ for every $i \in [k]$.

    \begin{itemize}
        \item $\set[\Big]{\mathfrak{r} \in \mathbb{R}: \nu^{(c)}(\mathfrak{r})=0 \text{ and } \frac{d\nu^{(c)}(\mathfrak{r})}{d\mathfrak{r}} < 0 } \text{ is upper bounded}$ by $C + v_{\floor{k}}$.
        \item $\set[\Big]{\mathfrak{r} \in \mathbb{R}: \nu^{(c)}(\mathfrak{r})=0 \text{ and } \frac{d\nu^{(c)}(\mathfrak{r})}{d\mathfrak{r}} > 0 } \text{ is lower bounded}$ by $-C + v_{\floor{1}}$.
    \end{itemize}
    
\end{lemma}

\begin{proof}
    Recall Differential Equation from Equation \eqref{eqn:diff_eqn_nu}, restate it below as follows.

    \begin{equation*}
    -\left[\min_{a \in \texttt{Set}(\vect{v})} \mathfrak{h}(|r-a|)\right] \delta^{(c)}(r) + \lambda^{(c)}(r) + \frac{d\hat{\nu}(r)}{dr} + \epsilon |\hat{\nu}(r)| = 0
    \end{equation*}

Also recall from Equation \eqref{eqn:nu_delta_defn} that we defined

$$\delta^{(c)}(r) = (\zeta/2) e^{-\zeta|r|} \text{ and } \lambda^{(c)}(r) = \hat{\lambda}\cdot (\zeta/2) e^{-\zeta|r|} \text{ } \forall r \in \mathbb{R}$$
    
    From \textit{global uniqueness and existence} condition of \textit{Cauchy-Lipschitz} theorem in \cite[Theorem 2]{existenceode}, we can show that there is a \textit{unique} solution $\hat{\nu}^{(c)}(.)$ of Equation \eqref{eqn:diff_eqn_nu} satisfying this initial value condition as \textit{Lipschitz} conditions as described in \cite[Theorem 2]{existenceode} is satisfied in Equation \eqref{eqn:diff_eqn_nu}. Also, observe that this function is \textit{continuously differentiable}.  
    
    We now prove the first point in the lemma.. 

        \textit{We first prove that $\lim_{u \to \infty} \mathfrak{h}(u)> \hat{\lambda}$}. First observe from Claim \ref{hat_f_lemma}) (substituting $\epsilon$ by $\epsilon+\zeta$) that

    \begin{align*}
    & \int_{v_{med}}^{\infty} \left[ \min_{a \in \texttt{Set}(\vect{q})}  \mathfrak{h}(|t-a|)\right] \left(\epsilon+ \zeta\right) e^{-(\epsilon+\zeta) (t-v_{med})} dt\geq \hat{f}(\epsilon+\zeta,k).\\
    \implies & \int_{v_{med}}^{\infty} \left[ \min_{a \in \texttt{Set}(\vect{q})}  \mathfrak{h}(|t-a|)-\hat{f}(\epsilon+\zeta,k)\right] (\epsilon+\zeta) e^{-(\epsilon+\zeta) (t-v_{med})} dt \geq 0
    \end{align*}

    This, implies that for some $t > v_{med}$ s.t. $\left[ \min_{a \in \texttt{Set}(\vect{q})}  \mathfrak{h}(|t-a|)-\hat{f}(\epsilon+\zeta,k)\right]>0$ and since, we know $\mathfrak{h}(.)$ is monotonic, we can argue that $\lim_{r \to \infty} \mathfrak{h}(r)> \hat{f}(\epsilon+\zeta,k)> \hat{\lambda}$.\footnote{The limits are defined over extended reals $\mathbb{R}\cup \{-\infty,+\infty\}$} Let us denote the the smallest $r$ s.t. $\mathfrak{h}(r) > \hat{\lambda}$ by $\mathfrak{u}^s$. Now observe for $r>(v_{\floor{k}}+ \mathfrak{r}^s)$, $\left[\min_{a \in \texttt{Set}(\vect{v})} \mathfrak{h}(|r-a|)\right]\delta^{(c)}(r) - \lambda^{(c)}(r)$ is non-negative.

    Suppose there exists $r_0 \in [v_{\floor{k}}+ \mathfrak{r}^s,\infty)$ satisfying, $\nu^{(c)}(r_0)= 0$ and $\nu^{(c)'}(r_0)<0$ thus, $\nu^{(c)'}(r_0)+\nu^{(c)}(r_0)<0$ , thus contradicting the fact that $\left[\min_{a \in \texttt{Set}(\vect{v})} \mathfrak{h}(|r-a|)\right]\delta^{(c)}(r) - \lambda^{(c)}(r)>0$ is non-negative in $r \in [v_{\floor{k}}+ \mathfrak{u}^s,\infty)$. Thus, we show that there does not exist $r_0 > v_{\floor{k}}+ \mathfrak{r}^s$ satisfying $\nu^{(c)}(u_0)= 0$ and $\nu^{(c)'}(r_0)<0$ thus proving the desired statement, since $\mathfrak{r}^s$ is a constant independent of $\vect{v}$.

    The second statement follows from a very similar argument.

\end{proof}


With these lemmas, we now prove the following lemma \ref{lemma:diff_eqn_soln} which shows that $\nu^{(c)}(.)$ is positive for sufficiently large $u$ and negative for sufficiently small $u$.

\subsubsection{Obtaining a feasible solution to DILP $\mathcal{E}$: Proof of Lemma \ref{lemma:diff_eqn_soln}}{\label{subsec-appendix:diff_eqn_soln}}


\begin{restatable} 
[detailed proof of Lemma \ref{lemma:diff_eqn_soln}] {lemma*}{simrstepa}
     Choose $\zeta<\epsilon$ and $0< \hat{\lambda} \leq \frac{\epsilon - \zeta}{\epsilon+\zeta} \hat{f}(\epsilon+ \zeta,k)$, then equation $\eqref{eqn:diff_eqn_nu}$ has a unique $\mathcal{C}^1$ solution $\nu^{(c)}(.)$ and there exists $U, L \in \mathbb{R}$ satisfying $\nu^{(c)}(r) \geq 0 \text{ }\forall r \geq U$ and $\nu^{(c)}(r) \leq 0 \text{ } \forall r \leq L$.
     
\end{restatable}

\begin{proof}

    


    Recall from lemma \ref{diff_eqn_soln_first_lemma}, that there exists a unique $\mathcal{C}^1$ solution (denoted by $\nu^{(c)}(.)$) to \eqref{eqn:diff_eqn_nu}. Uniqueness and existence of the solution can be shown from \textit{global uniqueness and existence} condition of \textit{Cauchy-Lipschitz} theorem in \cite[Theorem 2]{existenceode} as \textit{Lipschitz} conditions as described in \cite[Theorem 2]{existenceode} is satisfied in Equation \eqref{eqn:diff_eqn_nu}.

    We first prove the following Equation \eqref{eqn:non-existence2} in \ref{proof-eqn:non-existence2} which states that there cannot be an unbounded interval in $[v_{med},\infty)$ where $\nu^{(c)}(.)$ is non-positive. Next, we use this statement to prove the main lemma in \ref{proof-lemma}.


    \begin{equation}{\label{eqn:non-existence2}}
            \set[\Big]{ \mathfrak{r} \in [v_{med},\infty) \given \nu^{(c)}(r) \leq 0 \text{ }\forall r \geq \mathfrak{r}} = \phi
    \end{equation} 

    \begin{pfparts}
        
    \item[\namedlabel{proof-eqn:non-existence2}{Part - 1}.]

    We now assume the \textit{contradictory} statement that there exists $r^0 \in \mathbbm{R}$ s.t. ${\nu}^{(c)}(r)\leq 0$ for all $r>r^{0}$ and define $r^{\max} = \inf \set[\Big]{ \mathfrak{r} \in [v_{med},\infty) \given \nu^{(c)}(r) \leq 0 \text{ }\forall r \geq \mathfrak{r}}$. Since, $\nu^{(c)}(.)$ is continuous and $\nu^{(c)}(v_{med})=0$, we have $\nu^{(c)}(r^{max})=0$. Thus, the differential equation \eqref{eqn:diff_eqn_nu} can be rewritten in $[r^{\max},\infty)$ as follows (replacing $|\hat{\nu}(u)|$ by $-\hat{\nu}(u)$) 

    \begin{equation*}
    -\left[\min_{a \in \texttt{Set}(\vect{v})} \mathfrak{h}(|r-a|)\right] \delta^{(c)}(r) + \lambda^{(c)}(r) + \frac{d\hat{\nu}(r)}{dr} -\epsilon \hat{\nu}(r) = 0
    \end{equation*}





    Now, multiplying both sides by $e^{-\epsilon r}$ and applying the initial value condition of $\nu^{(c)}(v_{med})=0$, we have the solution to the Differential Equation in $[r^{\max},\infty)$ is given by 
    $${\nu}^{(c)}(r) = \left(\int_{r^{\max}}^{r} \left[-\lambda^{(c)}(t) +\left[\min_{a \in \texttt{Set}(\vect{v})} \mathfrak{h}(|t-a|) \right]\delta^{(c)}(t) \right] e^{-\epsilon t} dt\right)e^{\epsilon {r}}$$

However, since $\hat{\nu}^{(c)}(r)$ is non-positive in $[r^{\max},\infty)$, we have\footnote{Observe that a strict inequality follows since a contrary assumption would imply that $\nu^{(c)}(r)$ is identically zero in $[r^{med},\infty)$. This would imply that $-\lambda^{(c)}(t) +\left[\min\limits_{a \in \texttt{Set}(\vect{v})} \mathfrak{h}(|t-a|) \right]\delta^{(c)}(t)$ is zero in $[r_{max},\infty)$ implying, $\mathfrak{h}(|t-a|)=\hat{\lambda}$ which is a contradiction since, $\lim_{r \to \infty} \mathfrak{h}(r) > \hat{\lambda}$ as shown in proof of Claim \ref{diff_eqn_soln_first_lemma}}



\begin{align}
    & \int_{r^{\max}}^{\infty} \left[-\lambda^{(c)}(t) +\left[\min_{a \in \texttt{Set}(\vect{v})} \mathfrak{h}(|t-a|) \right]\delta^{(c)}(t) \right] e^{-\epsilon t} dt <0\\
    \implies & \int_{r^{\max}}^{\infty} \left[\min_{a \in \texttt{Set}(\vect{v})} \mathfrak{h}(|t-a|) \right] \left(\frac{{\zeta}}{2}\right) e^{-{\zeta}|t|} (\epsilon/2)e^{-\epsilon t} dt < \int_{r^{\max}}^{\infty} \lambda^{(c)}(t) (\epsilon/2)e^{-\epsilon t}dt \label{curr_eqn}
\end{align}

We now consider two exhaustive cases and treat them separately $r^{\max} > 0$ and $r^{\max}<0$. We denote $q_i=v_i+v_{med}-r^{\max} \forall i \in [k]$ and $\vect{q} = [q_1, q_2, \ldots, q_k]$

\textit{Case a}: Let us first consider the case where $r^{\max}>0$. We thus get from Equation \eqref{curr_eqn}

\begin{align}
    \overset{(a)}{\implies} & e^{-(\epsilon+{\zeta})(r^{\max}-v_{med})}. \int_{v_{med}}^{\infty} \left[\min_{\substack{a \in \texttt{Set}(\vect{q})}} \mathfrak{h}(|t-a|) \right] \left(\frac{{\zeta}}{2}\right) e^{-{\zeta}t} (\epsilon/2)e^{-\epsilon t} dt < (1/4)\hat{\lambda} e^{-(\epsilon+{\zeta})r^{\max}} \left(\frac{\epsilon{\zeta}}{\epsilon+{\zeta}}\right)\\
    \implies & 2\int_{v_{med}}^{\infty} \left[\min_{\substack{a \in \texttt{Set}(\vect{q}) }}\mathfrak{h}(|t-a|) \right]((\epsilon+{\zeta})/2)e^{-(\epsilon + {\zeta}) t} dt <\hat{\lambda}e^{-(\epsilon+{\zeta})v_{med}} \\
    {\implies} & 2\int_{v_{med}}^{\infty} \left[\min_{\substack{a \in \texttt{Set}(\vect{q}) }}\mathfrak{h}(|t-a|) \right]((\epsilon+{\zeta})/2)e^{-(\epsilon + {\zeta}) (t-v_{med})} dt <\hat{\lambda}\\
    \overset{(b)}{\implies} & \hat{f}(\epsilon+{\zeta},k) < \hat{\lambda}
\end{align}

which is a contradiction.

$(a)$ follows from the fact that $r^{\max}>0$, thus $|t|=t$ and applying change of variables. The argument for $(b)$ follows from Lemma \ref{hat_f_lemma}.



\textit{Case b}: Consider the case when $r^{\max}<0$. We can thus write Equation \eqref{curr_eqn}
\begin{align}
    {\implies} & \int_{r^{\max}}^{0} \left[\min_{a \in \texttt{Set}(\vect{v})} \mathfrak{h}(|t-a|) \right] \left(\frac{\epsilon{\zeta}}{4}\right) e^{{\zeta}t} e^{-\epsilon t} dt + \int_{0}^{\infty} \left[\min_{a \in \texttt{Set}(\vect{v})} \mathfrak{h}(|t-a|) \right] \left(\frac{{\zeta}\epsilon}{4}\right) e^{-{\zeta}t} e^{-\epsilon t} dt\\
    & < \int_{r^{\max}}^{0} \hat{\lambda} \left( \epsilon{\zeta}/4 \right) e^{{\zeta} t} e^{-{\epsilon} t} dt + \int_{0}^{\infty} \hat{\lambda} \left( \epsilon{\zeta}/4 \right) e^{-{\zeta} t} e^{-{\epsilon} t} dt \\
    \overset{(a)}{\implies} & e^{-(\epsilon-{\zeta}) r^{\max}}\int_{v_{med}}^{v_{med}-r^{\max}} \left[\min_{a \in \texttt{Set}(\vect{q})} \mathfrak{h}(|t-a|) \right] e^{-(\epsilon-{\zeta})(t-v_{med})}  dt  \\
    + & e^{-(\epsilon+{\zeta}) r^{\max}}\int_{v_{med}-r^{\max}}^{0} \left[\min_{a \in \texttt{Set}(\vect{q})} \mathfrak{h}(|t-a|) \right] e^{-(\epsilon+{\zeta})(t-v_{med})} dt <  \hat{\lambda}\left[ \frac{1}{(\epsilon+{\zeta})} + \frac{e^{-(\epsilon-{\zeta})r^{\max}} -1}{(\epsilon - {\zeta})}\right]\\
    \overset{(b)}{\implies} & e^{-(\epsilon-{\zeta}) r^{\max}}\int_{v_{med}}^{\infty} \left[\min_{a \in \texttt{Set}(\vect{q})} \mathfrak{h}(|t-a|) \right] e^{-(\epsilon+{\zeta})(t-v_{med})}  dt  < \hat{\lambda}\left[ \frac{1}{(\epsilon+{\zeta})} + \frac{e^{-(\epsilon-{\zeta})r^{\max}} -1}{(\epsilon -{\zeta})}\right]\\
    {\implies} & 2e^{-(\epsilon-{\zeta}) r^{\max}}\int_{v_{med}}^{\infty} \left[\min_{a \in \texttt{Set}(\vect{q})} \mathfrak{h}(|t-a|) \right] \left(\frac{\epsilon+{\zeta}}{2}\right)e^{-(\epsilon+{\zeta})(t-v_{med})} dt  < \hat{\lambda}\left[1 + \frac{(\epsilon + {\zeta})}{(\epsilon -{\zeta})} (e^{-(\epsilon-{\zeta})r^{\max}} -1)\right]\\
    {\implies} & \frac{e^{-(\epsilon-{\zeta}) r^{\max}}2} {\left[1 + \frac{(\epsilon + {\zeta})}{(\epsilon -{\zeta})} (e^{-(\epsilon-{\zeta})r^{\max}} -1)\right]}\int_{v_{med}}^{\infty} \left[\min_{a \in \texttt{Set}(\vect{q})} \mathfrak{h}(|t-a|) \right] \left(\frac{\epsilon+{\zeta}}{2}\right)e^{-(\epsilon+{\zeta})(t-v_{med})} dt  < \hat{\lambda}\\
    \overset{(c)}{\implies} & \left(\frac{\epsilon-{\zeta}}{\epsilon+{\zeta}}\right)2\int_{v_{med}}^{\infty} \left[\min_{a \in \texttt{Set}(\vect{q})} \mathfrak{h}(|t-a|) \right] \left(\frac{\epsilon+{\zeta}}{2}\right)e^{-(\epsilon+{\zeta})(t-v_{med})} dt  < \hat{\lambda}\\
    & \overset{(d)}{\implies} \left(\frac{\epsilon-{\zeta}}{\epsilon+{\zeta}}\right) \hat{f}(\epsilon+{\zeta},k)< \hat{\lambda}
\end{align}

which is a contradiction.

$(b)$ follows from the fact that $\epsilon-{\zeta} < \epsilon+{\zeta}$ and thus, $-(\epsilon-{\zeta})r^{\max} < -(\epsilon+{\zeta})r^{\max}$ since $r^{\max} <0$. Also, $-(\epsilon+{\zeta})(t-v_{med}) < -(\epsilon-{\zeta})(t-v_{med})$ for $t>v_{med}$.
$(c)$ follows from by infimising $\frac{e^{-(\epsilon-{\zeta}) r^{\max}}} {\left[1 + \frac{(\epsilon + {\zeta})}{(\epsilon -{\zeta})} (e^{-(\epsilon-{\zeta})r^{\max}} -1)\right]}$ over all $r^{\max}<0$ (happens as $r^{\max} \to -\infty$) and $(d)$ follows from Lemma \ref{hat_f_lemma}.

We thus prove Equation \eqref{eqn:non-existence2} by showing contradiction under both cases.





\item[\namedlabel{proof-lemma}{Part - 2}.]


Observe that Equation \eqref{eqn:existence1} follows from Claim \ref{diff_eqn_soln_first_lemma} and Equation \eqref{eqn:non-existence2} (restated below) which has been shown above.

\begin{equation}{\label{eqn:existence1}}
            \set[\Big]{\mathfrak{r} \in \mathbb{R}: \nu^{(c)}(\mathfrak{r})=0 \text{ and } \frac{d\nu^{(c)}(\mathfrak{r})}{d \mathfrak{r}} < 0 } \text{ is upper bounded by $C + v_{\floor{k}}$}. 
\end{equation}. 

\begin{equation*}
            \set[\Big]{ \mathfrak{r} \in \mathbbm{R}\given \nu^{(c)}(r) \leq 0 \text{ }\forall r>\mathfrak{r}} = \phi
\end{equation*}

    




Let us denote the set of all the closed intervals in $[v_{med},\infty)$, where $\nu^{(c)}(.)$ is non-negative by $\mathcal{I}^{pos}$\footnote{We can construct such intervals as $\nu^{(c)}$ is \textit{continuously differentiable} and there does not exist an interval where $\nu^{(c)}$ is identically zero.}. Now observe, that the Equation \eqref{eqn:existence1} implies that \textit{upper bound} of every interval in $\mathcal{I}^{pos}$ must be upper bounded by $C+ v_{\floor{k}}$. This implies the existence of an unbounded interval where $\nu^{(c)}(.)$ is non-negative or an unbounded interval where $\nu^{(c)}(.)$ is non-positive. However Equation \eqref{eqn:non-existence2} implies that there can not exist an unbounded interval where $\nu^{(c)}(.)$ is non-positive and thus, we show that there exists an unbounded interval where  $\nu^{(c)}(.)$ is non-negative. This proves the statement in Claim \ref{lemma:diff_eqn_soln} that there exists $U \in \mathbbm{R}$ s.t. $\nu^{(c)}(u)\geq 0 \forall u \geq U$. 

\end{pfparts}

Using a very similar approach, we can prove that $\nu^{(c)}(u) \leq 0$ for $u\leq L$ for some $L \in \mathbbm{R}$.




\end{proof}

We now prove the main lemma \ref{lemma:dual_achievable} which shows that the objective value of $\hat{f}(\epsilon,k)$ is achievable by some solution in the DILP $\mathcal{E}$. Observe that this feasible solution is constructed using $\delta^{(c)}(.), \lambda^{(c)}(.)$ and $\nu^{(c)}(.)$ that we defined to satisfy Equation \eqref{eqn:diff_eqn_nu} and ensured positivity and negativity of $\nu^{(c)}(.)$ for ``sufficiently'' large and small $u$ respectively for every $\vect{v} \in \mathbb{R}^k$. However, we also require a technical claim \ref{claim:continuity_bounds} to prove the existence of continuous bounds $U(.)$ and $L(.)$. For sake of brevity, we state and prove it in Appendix \ref{sec:continuous_bounds_claim}. 

\begin{restatable} 
[detailed proof of Lemma \ref{lemma:dual_achievable}] {lemma*}{simrstepa}
         $\text{opt}(\mathcal{E}) \geq \hat{f}(\epsilon,k)$
\end{restatable}


\begin{proof}
    Recall the functions $\lambda^{(c)}(.), \delta^{(c)}(.)$ defined in \eqref{eqn:nu_delta_defn}. Also for every $\vect{v} \in \mathbb{R}^k$, we obtain a function $\nu^{(c)}(.,\vect{v})$ [solution of Equation \eqref{eqn:diff_eqn_nu}] with bounds $U(\vect{v})$ and $L(\vect{v})$ satisfying $\nu^{(c)}(r,\vect{v}) \geq 0 \forall u\geq U(\vect{v})$ and $\nu^{(c)}(r,\vect{v}) \leq 0 \forall u\leq L(\vect{v})$. Now, we argue that this solution is feasible attaining an objective value of $\hat{\lambda}$ which we argue below.

    \begin{itemize}
        \item Observe that $\int_{u \in \mathbb{R}} \lambda^{(c)}(u) = \hat{\lambda}$ and $\int_{u \in \mathbb{R}} \delta^{(c)}(u) = 1$
        \item The second constraint is satisfied as $\nu^{(c)}(\vect{v},.)$ is a solution of Equation \eqref{eqn:diff_eqn_nu}.
        \item Bounds $U(\vect{v})$ and $L(\vect{v})$ exist from statement in Lemma \ref{lemma:diff_eqn_soln}. The proof of continuity of $U(.)$ and $L(.)$ is slightly technical and we formally prove this in Claim \ref{claim:continuity_bounds}. Observe that all the assumptions in Claim \ref{claim:continuity_bounds} is satisfied due to results from Lemma \ref{lemma:diff_eqn_soln} and Lemma \ref{diff_eqn_soln_first_lemma}.
        
    \end{itemize}

    Now observe that the objective value of this feasible solution in $\hat{\lambda}$ and the constructed solution is feasible for any $\hat{\lambda} \leq \frac{\epsilon-\zeta}{\epsilon+\zeta} \hat{f}(\epsilon+\zeta,k)$ and $\zeta>0$. Now, since $\hat{f}(\epsilon,k)$ is continuous in $\epsilon$, choosing $\zeta$ to be arbitrarily small enables us to get obtain the objective value of the solution arbitrarily close to $\hat{f}(\epsilon,k)$ and thus, $\text{opt}(\mathcal{E}) \geq \hat{f}(\epsilon,k)$.
    
\end{proof}

\subsection{Optimal Result Selection given Laplace noise when $\mathfrak{h}(.)$ is an identity function}{\label{subsec-appendix:server_response}}



\begin{theorem} [corresponds to Theorem \ref{thm:lap:sim}]  \label{cor:geofinal}

Recall the definition of $\hat{f}(\epsilon,k)$ from Equation \eqref{eq:hat_f_defn} and suppose $A^{*}= \{x_1,x_2,\dots,x_{\K}\}$ where $x_1\leq x_2\leq \dots\leq x_{\K}$ minimises the same.\footnote{Here, we denote $A^{*}$ by a set instead of a vector.} Then, we have  


\begin{enumerate}
    \item When $\K$ is odd, $x_{(\K+1)/2}=0$. For $i\in [(\K-1)/2]$, $x_i=-2\log((\K+3)/2-i)/\epsilon$. For $i\in [\K]\setminus [(\K+1)/2]$, $x_i=-x_{\K+1-i}$. That is, if $t\defeq (\K -1)/2$, then
    \begin{align*}
        \{x_1,\dots,x_{\K}\}=\left\{
        0, \pm \frac{2\log\left(\frac{t+1}{t}\right)}{\epsilon},
        \pm \frac{2\log\left(\frac{t+1}{t-1}\right)}{\epsilon},
        \dots,
        \pm \frac{2\log\left(\frac{t+1}{2}\right)}{\epsilon},
        \pm \frac{2\log\left(t+1\right)}{\epsilon}
        \right\}.
    \end{align*}
    \item When $\K$ is even, $x_{\K/2}=-\log(1+2/\K)/\epsilon$. For $i\in [\K/2-1]$, $x_i=x_{i+1}-(2\log(1+1/i))/\epsilon$. For $i\in [\K]\setminus [\K /2]$, $x_i=-x_{\K+1-i}$. That is, if $t\defeq \K /2$, then
    \begin{align*}
        \{x_1,\dots,x_{\K}\}=\left\{
         \pm \frac{\log\left(\frac{t+1}{t}\right)}{\epsilon},
        \pm \frac{\log\left(\frac{t(t+1)}{(t-1)^2}\right)}{\epsilon},
        \pm \frac{\log\left(\frac{t(t+1)}{(t-2)^2}\right)}{\epsilon},
        \dots,
        \pm \frac{\log\left(\frac{t(t+1)}{2^2}\right)}{\epsilon},
        \pm \frac{\log\left(t(t+1)\right)}{\epsilon}
        \right\}.
    \end{align*}
\end{enumerate}
\end{theorem}
We prove this theorem using the following lemmas.

\begin{lemma}   \label{lem:median}

If $\min\limits_{A\in \Kads}  \Exxlimits{x\sim \Lap}{ \min\limits_{a\in A}\disrzero{a-x} }= \Exxlimits{x\sim \Lap}{ \min\limits_{a\in A^{*}}\disrzero{a-x} }$ and $A^{*}=\{x_1,x_2,\dots,x_{\K}\}$ where $x_1\leq x_2\leq \dots\leq x_{\K}$, we define $y_i\defeq (x_i+x_{i+1})/2$ for $i\in [\K -1]$, $y_0\defeq -\infty$, and $y_{\K}\defeq +\infty$. Then,
\begin{align} \label{eq:aveopt}
    \Prxx{z\sim \Lap}{  y_{i-1}<z<x_{i}} = \Prxx{z\sim \Lap}{  x_{i}<z<y_{i}}, \quad \forall~~i\in [\K].
\end{align}

\end{lemma}

\begin{proof}

Note that for any $i\in [\K]$, for any $z\in (y_{i-1},y_{i})$, $\min_{a\in A^{*}}\disrzero{a-x} = \disrzero{x_i-x}$. Then we have
\begin{align*}
    \Exx{x\sim \Lap}{ \min_{a\in A^{*}}\disrzero{a-x} }
    =
    \sum_{i=1}^{\K} \Prxx{z\sim \Lap}{y_{i-1}<z<y_{i}} \Exx{z\sim \Lap}{ \disrzero{x_i-z} | y_{i-1}<z<y_{i}}. 
\end{align*}
If $x_i\neq \Exx{z\sim \Lap}{ z | y_{i-1}<z<y_{i}}$ for a certain $i\in [\K]$, we can change the value of $x_i$ so that Equation \ref{eq:aveopt} holds, in which $x_i$ is somehow a median of $\Lap$ inside the interval $(y_{i-1},y_i)$, and then $\Exx{z\sim \Lap}{ \disrzero{x_i-z} | y_{i-1}<z<y_{i}}$ strictly decreases, which implies $\Exx{x\sim \Lap}{ \min_{a\in A^{*}}\disrzero{a-x} }$ decreases. However, $A^{*}$ is the optimal point, so it is a contradiction. As a result, for $i\in [\K]$, we have Equations \ref{eq:aveopt}.

\end{proof}

\begin{lemma}   \label{lem:aveadj}

Using the same definition of $x_i$ and $y_i$ as Lemma \ref{lem:median}, we have
\begin{enumerate}
    \item for $i\in [\K -1]$, if $y_i\leq 0$, we have $x_i=y_i-\frac{\log(1+1/i)}{\epsilon}$,
    \item for $i\in [\K]\setminus \{1\}$, if $y_{i-1}\geq 0$, we have $x_i=y_{i-1}+\frac{\log(1+1/(\K-i+1))}{\epsilon}$.
\end{enumerate}

\end{lemma}

\begin{proof}

We first show $x_i=y_i-\frac{\log(1+1/i)}{\epsilon}$ when $y_i\leq 0$. We prove it by induction. When $i=1$, by Equation \ref{eq:aveopt}, we have $\Prxx{z\sim \Lap}{  z<x_{1}} = \Prxx{z\sim \Lap}{  x_{1}<z<y_{1}}$. If $y_1\leq 0$, we have
\begin{align*}
    \int_{-\infty}^{x_{1}} e^{\epsilon z} dz = \int_{x_{1}}^{y_{1}} e^{\epsilon z} dz,
\end{align*}
which is equivalent to
\begin{align*}
    e^{\epsilon x_1} = e^{\epsilon y_1} - e^{\epsilon x_1},
\end{align*}
so $y_1=x_1+\log(2)/\epsilon$.

If $y_i\leq 0$, assuming $x_{i-1}=y_{i-1}-\frac{\log(1+1/(i-1))}{\epsilon}$, then $x_i=y_{i-1}+\frac{\log(1+1/(i-1))}{\epsilon}$. By Equation \ref{eq:aveopt}, we have
\begin{align*}
    \int_{y_{i-1}}^{x_i} e^{\epsilon z} dz = \int_{x_{i}}^{y_{i}} e^{\epsilon z} dz,
\end{align*}
which is equivalent to
\begin{align*}
    \exp{(\epsilon x_{i})} - \exp{\left(\epsilon x_{i} - \frac{\log(1+1/(i-1))}{\epsilon}\right)} = \exp{(\epsilon y_{i})} - \exp{(\epsilon x_{i})},
\end{align*}
so $y_i=x_i+\frac{\log(1+1/i)}{\epsilon}$.

As a result, for $i\in [\K -1]$, if $y_i\leq 0$, we have $x_i=y_i-\frac{\log(1+1/i)}{\epsilon}$.

By symmetry, using similar arguments, we know that for $i\in [\K]\setminus \{1\}$, if $y_{i-1}\geq 0$, then $x_i=y_{i-1}+\frac{\log(1+1/(\K-i+1))}{\epsilon}$.

\end{proof}

\begin{lemma}   \label{lem:geobalance}

Using the same definition of $x_i$ and $y_i$ as Lemma \ref{lem:median}, if $y_i\leq 0\leq y_{i+1}$ where $i\in [\K -2]$, then $|i-(\K-i)|\leq 1$.

\end{lemma}

\begin{proof}

By Lemma \ref{lem:aveadj}, we have
\begin{align*}
    \begin{cases}
    x_i=y_i-\frac{\log(1+1/i)}{\epsilon},    \\
    x_{i+2}=y_{i+1}+\frac{\log(1+1/(\K-i-1))}{\epsilon},
    \end{cases}
\end{align*}
which implies
\begin{align*}
    y_i+\frac{\log(1+1/i)}{\epsilon} = x_{i+1} = y_{i+1}-\frac{\log(1+1/(\K-i-1))}{\epsilon}.
\end{align*}

Since $y_i\leq 0\leq y_{i+1}$, we have $x_{i+1}\in \left[\beta_1,\beta_2 \right]$,
where
\begin{align*}
    \beta_1 \defeq -\frac{\log(1+1/(\K-i-1))}{\epsilon},
    \quad
    \beta_2 \defeq \frac{\log(1+1/i)}{\epsilon}.
\end{align*}

Define 
\begin{align*}
    g_1(x)\defeq \int_{x-\frac{\log(1+1/i)}{\epsilon}}^{x} e^{-\epsilon |z|} dz,~~\text{and}~~
    g_2(x) \defeq  \int_{x}^{x+\frac{\log(1+1/(\K-i-1))}{\epsilon}} e^{-\epsilon |z|} dz,
\end{align*}
and define $h(x)\defeq g_1(x)/g_2(x)$. By Lemma \ref{lem:median}, we have $g_1(x_{i+1})=g_2(x_{i+1})$, which implies $h(x_{i+1})=1$.

For $\beta_1 \leq y_1<y_2\leq 0$, we have $g_1(y_1)= e^{y_1-y_2} g_1(y_2)$, and $g_2(y_1)> e^{y_1-y_2} g_2(y_2)$, so $h(y_1)< h(y_2)$. Similarly, for $0\leq y_1<y_2\leq \beta_2$, we have $g_1(y_1)< e^{y_2-y_1} g_1(y_2)$, and $g_2(y_1)= e^{y_2-y_1} g_2(y_2)$, so $h(y_1)< h(y_2)$. As a result, $h(x)$ is strictly increasing on $[\beta_1,\beta_2]$.

If $i>\K-i+1$, $h(\beta_2)<1$. If $i+1<\K-i$, $h(\beta_1)>1$. In both cases, for any $x\in [\beta_1,\beta_2]$, $h(x)\neq 1$. It contradicts to $h(x_{i+1})=1$ and $x_{i+1}\in [\beta_1,\beta_2]$. As a result, we have $|i-(\K -i)|\leq 1$.

\end{proof}

Combining Lemmas \ref{lem:aveadj} and \ref{lem:geobalance}, we have the following corollary \ref{cor:geofinal}.

And thus, Theorem \ref{thm:main1} is proved.

\renewcommand{\zhihao}[1]{}
\renewcommand{\sss}[1]{ }







\subsection{Proof of weak duality as stated in Theorem \ref{theorem:weak_duality_result}}{\label{sec:weak_duality}}

We now restate the weak duality theorem from Theorem \ref{theorem:weak_duality_result}.

\begin{restatable} 
[proof of Theorem \ref{theorem:weak_duality_result}] {theorem*}{simrstepa}
    $\text{opt}(\mathcal{O}) \geq \text{opt}(\mathcal{E})$.
\end{restatable}


To prove this lemma, we now redefine the optimization problems $\mathcal{O}$ and $\mathcal{E}$ in \eqref{orig_primal} and \eqref{orig_dual}. 

\zhihao{We optimize the functions to be rieman integrable, is it on any bounded set, or it also works for unbounded set?}

\sss{Observe that since $g(.)$ is non-negative, Reimann integral on a bounded set would imply the existence of integral on an unbounded set too, the integral may go to infinity but is always defined.}





\begin{equation}
\text{ $\mathcal{O}$ =}
\left\{
\begin{aligned}
    \inf_{g(.,.): \mathcal{I}^B (\mathbb{R} \times \mathbb{R}^k \rightarrow \mathbb{R}^{+}), \kappa \in \mathbb{R}} \quad & \kappa \\
    \textrm{s.t.} \quad & \kappa - \int\limits_{\vect{x} \in \mathbb{R}^k} \left[\min_{a \in \texttt{Set}(\vect{x})} \mathfrak{h}(|u-a|) \right] g(u,\vect{x}) d\left(\prod_{i=1}^{k} x_i\right) \geq 0 \text{ }\forall u\in \mathbb{R}\\
    & \int\limits_{\vect{x} \in \mathbb{R}^k} g(u,\vect{x}) d\left(\prod_{i=1}^{k} x_i\right) = 1 \text{ } \forall u\in \mathbb{R}\\
    &  \epsilon g(u,\vect{x}) + \underline{g}_{u}(u,\vect{x}) \geq 0; \text{ }\forall u\in \mathbb{R}; \vect{x} \in \mathbb{R}^k\\
    &  \epsilon g(u,\vect{x}) - \overline{g}_{u}(u,\vect{x}) \geq 0; \text{ }\forall u\in \mathbb{R}; \vect{x} \in \mathbb{R}^k
\end{aligned}
\right.
\end{equation}

\begin{equation}
\text{ {$\mathcal{E}$ =}}
\left\{
\begin{aligned}
    \sup_{\substack{{\nu}(.,.): \mathcal{C}^1 (\mathbb{R} \times \mathbb{R}^k \rightarrow \mathbb{R}) \\ 
    \lambda(.): \mathcal{C}^0(\mathbb{R} \rightarrow \mathbb{R}^{+})\\
    \delta(.): \mathcal{C}^0(\mathbb{R} \rightarrow \mathbb{R}^{+})}} & \int_{u\in \mathbb{R} } {\lambda}(u) du\\
    \textrm{ s.t. } \quad & \int_{u\in \mathbb{R}} {\delta}(u) du \leq 1\\
    & \hspace{-3 em} - \left[\min_{a \in \texttt{Set}(\vect{x})} \mathfrak{h}(|u-a|) \right] {\delta}(u) + {\lambda}(u) + \epsilon |{\nu}(u,\vect{x})| + {\nu}_u(u,\vect{x})   \leq 0 \text{ } \forall u\in \mathbb{R}, \forall \vect{x} \in \mathbb{R}^k \\
     & \exists U: \mathcal{C}^0(\mathbb{R}^k \rightarrow \mathbb{R})  \textrm { s.t. }\nu(u,\vect{x}) \geq 0 \text{ }\forall u\geq U(\vect{x}) \text{ } \forall \vect{x} \in (\mathbb{R})^k \\
    & \exists L: \mathcal{C}^0(\mathbb{R}^k \rightarrow \mathbb{R}) \textrm { s.t. } \nu(u,\vect{x}) \leq 0 \text{ } \forall u\leq L(\vect{x}) \text{ } \forall \vect{x} \in (\mathbb{R})^k \\    
\end{aligned}
\right.    
\end{equation}

We now define a DILP $\mathcal{E}^{int}$ which effectively splits the function $\nu(.,.)$ into a negative and a positive part and prove a lemma \ref{lemma1} bounding the optimal value of $\mathcal{E}$ by $\mathcal{E}^{int}$.

\begin{equation}
\text{ {$\mathcal{E}^{int}$ =}}
\left\{
\begin{aligned}
    \sup_{\substack{\delta(.): \mathcal{C}^0 (\mathbb{R} \rightarrow \mathbb{R}^{+}) ,{\nu^{(1)}}(.,.): \mathcal{C}^1 (\mathbb{R} \times \mathbb{R}^k \rightarrow \mathbb{R}^{+}) \\ 
    {\nu^{(2)}}(.,.): \mathcal{C}^1 (\mathbb{R} \times \mathbb{R}^k \rightarrow \mathbb{R}^{+}), 
    \lambda(.): \mathcal{C}^0(\mathbb{R} \rightarrow \mathbb{R}^{+})}} & \int_{u\in \mathbb{R} } {\lambda}(u) du \\
    \textrm{ s.t. } \quad & \int_{u\in \mathbb{R}} {\delta}(u) du\leq 1\\
    & \hspace{-2 em} - \left[\min_{a \in \texttt{Set}(\vect{x})} \mathfrak{h}(|u-a|) \right] {\delta}(u) + {\lambda}(u) + \epsilon \left(-{\nu}^{(1)}(u,\vect{x}) +{\nu}^{(2)}(u,\vect{x})\right)\\ 
    & \hspace{5em} + {\nu}^{(1)}_u(u,\vect{x}) +{\nu}^{(2)}_u(u,\vect{x}) \leq 0 \text{    } \forall u\in \mathbb{R}, \forall \vect{x} \in \mathbb{R}^k \\
    & \hspace{-2 em}\exists U: \mathcal{C}^0(\mathbb{R}^k \rightarrow \mathbb{R})  \textrm { s.t. }-\nu^{(1)}(u,\vect{x}) + \nu^{(2)}(u,\vect{x}) \geq 0 \text{ }\forall u\geq U(\vect{x}) \text{ } \forall \vect{x} \in (\mathbb{R})^k \\
    & \hspace{-2 em}\exists L: \mathcal{C}^0(\mathbb{R}^k \rightarrow \mathbb{R}) \textrm { s.t. } -\nu^{(1)}(u,\vect{x}) + \nu^{(2)}(u,\vect{x}) \leq 0 \text{ } \forall u\leq L(\vect{x}) \text{ } \forall \vect{x} \in (\mathbb{R})^k
\end{aligned}
\right.    
\end{equation}

We now prove the following set of lemmas and denote the optimal value of DILP given by $\mathcal{O}$ as $\text{opt}(\mathcal{O})$

\begin{lemma}{\label{lemma1}}
    $\text{opt}(\mathcal{E}^{int}) \geq \text{opt}(\mathcal{E})$
\end{lemma}

\begin{proof}
    This proof follows by restricting exactly one of the values $\nu^{(1)}(u,\vect{x})$ or $\nu^{(2)}(u,\vect{x})$ = 0. In this case, we may define $\nu(u,\vect{x}) = -\nu^{(1)}(u,\vect{x}) + \nu^{(2)}(u,\vect{x})$ and thus, $|\nu(u,\vect{x})| = \nu^{(1)}(u,\vect{x}) + \nu^{(2)}(u,\vect{x})$. Since, we restrict the optimization variables i.e. add an extra constraint, we get $\text{opt}(\mathcal{E}^{int}) \geq \text{opt}(\mathcal{E})$
\end{proof}


\begin{lemma}{\label{lemma2}}
    $\text{opt}(\mathcal{O}) \geq \text{opt}(\mathcal{E}^{int})$.
\end{lemma}

Before we start the proof, we give some key observations which allow us to prove the weak duality. Observe that the constraint of DILP $\mathcal{O}$ involves a linear constraint of derivative of $g(.,.)$. We use integration by parts in this proof to convert this into a constraint on the derivative of the dual variable $\nu(.)$ in the dual DILP $\mathcal{E}$. We state the equation for this below. \footnote{For sake of brevity, we give the inequality using derivatives and not upper and lower derivatives. A formal version of this is given in equation \eqref{integral_by_parts}.} 
\begin{align}
    & \int_{u \in \mathbb{R}} (\nu^{(1)}(u,\vect{x}) -\nu^{(2)}(u,\vect{x}) )g_u(u,\vect{x}) du \\
    = & \left[(\nu^{(1)}(u,\vect{x}) -\nu^{(2)}(u,\vect{x})) g(u,\vect{x})\right]_{u=-\infty}^{+\infty} - \int_{u \in \mathbb{R}} (\nu^{(1)}_u(u,\vect{x}) -\nu^{(2)}_u(u,\vect{x}) ) g(u,\vect{x}) du\\
    \leq & - \int_{u \in \mathbb{R}} (\nu^{(1)}(u,\vect{x}) -\nu^{(2)}(u,\vect{x}) ) g(u,\vect{x})
\end{align}

The last step follows from the last two constraints in DILP $\mathcal{E}$ i.e.  $\lim_{u \to \infty} \nu^{(1)}(u,\vect{x}) - \nu^{(2)}(u,\vect{x}) \leq 0$ and  $\lim_{u \to -\infty} \nu^{(1)}(u,\vect{x}) - \nu^{(2)}(u,\vect{x}) \geq 0$


We now move to the proof of the weak duality. Notice that this proof \ref{subsec-proof:formal_proof} bears resemblance to the weak duality proof, albeit with a clever utilization of integration by parts, Fatou's lemma, and the monotone convergence theorem. This approach allows for the exchange of limits and integrals in a strategic manner. 

To get an intuition on the proof steps, we first present an informal proof \ref{subsec-proof:informal_proof} where we assume functions are always integrable and exchange limits and integrals without giving explicit reasons. To look into its formal treatment refer to Section \ref{subsec-proof:formal_proof}

\subsubsection{Informal proof to Theorem \ref{theorem:weak_duality_result}}\label{subsec-proof:informal_proof}

As discussed above, in this proof we just give an intuition for the steps to gain an understanding without giving formal reasons for exchange of integrals and limits. We further assume that the lower and upper derivatives are integrable in this part. To look into its formal treatment refer to Proof in Section \ref{subsec-proof:formal_proof}.
\begin{proof}[Informal proof]

    Now consider any feasible solution $g(.,.)$ and $\kappa$ in the primal DILP $\mathcal{O}$ and feasible solution $\nu^{(1)}(.,.), \nu^{(2)}(.,.), \lambda(.) \text{ and } \delta(.)$ in the dual DILP $\mathcal{E}^{int}$.


    Now pre-multiply the first constraint in DILP $\mathcal{O}$ by $\delta(u)$ second constraint in DILP by $\lambda(u)$, the third constraint in DILP by $\nu^{(1)}(u,\vect{x})$ and fourth constraint in DILP $\mathcal{O}$ by $\nu^{(2)}(u,\vect{x})$

    \begin{align}
        & \int\limits_{u \in \mathbb{R}} \delta(u) \left[\kappa - \int\limits_{\vect{x} \in \mathbb{R}^k} \left(\min_{a \in \texttt{Set}(\vect{x})} \mathfrak{h}(|u-a|) \right) g(u,\vect{x}) d\left(\prod_{i=1}^{k} x_i\right) \right]du + \int\limits_{u \in \mathbb{R}} \int\limits_{\vect{x} \in \mathbb{R}^k} g(u,\vect{x}) \lambda(u) d\left(\prod_{i=1}^{k} x_i\right) du\nonumber\\
        & \hspace{-6 em}+ \int\limits_{u \in \mathbb{R}} \int\limits_{\vect{x} \in\mathbb{R}^k} (\epsilon g(u,\vect{x}) + \underline{g}_u(u,\vect{x})) \nu^{(1)}(u,\vect{x}) d\left(\prod_{i=1}^{k} x_i\right) du + \int\limits_{u \in \mathbb{R}} \int\limits_{\vect{x} \in\mathbb{R}^k} (\epsilon g(u,\vect{x}) -\overline{g}_u(u,\vect{x})) \nu^{(2)}(u,\vect{x}) d\left(\prod_{i=1}^{k} x_i\right) du \geq \int\limits_{u \in \mathbb{R}} \lambda(u) du\\
        \overset{(e)}{\implies} & \kappa  + \int\limits_{\vect{x} \in \mathbb{R}^k} \left[\int\limits_{u \in \mathbb{R}} \left(-\left[\min_{a \in \texttt{Set}(\vect{x})} \mathfrak{h}(|u-a|) \right] \delta(u) + \lambda(u) + \epsilon \nu^{(2)}(u,\vect{x})  + \epsilon \nu^{(1)}(u,\vect{x})\right)du \right] g(u,\vect{x}) d\left(\prod_{i=1}^{k} x_i\right) \nonumber\\
        & + \int_{\vect{x} \in \mathbb{R}^k} \left[\int_{u \in \mathbb{R}} \left(\nu^{(1)}(u,\vect{x}) (u,\vect{x}) \underline{g}_u(u,\vect{x}) - \nu^{(2)}(u,\vect{x}) \overline{g}_u(u,\vect{x})\right)  du \right] d\left(\prod_{i=1}^{k} x_i\right) \geq \int\limits_{u \in \mathbb{R}} \lambda(u) du \label{first_part_step}
    \end{align}

$(e)$ follows by exchanging integrals since each term is positive and the constraint$\int_{u} \delta(u) \leq 1$.

Now, we solve the for the term $\int_{u \in \mathbb{R}} \left[\nu^{(1)}(u,\vect{x}) \underline{g}_u(u,\vect{x}) du- \nu^{(2)}(u,\vect{x}) \overline{g}_u(u,\vect{x})\right] du$

\begin{align}
    & \int_{u \in \mathbb{R}} \left[\nu^{(1)}(u,\vect{x}) \underline{g}_u(u,\vect{x}) du- \nu^{(2)}(u,\vect{x}) \overline{g}_u(u,\vect{x})\right] du\\
    & \overset{(a)}{\leq} \left[\left(\nu^{(1)}(u,\vect{x}) - \nu^{(2)}(u,\vect{x})\right) g(u,\vect{x}) \right]_{u=-\infty}^{+\infty} - \left[\int\limits_{u \in \mathbb{R}} (\nu^{(1)}_u(u,\vect{x})-\nu^{(2)}_u(u,\vect{x})) {g}(u,\vect{x}) du\right]\\
    & \overset{(b)}{\leq} -\left[\int\limits_{u \in \mathbb{R}} \left(\nu^{(1)}_u(u,\vect{x}) - \nu^{(2)}_u(u,\vect{x})\right) {g}(u,\vect{x}) du\right] \label{integral_by_parts}
\end{align}

$(a)$ follows from integration of parts and the inequality from the fact that we take lower derivative and upper derivative of $g(.)$ respectively. $(b)$ follows from the third and fourth constraints in the LP $D^{int}$ on $\nu^{(1)} (u,\vect{x}) - \nu^{(2)} (u,\vect{x})$ as $u \to \infty$ or $u \to -\infty$.

Combining the inequalities in \eqref{first_part_step} and \eqref{integral_by_parts}, we get,

\begin{align}
    \implies & \kappa  + \int\limits_{\vect{x} \in \mathbb{R}^k} \left[\int\limits_{u \in \mathbb{R}} \left(\left[\min_{a \in \texttt{Set}(\vect{x})} \mathfrak{h}(|u-a|) \right] \delta(u) + \lambda(u) + \epsilon \nu^{(2)}(u,\vect{x})  + \epsilon \nu^{(1)}(u,\vect{x})\right)du \right] g(u,\vect{x}) d\left(\prod_{i=1}^{k} x_i\right) \nonumber\\
        &  -\int_{\vect{x} \in \mathbb{R}^k}\left[\int\limits_{u \in \mathbb{R}} \left(\nu^{(1)}_u(u,\vect{x}) - \nu^{(2)}_u(u,\vect{x})\right) {g}(u,\vect{x}) du\right]d\left(\prod_{i=1}^{k} x_i\right) \geq \int\limits_{u \in \mathbb{R}} \lambda(u) du \\
    \implies & \kappa + \int_{\vect{x} \in \mathbb{R}^k} \Biggl[ \int\limits_{u \in \mathbb{R}} \Biggl(\left(\min_{a \in \texttt{Set}(\vect{x})} \mathfrak{h}(|u-a|) \right) \delta(u) + \lambda(u) + \epsilon \nu^{(2)}(u,\vect{x})  + \epsilon \nu^{(1)}(u,\vect{x}) \\
    & \hspace{10 em} -\left(\nu^{(1)}_u(u,\vect{x}) - \nu^{(2)}_u(u,\vect{x})\right)\Biggr) du\Biggr]
    g(u,\vect{x}) d\left(\prod_{i=1}^{k} x_i\right) \geq \int\limits_{u \in \mathbb{R}} \lambda(u) du\\
    & \overset{(c)}{\implies} \kappa \geq \int\limits_{u \in \mathbb{R}} \lambda(u) du
\end{align}

$(c)$ follows from the first and second constraint in LP $\mathcal{E}^{int}$. Since this inequality is true for every feasible solution in the primal $\mathcal{O}$ and dual $\mathcal{E}^{int}$, we have the proof in the theorem.

\end{proof}

\subsubsection{A formal proof of Theorem \ref{theorem:weak_duality_result}}{\label{subsec-proof:formal_proof}}
\begin{proof}
    Now consider any feasible solution $g(.,.)$ and $\kappa$ in the primal DILP $\mathcal{O}$ and feasible solution $\nu^{(1)}(.,.), \nu^{(2)}(.,.), \lambda(.) \text{ and } \delta(.)$ in the dual DILP $\mathcal{E}^{int}$.


    Now pre-multiply the first constraint in DILP $\mathcal{O}$ by $\delta(u)$ second constraint in DILP by $\lambda(u)$, the third constraint in DILP by $\nu^{(1)}(u,\vect{x})$ and fourth constraint in DILP $\mathcal{O}$ by $\nu^{(2)}(u,\vect{x})$ and take the lower Riemann-Darboux integrals for the last two terms and the integral for the first two terms. \footnote{The integral may not be defined as $\underline{g}_u$ and $\overline{g}_u$ may not be Riemann integrable, however the first and second terms are integrable which follows since the product of integrable functions is integrable. }. This approach is very similar to the use of Lagrange multiplier in weak duality proof in linear programming. Also observe that some limits may take values in the extended reals i.e. $\mathbbm{R} \cup \{-\infty,\infty\}$ and thus, we define the limits in extended reals.

    \zhihao{Is $\underline{g}_u$ \textit{Reinamann integrable}?}
    \sss{New proof does not require it as we use lower \textit{Reinamann integral} which is always defined}. 
    
    Observe that $d^l,d^u \in \mathbb{R}$,$\vect{c}^l,\vect{c}^u \in \mathbb{R}^k$ and $c^l_i,c^u_i$ refers to its $i^{th}$ component of $\vect{c}^l$ and $\vect{c}^u \in \mathbb{R}^k$ respectively.

Thus, we get\footnote{$\lowint f(x) dx$ and $\upint f(x) dx$ denotes the lower and upper integral respectively.}

    \begin{align}
        & \hspace{-1.5 em}\lim_{\substack{d^l \to -\infty \\ d^u \to \infty}} \int\limits_{u = d^l}^{d^u} \delta(u) \left[\kappa - \int\limits_{\vect{x} \in \mathbb{R}^k} \left(\min_{a \in \texttt{Set}(\vect{x})} \mathfrak{h}(|u-a|) \right) g(u,\vect{x}) d\left(\prod_{i=1}^{k} x_i\right) \right]du \\
        & \hspace{18 em}+ \lim_{\substack{d^l \to -\infty \\ d^u \to \infty}} \int\limits_{u = d^l}^{d^u} \left(\int\limits_{\vect{x} \in \mathbb{R}^k } g(u,\vect{x})  d\left(\prod_{i=1}^{k} x_i\right) \right)\lambda(u) du\nonumber\\
        & \hspace{-3 em}+ \liminf_{d, \vect{c} \to \infty}\lowint_{\substack{u \in [d^l,d^u] \\ \vect{x} \in \prod\limits_{i=1}^{k}[c^l_i,c^u_i]}} \left((\epsilon g(u,\vect{x}) + \underline{g}_u(u,\vect{x})) \nu^{(1)}(u,\vect{x}) + (\epsilon g(u,\vect{x}) -\overline{g}_u(u,\vect{x})) \nu^{(2)}(u,\vect{x}) \right)d\left(\prod_{i=1}^{k} x_i\right) du \geq \int\limits_{u \in \mathbb{R}} \lambda(u) du\\
        & \hspace{-5 em} \text{ }\implies \lim_{\substack{d^l \to -\infty \\ d^u \to \infty}} \int\limits_{u =d^l}^{d^u} \delta(u) \lim_{\substack{\vect{c}^u \to \infty\\ \vect{c}^l \to -\infty}} \left[\kappa - \int\limits_{\vect{x} \in \prod\limits_{i=1}^{k}[c^l_i,c^u_i]} \left(\min_{a \in \texttt{Set}(\vect{x})} \mathfrak{h}(|u-a|) \right) g(u,\vect{x}) d\left(\prod_{i=1}^{k} x_i\right)\right]du \nonumber\\
        & \hspace{20 em} + \lim_{\substack{d^l \to -\infty \\ d^u \to \infty}} \int\limits_{u =d^l}^{d^u} \lim_{\substack{\vect{c}^u \to \infty\\ \vect{c}^l \to -\infty}} \left[\int\limits_{\vect{x} \in  \prod\limits_{i=1}^{k}[c^l_i,c^u_i]} g(u,\vect{x}) \lambda(u) d\left(\prod_{i=1}^{k} x_i\right) \right] du \nonumber\\
        & \hspace{-3 em}+ \liminf_{d, \vect{c} \to \infty}\lowint_{\substack{u \in [d^l,d^u] \\ \vect{x} \in \prod\limits_{i=1}^{k}[c^l_i,c^u_i]}} \left((\epsilon g(u,\vect{x}) + \underline{g}_u(u,\vect{x})) \nu^{(1)}(u,\vect{x}) + (\epsilon g(u,\vect{x}) -\overline{g}_u(u,\vect{x})) \nu^{(2)}(u,\vect{x}) \right)d\left(\prod_{i=1}^{k} x_i\right) du \geq \int\limits_{u \in \mathbb{R}} \lambda(u) du\\
        & \hspace{-1 em} \text{ }\overset{(c)}{\implies} \lim_{d\to \infty} \lim_{\vect{c}\to \infty} \int\limits_{u =d^l}^{d^u} \left(\kappa \delta(u) + \int\limits_{\vect{x} \in \prod\limits_{i=1}^{k}[c^l_i,c^u_i]} \left(\left(-\delta(u)  \min_{a \in \texttt{Set}(\vect{x})} \mathfrak{h}(|u-a|) + \lambda(u)\right)g(u,\vect{x}) \right)d\left(\prod_{i=1}^{k} x_i\right)\right) du \nonumber\\ 
        & \hspace{4 em} + \liminf_{d,\vect{c} \to \infty}\Biggl( \int_{\substack{u \in [d^l,d^u] \\ \vect{x} \in \prod\limits_{i=1}^{k}[c^l_i,c^u_i]}} \left(\epsilon g(u,\vect{x}) \left(\nu^{(1)}(u,\vect{x}) + \nu^{(2)}(u,\vect{x})\right)\right) d\left(\prod_{i=1}^{k} x_i\right) du\nonumber\\
        & \hspace{6 em}+ \lowint_{\substack{u \in [d^l,d^u] \\ \vect{x} \in \prod\limits_{i=1}^{k}[c^l_i,c^u_i]}} \left(\underline{g}_u(u,\vect{x}) \nu^{(1)}(u,\vect{x}) -\overline{g}_u(u,\vect{x}) \nu^{(2)}(u,\vect{x}) \right)d\left(\prod_{i=1}^{k} x_i\right) du\Biggr) \geq \int\limits_{u \in \mathbb{R}} \lambda(u) du\\
        & \hspace{-4.5 em}\overset{(d)}{\implies}   \kappa  + \limsup_{{d},\vect{c} \to \infty}\Biggl(\int_{\vect{x} \in \prod\limits_{i=1}^{k}[c^l_i,c^u_i]} \left[\int\limits_{u = d^l}^{d^u} \left(-\left[\min_{a \in \texttt{Set}(\vect{x})} \mathfrak{h}(|u-a|) \right] \delta(u) + \lambda(u) \right)du \right] g(u,\vect{x}) d\left(\prod_{i=1}^{k} x_i\right)\Biggr) \nonumber\\
        & + \liminf_{d,\vect{c} \to \infty} \Biggl( \int_{\substack{u \in [d^l,d^u] \\ \vect{x} \in \prod\limits_{i=1}^{k}[c^l_i,c^u_i]}} \left(\epsilon g(u,\vect{x}) \left(\nu^{(1)}(u,\vect{x}) + \nu^{(2)}(u,\vect{x})\right)\right) d\left(\prod_{i=1}^{k} x_i\right) du \nonumber\\ 
        &\hspace{4.5 em} + \lowint_{\vect{x} \in \prod\limits_{i=1}^{k}[c^l_i,c^u_i]} \left[\lowint_{u = d^l}^{d^u} \left(\nu^{(1)}(u,\vect{x}) \underline{g}_u(u,\vect{x}) - \nu^{(2)}(u,\vect{x}) \overline{g}_u(u,\vect{x})\right)  du \right] d\left(\prod_{i=1}^{k} x_i\right) \Biggr) \geq \int\limits_{u \in \mathbb{R}} \lambda(u) du \label{first_part_step}
    \end{align}


\zhihao{In (71), is $\kappa$ has a factor $\delta(u)$?}
\sss{Was a typo addressed now}

$(c)$ \footnote{$d, \to \infty$ is a shorthand notation for $d^u \to \infty$, $d^l \to -\infty$ and $\vect{c} \to \infty$ is a shorthand for $\vect{c}^u \to \infty$ and $\vect{c}^l \to -\infty$ } follows from the following observations.

\begin{itemize}
    \item For the first two integrals, apply monotone convergence theorem (MCT)  which allows us to exchange limit and integral. This is possible since the expression in $[.]$ is positive and non-decreasing as $\vect{c}^u \to \infty$ and $\vect{c}^l \to -\infty$ due to first constraint of DILP $\mathcal{O}$ and the fact that the $g(.,.)$ is non-negative.
\end{itemize}




\zhihao{For limit exchange, $g(.,.)$ is non-negative, but the expression in [.] may not be non-negative}

\sss{No, it is non-negative check condition 1 and condition 2 of DILP $\mathcal{O}$ and use the fact that g(.) and $\mathfrak{h}(.)$ is non-negative.}

$(d)$ follows from the following steps.

\begin{itemize}
    \item Apply the \textit{interated integral} inequality \cite[Proposition 3.9]{integral} that $\lowint_{A \times B} g(A,B) dA dB \leq \lowint_{A} \left[\lowint_{B} g(A,B) dB\right]dA$ for the last integral.

    \item We exchange integrals for the first term using \textit{Fubini's theorem} \cite[Theorem 3.10]{integral} as the functions are \textit{Riemann integrable} and use the fact that $\int_{u \in \mathbbm{R}} \delta(u) \leq 1$.
    
    \item We upper bound the iterated limits by a limsup as $\limsup_{m,n} a_{m,n} \geq \lim_m \lim_n a_{m,n}$.
    
\end{itemize}

\zhihao{It seems like, Proposition 3.9 only works for integral on bounded set. On the top of Page 19 in \cite{integral}, it says \emph{Let A' be a bounded subset of $R^m$ and A'' be a bounded subset of $R^n$}}

\sss{Addressed it I put limits outside and used Monotone Convergence Theorem and Fatou's lemma to exchange integrals and limits when needed.}

Now, we bound the term $\lowint_{u \in [d^l,d^u]} \left[\nu^{(1)}(u,\vect{x}) \underline{g}_u(u,\vect{x}) du- \nu^{(2)}(u,\vect{x}) \overline{g}_u(u,\vect{x})\right] du$.\footnote{Note that the second integral is defined as the product of \textit{Riemann integrable} functions is also \textit{integrable}}. Observe that,
\begin{align}
    & \lowint_{u \in  [d^l,d^u]} \left[\nu^{(1)}(u,\vect{x}) \underline{g}_u(u,\vect{x}) du- \nu^{(2)}(u,\vect{x}) \overline{g}_u(u,\vect{x})\right] du +  \int_{u \in [d^l,d^u]} \left(\nu^{(1)}_u(u,\vect{x}) - \nu^{(2)}_u(u,\vect{x})\right) {g}(u,\vect{x}) du\nonumber\\
    \overset{(f)}{\leq} & \left[\left(\nu^{(1)}(u,\vect{x}) - \nu^{(2)}(u,\vect{x})\right) g(u,\vect{x}) \right]_{u=d^l}^{d^u} \label{integral_by_parts}
\end{align}

\zhihao{why is (\ref{integral_by_parts})? And is the right side always zero? (It is $\left(\nu^{(1)}(u,\vect{x}) - \nu^{(2)}(u,\vect{x})\right) g(u,\vect{x})$ minus itself)}

\sss{Sorry it was a major typo addressed now}

$(f)$ follows from the chain rule of differentiation and the inequality from the fact that we take lower derivative of $-g(.)$ is upper derivative of $g(.)$

\zhihao{Is limsup/liminf sufficient for this argument?} 

\sss{Addressed it now}




Now consider (observe we take the -ve of the last term in Equation \eqref{integral_by_parts}). 

\begin{align}{\label{eq_lower_upper_limit_init}}
    \hspace{-5 em}& \limsup_{d,\vect{c} \to \infty} \left[\int_{\vect{x} \in \prod\limits_{i=1}^{k}[c^l_i,c^u_i]}-\left[\left(\nu^{(1)}(u,\vect{x}) - \nu^{(2)}(u,\vect{x})\right) g(u,\vect{x}) \right]_{u=d^l}^{d^u} d\left(\prod_{i=1}^{k} x_i\right)\right]\nonumber\\
    \overset{(i)}{\geq} &  \limsup_{\vect{c} \to \infty } \liminf_{d \to \infty}\left[\int_{\vect{x} \in \prod\limits_{i=1}^{k}[c^l_i,c^u_i]}\left(-\left[\left(\nu^{(1)}(u,\vect{x}) - \nu^{(2)}(u,\vect{x})\right) g(u,\vect{x}) \right]_{u=d^l}^{d^u} \right) d\left(\prod_{i=1}^{k} x_i\right) \right]\nonumber\\
    \overset{(g)}{\geq} & \limsup_{\vect{c} \to \infty} \int_{\vect{x} \in \prod\limits_{i=1}^{k}[c^l_i,c^u_i]} \liminf_{\substack{d^{u} \to \infty \\ d^l \to -\infty}} \left[-\left(\nu^{(1)}(u,\vect{x}) - \nu^{(2)}(u,\vect{x})\right) g(u,\vect{x}) \right]_{u=d^l}^{d^u} d\left(\prod_{i=1}^{k} x_i\right)\nonumber \\
    \hspace{-3 em} \overset{(h)}{\geq} &  \limsup_{\vect{c} \to \infty} \int_{\vect{x} \in \prod\limits_{i=1}^{k}[c^l_i,c^u_i]} \Bigl(\liminf_{d^{u} \to \infty } \left[\left(-\nu^{(1)}(d^u,\vect{x}) + \nu^{(2)}(d^u,\vect{x})\right) g(d^u,\vect{x})\right]  \\
    & \hspace{12 em}+ \liminf_{d^l \to -\infty} \left[\left(\nu^{(1)}(d^l,\vect{x}) - \nu^{(2)}(d^l,\vect{x})\right) g(d^l,\vect{x}) \right] \Bigr) d\left(\prod_{i=1}^{k} x_i\right) \nonumber\\
    & \overset{(m)}{\geq} 0
\end{align}

\zhihao{Fatou's Lemma only works for non-negative functions}

\sss{Addressed check it now, showed that it is positive for sufficiently large $d^u$ and sufficiently small $d^l$.}


\zhihao{For formulas (46)-(51), we may need to explicitly distinguish $\lim_{d\rightarrow \infty}\lim_{c\rightarrow \infty}$, $\lim_{c\rightarrow \infty}\lim_{d\rightarrow \infty}$, and $\lim_{(d,c)\rightarrow (\infty,\infty)}$, and specify conditions when changing the limit type. There is $\lim_{d\rightarrow \infty}\lim_{c\rightarrow \infty}$ in (46), and there is $\lim_{c\rightarrow \infty}\lim_{d\rightarrow \infty}$ in (51), but it seems like we didn't provide reason for exchanging limit. (We are using $\lim_{(d,c)\rightarrow (\infty,\infty)}$ in the middle, without justification) }

\sss{Provided a reason for the limit exchange in $(i)$ This directly follows from standard limit-exchange theorem as stated. In (46), we used the fact that $\limsup_{m,n} f(m,n) \geq \liminf_{m} \lim_{n} f(m,n)$}

\zhihao{(I'm not confident to say this) In (51), right side of inequality (h), the integral may not be well defined. For example, is it possible that $\liminf_{d^{u} \to \infty } \left[\left(-\nu^{(1)}(d^u,\vect{x}) + \nu^{(2)}(d^u,\vect{x})\right) g(d^u,\vect{x})\right] = +\infty$, while $\liminf_{d^l \to -\infty} \left[\left(\nu^{(1)}(d^l,\vect{x}) - \nu^{(2)}(d^l,\vect{x})\right) g(d^l,\vect{x}) \right] = +\infty$, meaning the integral is not absolutely convergent   }


\sss{You are absolutely right, however the limits could go to $\infty$ and integral may not be absolutely convergent but still the limit is always defined over extended reals where limits can take value from $\mathbbm{R}\cup \{\infty,-\infty\}$ check out this wiki link \url{https://en.wikipedia.org/wiki/Extended_real_number_line}. For example read up on this how limits are defined for divergent sequences here \url{https://connect.gonzaga.edu/asset/file/1502/ch02_3_infinite_limits_v00.html}. To conclude even if both limits go to $+\infty$, as per extended real system $\infty+ \infty>0$.}

\zhihao{Follow up: in (56), is it possible that the left side is $\kappa - \infty + \infty$?}

\sss{Made some changes in the structuring of the proof. Now both terms in (58) are lower bounded. First limit in (58) + $\kappa$ is lower bounded by the sum of first integral in (48) which is postive from the first constraint in DILP $\mathcal{O}$. Second limit in (58) is positive as lower bounded by second limit of (46) which is further non-negative due to constraint of DILP $\mathcal{O}$.}



$(i)$ follows from the fact that $\limsup_{m,n} f(m,n) \geq \limsup_m \limsup_n f(m,n) \geq \limsup_m \liminf_n f(m,n)$.

$(g)$ follows from the following arguments.

\begin{itemize}
    \item Observe that $\left[-\left(\nu^{(1)}(u,\vect{x}) - \nu^{(2)}(u,\vect{x})\right) g(u,\vect{x}) \right]_{u=d^l}^{d^u}$ is non-negative for every $\vect{x} \in \prod\limits_{i=1}^{k}[c^l_i,c^u_i]$ whenever, $d_u > \sup\{U(\vect{x})| \vect{x} \in \prod\limits_{i=1}^{k}[c^l_i,c^u_i]\}$ and $d_l < \inf\{L({\vect{x}})| \vect{x} \in \prod\limits_{i=1}^{k}[c^l_i,c^u_i]\}$  from third and fourth constraint of DILP $\mathcal{E}^{int}$. Note that value is finite as supremeum and infimum of \textit{continuous functions} is finite over a \textit{compact set} since $\prod\limits_{i=1}^{k}[c^l_i,c^u_i]$ is  closed and bounded. \zhihao{why it is compact?}\sss{$\prod\limits_{i=1}^{k}[c^l_i,c^u_i]$ is compact as it is closed and bounded subset. $\{c_i\}$ is a finite real number inside the limit.}
    \item Now apply \textit{Fatou's lemma} to conclude $(g)$.
\end{itemize}

$(h)$ follows from the fact that $\liminf_n (A_n+B_n) \geq \liminf_n A_n + \liminf_n B_n$ and the last inequality $(m)$ follows from the following 2 statements.

\begin{itemize}
    \item $\liminf_{d^u \to \infty} \left(-\nu^{(1)}(d^u,\vect{x}) + \nu^{(2)}(d^u,\vect{x})\right) \geq 0$ and $\liminf_{d^l \to -\infty} \left(\nu^{(1)}(d^l,\vect{x}) - \nu^{(2)}(d^l,\vect{x})\right) \geq 0$ follows from third and fourth constraint of DILP $\mathcal{E}^{int}$.

    \item $g(.,.)$ is a bounded function follows from the constraint in DILP $\mathcal{O}$. 

    
\end{itemize}

\zhihao{For the last equation, we don't have $\int_u g(u,x)du = 1$}
\sss{Fixed it by just putting a constraint in DILP  that $g(.)$ is bounded.}


Observe that \eqref{eq_lower_upper_limit_init} implies (follows from $\limsup_n A_n = - \liminf_n -A_n$) that 

\begin{equation}{\label{eq_lower_upper_limit_final}}
    \liminf_{d,\vect{c} \to \infty} \left[\int_{\vect{x} \in \prod\limits_{i=1}^{k}[c^l_i,c^u_i]}\left[\left(\nu^{(1)}(u,\vect{x}) - \nu^{(2)}(u,\vect{x})\right) g(u,\vect{x}) \right]_{u=d^l}^{d^u} d\left(\prod_{i=1}^{k} x_i\right)\right] \leq 0
\end{equation}



Using inequality in \eqref{integral_by_parts}  and the fact that $\liminf_n (a_n+b_n) \leq \liminf a_n + \limsup_n b_n$, we obtain 
\begin{align}
    & \liminf_{d,\vect{c} \to \infty} \Biggl(\lowint_{\vect{x} \in \prod\limits_{i=1}^{k}[c^l_i,c^u_i]} \left[\lowint_{u = d^l}^{d^u} \left(\nu^{(1)}(u,\vect{x}) \underline{g}_u(u,\vect{x}) - \nu^{(2)}(u,\vect{x}) \overline{g}_u(u,\vect{x})\right)  du \right] d\left(\prod_{i=1}^{k} x_i\right) \nonumber \\ 
    & \hspace{6 em} + \int_{\substack{u \in [d^l,d^u] \\ \vect{x} \in \prod\limits_{i=1}^{k}[c^l_i,c^u_i]}} \left(\epsilon g(u,\vect{x}) \left(\nu^{(1)}(u,\vect{x}) + \nu^{(2)}(u,\vect{x})\right)\right) d\left(\prod_{i=1}^{k} x_i\right) \Biggr)du \\
    {\leq} & \liminf_{d,\vect{c} \to \infty} \left[\int_{\vect{x} \in \prod\limits_{i=1}^{k}[c^l_i,c^u_i]}\left[\left(\nu^{(1)}(u,\vect{x}) - \nu^{(2)}(u,\vect{x})\right) g(u,\vect{x}) \right]_{u=d^l}^{d^u} d\left(\prod_{i=1}^{k} x_i\right)\right] \nonumber\\
    & \hspace{6 em}+  \limsup_{d, \vect{c} \to \infty}\Biggr(\int_{\vect{x} \in \prod\limits_{i=1}^{k}[c^l_i,c^u_i]}\left[-\int_{u \in [d^l,d^u]} \left(\nu^{(1)}_u(u,\vect{x}) - \nu^{(2)}_u(u,\vect{x})\right) {g}(u,\vect{x}) du\right]d\left(\prod_{i=1}^{k} x_i\right) \nonumber\\
    & \hspace{10 em} + \int_{\substack{u \in [d^l,d^u] \\ \vect{x} \in \prod\limits_{i=1}^{k}[c^l_i,c^u_i]}} \left(\epsilon g(u,\vect{x}) \left(\nu^{(1)}(u,\vect{x}) + \nu^{(2)}(u,\vect{x})\right)\right) d\left(\prod_{i=1}^{k} x_i\right) du\Biggr)\\
    \overset{(k)}{\leq} &  \limsup_{d, \vect{c} \to \infty}\Biggr(\int_{\vect{x} \in \prod\limits_{i=1}^{k}[c^l_i,c^u_i]}\left[-\int_{u \in [d^l,d^u]} \left(\nu^{(1)}_u(u,\vect{x}) - \nu^{(2)}_u(u,\vect{x})\right) {g}(u,\vect{x}) du\right]d\left(\prod_{i=1}^{k} x_i\right) \nonumber\\
    & \hspace{10 em} + \int_{\substack{u \in [d^l,d^u] \\ \vect{x} \in \prod\limits_{i=1}^{k}[c^l_i,c^u_i]}} \left(\epsilon g(u,\vect{x}) \left(\nu^{(1)}(u,\vect{x}) + \nu^{(2)}(u,\vect{x})\right)\right) d\left(\prod_{i=1}^{k} x_i\right) du\Biggr)\label{temp_equation1}
\end{align}

$(k)$ follows on applying Equation \eqref{eq_lower_upper_limit_final}

Now combining \eqref{temp_equation1} and \eqref{first_part_step}, we obtain
\begin{align}
    & \hspace{-4 em}  \kappa  + \limsup_{d, \vect{c} \to \infty}\Biggl(\int_{\vect{x} \in \prod\limits_{i=1}^{k}[c^l_i,c^u_i]} \left[\int_{u \in [d^l,d^u]} \left(-\left[\min_{a \in \texttt{Set}(\vect{x})} \mathfrak{h}(|u-a|) \right] \delta(u) + \lambda(u) \right)du \right] g(u,\vect{x}) d\left(\prod_{i=1}^{k} x_i\right)\Biggr) \nonumber\\
        &  + \limsup_{d, \vect{c} \to \infty}\Biggr(\int_{\vect{x} \in \prod\limits_{i=1}^{k}[c^l_i,c^u_i]}\left[-\int_{u \in [d^l,d^u]} \left(\nu^{(1)}_u(u,\vect{x}) - \nu^{(2)}_u(u,\vect{x})\right) {g}(u,\vect{x}) du\right]d\left(\prod_{i=1}^{k} x_i\right) \nonumber\\
        & \hspace{10 em} + \int_{\substack{u \in [d^l,d^u] \\ \vect{x} \in \prod\limits_{i=1}^{k}[c^l_i,c^u_i]}} \left(\epsilon g(u,\vect{x}) \left(\nu^{(1)}(u,\vect{x}) + \nu^{(2)}(u,\vect{x})\right)\right) d\left(\prod_{i=1}^{k} x_i\right) du\Biggr) \geq \int\limits_{u \in \mathbbm{R}} \lambda(u) du \\
    & \hspace{-6 em}  \overset{(i)}{\implies} \kappa  + \lim_{d, \vect{c} \to \infty}\Biggl(\int_{\vect{x} \in \prod\limits_{i=1}^{k}[c^l_i,c^u_i]} \left[\int_{u \in [d^l,d^u]} \left(-\left[\min_{a \in \texttt{Set}(\vect{x})} \mathfrak{h}(|u-a|) \right] \delta(u) + \lambda(u)\right)du \right] g(u,\vect{x}) d\left(\prod_{i=1}^{k} x_i\right)\Biggr) \nonumber\\
        &  + \limsup_{d, \vect{c} \to \infty}\Biggr(\int_{\vect{x} \in \prod\limits_{i=1}^{k}[c^l_i,c^u_i]}\left[-\int_{u \in [d^l,d^u]} \left(\nu^{(1)}_u(u,\vect{x}) - \nu^{(2)}_u(u,\vect{x})\right) {g}(u,\vect{x}) du\right]d\left(\prod_{i=1}^{k} x_i\right) \Biggr)\nonumber \\
        & \hspace{6 em} + \int_{\vect{x} \in \prod\limits_{i=1}^{k}[c^l_i,c^u_i]} \int_{u \in [d^l,d^u]} \left(\epsilon g(u,\vect{x}) \left(\nu^{(1)}(u,\vect{x}) + \nu^{(2)}(u,\vect{x})\right)\right) d\left(\prod_{i=1}^{k} x_i\right) du\Biggr) \geq \int\limits_{u \in \mathbbm{R}} \lambda(u) du \\
    {\implies} & \kappa + \limsup_{d, \vect{c} \to \infty}\Biggl(\int_{\vect{x} \in \prod\limits_{i=1}^{k}[c^l_i,c^u_i]} \Biggl[ \int_{u \in [d^l,d^u]} \Biggl(-\left(\min_{a \in \texttt{Set}(\vect{x})} \mathfrak{h}(|u-a|) \right) \delta(u) + \lambda(u) + \epsilon \nu^{(2)}(u,\vect{x})  + \epsilon \nu^{(1)}(u,\vect{x})\Biggr) \\
    & \hspace{10 em} -\left(\nu^{(1)}_u(u,\vect{x}) - \nu^{(2)}_u(u,\vect{x})\right)\Biggr) du\Biggr]
    g(u,\vect{x}) d\left(\prod_{i=1}^{k} x_i\right)\Biggr) \geq \int\limits_{u \in \mathbb{R}} \lambda(u) du\\
    & \overset{(j)}{\implies} \kappa \geq \int\limits_{u \in \mathbb{R}} \lambda(u) du
\end{align}

\zhihao{as we discussed, (i) need another way to write}


The limsup in first term of $(i)$ can be replaced by a limit as integral \eqref{eqn:integral_temp} is defined in extended reals ($\mathbb{R} \cup \{\infty,-\infty\}$) which follows from the following reasons.\footnote{Intuitively it shows that $\infty-\infty$ scenario cannot arrive in this integral}

\begin{align}{\label{eqn:integral_temp}}
    & \Biggl(\int_{\vect{x} \in \mathbb{R}^k} \left[\int_{u \in \mathbb{R}} \left(-\left[\min_{a \in \texttt{Set}(\vect{x})} \mathfrak{h}(|u-a|) \right] \delta(u) + \lambda(u) \right)du \right] g(u,\vect{x}) d\left(\prod_{i=1}^{k} x_i\right)\Biggr)\\
    = & - \int_{\vect{x} \in \mathbb{R}^k} \int_{u \in \mathbb{R}} \left[\min_{a \in \texttt{Set}(\vect{x})} \mathfrak{h}(|u-a|) \right] \delta(u) g(u,\vect{x}) du \text{ }d\left(\prod_{i=1}^{k} x_i\right) \label{eqn:first_term}\\ 
    & \hspace{18 em}+ \int_{\vect{x} \in \mathbb{R}^k} \int_{u \in \mathbb{R}} \lambda(u) g(u,\vect{x}) du \text{ }d\left(\prod_{i=1}^{k} x_i\right) \label{eqn:second_term}
\end{align}

\begin{itemize}

    \item Observe that $\int_u \delta(u) \leq 1$ and $\int_{\vect{x} \in \mathbb{R}^k} \min\limits_{a \in \texttt{Set}(\vect{x})} \mathfrak{h}(|u-a|) g(u,\vect{x}) d\left(\prod_{i=1}^{k} x_i\right) \leq \kappa \text{ } \forall u \in \mathbb{R}$ (first constraint of DILP $\mathcal{O}$) implying that term 
    \eqref{eqn:first_term} is lower bounded by $-\kappa$.

    \item Thus, we may conclude that the following integral \eqref{eqn:integral_temp} is defined in extended reals as the first term \eqref{eqn:first_term} is lower bounded by $-\kappa$ and second term \eqref{eqn:second_term} is lower bounded by 0.

\end{itemize}


$(j)$ follows from the first and second constraint in DILP $\mathcal{E}^{int}$. Since this inequality is true for every feasible solution in the primal $\mathcal{O}$ and dual $\mathcal{E}^{int}$, we have the proof in the theorem.
\end{proof}

We now prove Theorem \ref{theorem:weak_duality_result}.
\begin{proof}
Combining Claim \ref{lemma1} and \ref{lemma2}, we obtain that $\text{opt}(\mathcal{O}) \geq \text{opt}(\mathcal{E})$.
\end{proof}

\subsection{Claim to prove the existence of continuous bounds $U(\vect{v})$ and $L(\vect{v})$ for the feasible solution in $\mathcal{E}$ in the proof of Lemma \ref{lemma:dual_achievable} }\label{sec:continuous_bounds_claim}

\begin{claim}{\label{claim:continuity_bounds}}
    Consider a  function $\nu(.,.): \mathcal{C}^1(\mathbbm{R} \times \mathbbm{R}^k \rightarrow \mathbbm{R})$ s.t. zeros of $\nu(.,\vect{v})$ i.e. $\{\mathfrak{u}:\nu(\mathfrak{u},\vect{v})=0\}$ is a \textit{countable} set for every $\vect{v} \in \mathbbm{R}^k$. 
    
    It also satisfies the following two conditions for some constant $C$ independent of $\vect{v}$. Note that $v_{\floor{i}}$ denotes the $i^{th}$ largest component of $\vect{v}$ for every $i \in [k]$.
    \begin{itemize}
        \item $\set[\Big]{\mathfrak{u} \in \mathbb{R}: \nu(\mathfrak{u},\vect{v})=0 \text{ and } \nu_u(\mathfrak{u},\vect{v}) < 0 }$ is upper bounded by $C+ v_{\floor{k}}$ for every $\vect{v} \in \mathbbm{R}^k$. \zhihao{what is $v_{\floor{k}}$?}
        \sss{Check the notation subsection. We defined $v_{\floor{k}}$ to be $k^{th}$ largest component (or basically the largest component of $\vect{v}$.)}
        \item $\forall \vect{v} \in \mathbbm{R}^k \text{ } \exists \text{ }\mathcal{U}$ s.t $\nu(u,\vect{v}) \geq 0 \text{ }\forall u >\mathcal{U}$.
    \end{itemize}
Then there exists $U: \mathcal{C}^0(\mathbbm{R}^k \rightarrow \mathbbm{R})$ s.t. $\nu(u,\vect{v}) \geq 0 \text{ }\forall u\geq U(\vect{v}) \text{ }\forall \vect{v} \in \mathbbm{R}^k$.
\end{claim}

\begin{proof}
    We prove this statement as follows. For every $\vect{v} \in \mathbbm{R}^k$, denote the largest zero of $\nu(.,\vect{v})$ by $p(\vect{v})$. Now construct $U(\vect{v}) = \max(p(\vect{v}), C+ v_{\floor{k}})$. Observe that $\nu(u,\vect{v}) \geq 0 \text{ }\forall u\geq U(\vect{v}) \text{ }\forall \vect{v} \in \mathbbm{R}^k$ follows from the second assumption.

    To prove continuity at $\vect{v} \in \mathbbm{R}^k$, we aim to show the following statement. 

    \begin{equation}{\label{eqn:continuity}}
        \forall \epsilon >0 \text{ } \exists \delta >0 \text{ s.t. } \forall \vect{z} \in \mathbbm{R}^k; \text{ }||\vect{z}-\vect{v}||_1 \leq \delta \implies |U(\vect{z}) - U(\vect{v})| \leq \epsilon.
    \end{equation}

   We now show $U(\vect{v})$ is continuous for every $\vect{v} \in \mathbbm{R}^k$. We consider two exhaustive cases below.

    \textit{Case 1}: $p(\vect{v}) \geq C+ v_{\floor{k}}$ and thus, $U(\vect{v}) = p(\vect{v})$. Recall from the second assumption that $\nu(u,\vect{v})$ goes from negative to positive at $u=p(\vect{v})$.

     We now discuss the construction of $\delta$ below for a \textit{sufficiently small} $\epsilon$ as follows. 

    For some $\epsilon$ sufficiently small, we now discuss the construction of an interval around $p_{\vect{v}}$ as $[\mathcal{A}_\vect{v},\mathcal{B}_\vect{v}] = [p_{\vect{v}}-\frac{\epsilon}{2}, p_{\vect{v}}+\frac{\epsilon}{2}]$. Since, $\epsilon$ is sufficiently small, observe that $\nu(\mathcal{A}_{\vect{v}},\vect{v}) = -\zeta_1$ and $\nu(\mathcal{B}_{\vect{v}},\vect{v}) = \zeta_2$ for some $\zeta_1,\zeta_2>0$.

    Observe that $\nu(\mathcal{A}_{\vect{v}},\vect{v})$ must be continuous in $\vect{v}$ and thus there must exist $\delta_1>0$ s.t. $\forall \vect{z} \in \mathbbm{R}^k; \text{ }||\vect{z}-\vect{v}|| \leq \delta_1 \implies |\nu(\mathcal{A}_{\vect{v}},\vect{v}) - \nu(\mathcal{A}_{\vect{v}},\vect{z})| \leq \frac{\zeta_1}{2}$ and similarly, choose $\delta_2>0$ s.t. $\forall \vect{z} \in \mathbbm{R}^k; \text{ }||\vect{z}-\vect{v}_1|| \leq \delta_2 \implies |\nu(\mathcal{B}_{\vect{v}},\vect{v}) - \nu(\mathcal{B}_{\vect{v}},\vect{z})| \leq \frac{\zeta_2}{2}$.\footnote{Note that we can do this since, the continuity result holds true for every $\epsilon>0$ and we can choose any $\epsilon$ we want. } 

     \textbf{We choose $\delta = \min(\delta_1,\delta_2,\frac{\epsilon}{2})$ and consider any $\vect{z} \in \mathbbm{R}^k$ satisfying $||\vect{z}-\vect{v}||_1< \delta$ and show \eqref{eqn:continuity}}

    This  implies $\nu(\mathcal{A}_{\vect{v}},\vect{z}) < -\frac{\zeta_1}{2}<0$ and $\nu(\mathcal{B}_{\vect{v}},\vect{z}) > \frac{\zeta_2}{2}>0$ since, $\nu(\mathcal{A}_{\vect{v}},\vect{v}) = -\zeta_1$ and $\nu(\mathcal{B}_{\vect{v}},\vect{v}) = \zeta_2$. Thus, from \textit{intermediate value and LMVT theorem}, there must exist some zero of $\nu(.,\vect{z})$ (call it $u_0$) with $u_0 \in [\mathcal{A}_{\vect{v}},\mathcal{B}_{\vect{v}}]$ with $\nu(.)$ going from negative to positive i.e. $\nu_u(u_0,\vect{z}) \geq 0$ and $\nu(u_0,\vect{z})=0$

    Also observe that, 

    \begin{align}{\label{eqn1:temp}}
        \mathcal{B}_{\vect{v}} & \overset{(a)}{\geq} C+ v_{\floor{k}}  + \frac{\epsilon}{2}\nonumber\\
        & \overset{(b)}{>} C+ z_{\floor{k}} + \frac{\epsilon}{2} - \delta \geq C+ z_{\floor{k}}
    \end{align}

    $(a)$ follows from $p(\vect{v}) \geq C+ v_{\floor{k}}$ and the definition of $\mathcal{B}_{\vect{v}}$. $(b)$ follows from the fact $||\vect{v}-\vect{z}||\leq \delta$. Observe that the last inequality follows from  
    the fact that $\delta< \frac{\epsilon}{2}$.

     Equation \eqref{eqn1:temp} and the first condition in Claim \ref{claim:continuity_bounds}implies that $U(u,\vect{z})$ cannot have a zero in $u$ with $U(u,\vect{z})$ going from positive to negative on the right side of interval $[\mathcal{A}_{\vect{v}},\mathcal{B}_{\vect{v}}]$. This implies that the largest zero of $\nu(u,\vect{z})$ in interval $[\mathcal{A}_{\vect{v}},\mathcal{B}_{\vect{v}}]$ must be the largest zero and thus, it must be $p(\vect{z})$. Also, $p(\vect{z}) > \mathcal{A}_{\vect{v}} > C+z_{\floor{k}}$ and thus, $U(\vect{z}) = p(\vect{z})$

    We thus, have $|U(\vect{z})-U(\vect{v})| = |p(\vect{z})-p(\vect{v})| \leq \epsilon$, since $\max(|\mathcal{A}_{\vect{v}} - p(\vect{v})|,|\mathcal{B}_{\vect{v}} - p(\vect{v})|) < \epsilon$ thus proving the desired result in \eqref{eqn:continuity}

    \textit{Case 2}: $p(\vect{v}) < C+ v_{\floor{k}}$, thus $U(\vect{v}) = C+ v_{\floor{k}}$.

    Since, $p(\vect{v})$ was the largest zero of $\nu(.,\vect{v})$ with $\nu_u(p(\vect{v}), \vect{v}) >0$, observe that $\nu(C+ v_{\floor{k}}, \vect{v}) = \zeta >0$. Now since $\nu(u,\vect{v})$ is continuous, there must exist $\delta_1>0$ s.t. $\forall \hat{u} \in \mathbbm{R}$ satisfying $|\hat{u}-(C+v_{\floor{k}})| < \delta_1$, we have  $|\nu(\hat{u},\vect{v})- \nu(\hat{u},\vect{v})| < \frac{\zeta}{4}$. Similarly, there must exist $\delta_2>0$ s.t. $ \forall \vect{z} \in \mathbb{R}$ satisfying $||\vect{v}-\vect{z}||_1 < \delta_2$, we have $|\nu(C+z_{\floor{k}},\vect{v}) - \nu(C+z_{\floor{k}},\vect{z})| \leq \frac{\zeta}{4}$. \footnote{This holds since continuity result holds for every $\epsilon>0$.}

    \textbf{Now we choose $\delta = \min(\delta_1,\delta_2,\epsilon)$ and consider any $\vect{z} \in \mathbb{R}^k$ satisfying $||\vect{z}-\vect{v}||_1 \leq \delta$ to show \eqref{eqn:continuity}.}

    Observe that, since, $|(C+v_{\floor{k}}) - (C+ {z}_{\floor{k}})| \leq \delta \leq \delta_1$, we must have $\nu(C+z_{\floor{k}},\vect{v}) = \nu(C+v_{\floor{k}},\vect{v}) + (\nu(C+z_{\floor{k}},\vect{v})- \nu(C+ {v}_{\floor{k}},\vect{v})) \geq \zeta -\frac{\zeta}{4} = \frac{3\zeta}{4}$. Similarly, since $||\vect{v}-\vect{z}|| \leq \delta \leq \delta_2$, we must have $\nu(C+z_{\floor{k}},\vect{z}) = \nu(C+z_{\floor{k}},\vect{v}) + (\nu(C+z_{\floor{k}},\vect{v})-\nu(C+z_{\floor{k}},\vect{v})) \geq \frac{3\zeta}{4}-\frac{\zeta}{4} = \frac{\zeta}{2}>0$.

    Since, $\nu(C+z_{\floor{k}},\vect{z})>0$ and since the second condition in lemma \ref{claim:continuity_bounds} says that there can be no zero of $\nu(u,\vect{z})$ in $u$ with $\nu(u,\vect{z})$ going from positive to negative beyond $C+z_{\floor{k}}$ we have the largest zero of $\nu(.,\vect{v})$ should be smaller than $C+z_{\floor{k}}$ and thus $U(\vect{z}) = C+z_{\floor{k}}$.

    Now observe that $|U(\vect{z})-U(\vect{y})| = |(C+z_{\floor{k}})-(C+y_{\floor{k}})| \leq \delta \leq \epsilon$, thus proving the desired statement in \eqref{eqn:continuity}.








\end{proof}

\renewcommand{\zhihao}[1]{\iftoggle{COMMENTS}{{\color{brown}[ZZ: \textsf{#1}]}}{}}
\renewcommand{\sss}[1]{ {\color{orange} [ Sahasrajit: {#1} ]  } }


\end{document}